\DeclareMathAlphabet{\mathpzc}{OT1}{pzc}{m}{it}
\newcommand{\mathleft}{\@fleqntrue\@mathmargin0pt}
 \theoremstyle{plain}
 \newtheorem{thm}{Theorem}
 \newtheorem{cor}[thm]{Corollary}
 \theoremstyle{definition}
 \newtheorem{rem}[thm]{Remark}
 \numberwithin{equation}{section}
\DeclareFontFamily{OMX}{MnSymbolE}{}
\DeclareSymbolFont{MnLargeSymbols}{OMX}{MnSymbolE}{m}{n}
\DeclareFontShape{OMX}{MnSymbolE}{m}{n}{
    <-6>  MnSymbolE5
   <6-7>  MnSymbolE6
   <7-8>  MnSymbolE7
   <8-9>  MnSymbolE8
   <9-10> MnSymbolE9
  <10-12> MnSymbolE10
  <12->   MnSymbolE12
}{}
\DeclareFontShape{OMX}{MnSymbolE}{b}{n}{
    <-6>  MnSymbolE-Bold5
   <6-7>  MnSymbolE-Bold6
   <7-8>  MnSymbolE-Bold7
   <8-9>  MnSymbolE-Bold8
   <9-10> MnSymbolE-Bold9
  <10-12> MnSymbolE-Bold10
  <12->   MnSymbolE-Bold12
}{}
\let\llangle\@undefined
\let\rrangle\@undefined
\DeclareMathDelimiter{\llangle}{\mathopen}%
                     {MnLargeSymbols}{'164}{MnLargeSymbols}{'164}
\DeclareMathDelimiter{\rrangle}{\mathclose}%
                     {MnLargeSymbols}{'171}{MnLargeSymbols}{'171}
\title{\Large{\textbf{Compatible-Strain Mixed Finite Element Methods \\ for 3D Compressible and Incompressible Nonlinear Elasticity}}}
\author[1]{Mostafa Faghih Shojaei}
\author[1,2]{Arash Yavari\thanks{Corresponding author, e-mail: arash.yavari@ce.gatech.edu}}
\affil[1]{\small \textit{School of Civil and Environmental Engineering, Georgia Institute of Technology, Atlanta, GA 30332, USA}}
\affil[2]{\small \textit{The George W. Woodruff School of Mechanical Engineering, Georgia Institute of Technology, Atlanta, GA 30332, USA}}
\begin{document}

\maketitle

\begin{abstract}
A new family of mixed finite element methods--\emph{compatible-strain mixed finite element methods} (\mbox{CSFEMs})--are introduced for three-dimensional compressible and incompressible nonlinear elasticity. A Hu-Washizu-type functional is extremized in order to obtain a mixed formulation for nonlinear elasticity. The independent fields of the mixed formulations are the displacement, the displacement gradient, and the first Piola-Kirchhoff stress. A pressure-like field is also introduced in the case of incompressible elasticity. We define the displacement in $H^1$, the displacement gradient in $H(curl)$, the stress in $H(div)$, and the pressure-like field in $L^2$. In this setting, for improving the stability of the proposed finite element methods without compromising their consistency, we consider some stabilizing terms in the Hu-Washizu-type functional that vanish at its critical points. Using a conforming interpolation, the solution and the test spaces are approximated with some piecewise polynomial subspaces of them. In three dimensions, this requires using the N\'{e}d\'{e}lec edge elements for the displacement gradient and the N\'{e}d\'{e}lec face elements for the stress. This approach results in mixed finite element methods that satisfy the Hadamard jump condition and the continuity of traction on all internal faces of the mesh.  This, in particular, makes \mbox{CSFEMs} quite efficient for modeling heterogeneous solids.
We assess the performance of \mbox{CSFEMs} by solving several numerical examples, and demonstrate their good performance for bending problems, for bodies with complex geometries, and in the near-incompressible and the incompressible regimes. Using \mbox{CSFEMs}, one can capture very large strains and accurately approximate stresses and the pressure field. Moreover, in our numerical examples, we do not observe any numerical artifacts such as checkerboarding of pressure, hourglass instability, or locking.  
\end{abstract}

\begin{description}
\item[Keywords:] Mixed finite element methods; finite element exterior calculus; nonlinear elasticity; incompressible elasticity; Hilbert complex. 
\end{description}
{}
\tableofcontents{}

%-----------------------------
%-----------------------------

\section{Introduction}
It is known that the standard finite elements formulated in terms of the displacement field are not effective for various problems in nonlinear elasticity such as nearly incompressible or incompressible solids, bending analyses, capturing very large strains, and accurate calculation of stress. Developing finite element methods using the mixed formulations of elasticity is one path to overcome these limitations. However, it is a challenge to develop a robust and efficient mixed finite element method for nonlinear elasticity free from numerical instabilities and artifacts. This is more pronounced for problems in 3D as there is a much wider range of deformations in dimension three and 3D problems require more expensive computations. 
We recently proposed a new family of mixed finite element methods --- compatible-strain mixed finite element methods --- for 2D compressible \citep{AnFSYa2017} and incompressible \citep{FaYa2018} nonlinear elasticity. Our observations in several numerical examples indicated that these mixed methods have excellent performance in solving various 2D problems and do not suffer from numerical instabilities and artifacts including the difficulties mentioned earlier. In this paper, we extend these mixed methods to 3D compressible and incompressible nonlinear elasticity.

Over the years different approaches have been proposed in the finite element literature to capture large deformations of solids. Here we focus on some well-know works that are based on a mixed formulation and have proved promising for 3D nonlinear problems (see also our literature review in \citep{AnFSYa2017} and \citep{FaYa2018}). Mixed formulations are based on a saddle-point variational principle such as the two-field Hellinger-Reissner principle or the three-field Hu-Washizu principle, see \citep{arnold1990mixed} and \mbox{\citep[\S1.5]{Wr2009}}. One of the most commonly used schemes in the literature has been the enhanced strain method originally introduced by \citet{Simo1990} for infinitesimal strains and later extended to 2D and 3D nonlinear elasticity by \citet{SiAr1992} and \citet{simo1993improved}. In these methods, strain is assumed to be additively decomposed into a compatible part associated with the displacement field, and an enhanced part. The problem is then written as a two-field mixed formulation in terms of the displacement and the enhanced strain, which is derived from a three-field Hu-Washizu-type mixed formulation after eliminating the stress assuming that the enhanced strain and the stress are $L^2$-orthogonal. 
See \citep{armero2000locking} for a detailed discussion of early developments of enhanced strain methods and their locking and stability. 
Using an interpolation of strain and stress different from those in the original enhanced strain method, \citet{kasper2000mixedI} proposed a new mixed method for 2D and 3D linear elasticity called mixed-enhanced strain method and extended it to nonlinear elasticity in \citep{kasper2000mixedII}. 
\citet{lamichhane2006convergence} proposed a parameter-dependent modification of the standard Hu-Washizu mixed formulation for 2D linear elasticity and studied its uniform convergence in the  incompressible limit for different interpolations. Their study also incorporates the enhanced strain methods proposed in \citep{Simo1990} and \citep{kasper2000mixedI}. \citet{chavan2007locking} extended the approach introduced in \citep{lamichhane2006convergence} to 3D nonlinear elasticity considering a Mooney-Rivlin material model. By solving several 2D and 3D problems, they demonstrated the good performance of their method in bending problems and for nearly incompressible solids. \citet{reese2000new} introduced a new reduced-integration stabilized brick element method for 3D finite elasticity whose stabilization is based on the enhanced strain method. Their scheme is numerically efficient and shows robust performance in bending of thin shells, compression tests, and the near incompressible regime. 

The well-posedness of a mixed formulation requires that certain pairs of independent variables are defined in compatible spaces. This is commonly written as an inf-sup condition also known as the LBB condition named after the works of \citet{La1969}, \citet{Bab73}, and \citet{Bre74}. At the discrete level, the satisfaction of LBB condition is a necessary condition for the stability of mixed finite element methods. 
Thus, only particular combinations of finite element spaces for the independent variables result in convergent methods. 
Because of theses difficulties, the mixed formulations of 2D problems cannot simply be extended to 3D problems. In other words, one cannot use a mixed formulation with finite element spaces that converge in 2D and only switch the 2D elements with the counterpart 3D elements to obtain a convergent method. 
In \citep{FaYa2018} we formulated $96$ different four-field mixed finite element methods for 2D incompressible nonlinear elasticity by considering different combinations of first and second-order finite element spaces to independently approximate displacement, displacement gradient, stress, and pressure. 
By examining the linearized discrete systems of those mixed methods, we showed that $75$ out of $96$ of them result in singular tangent stiffness (Jacobian) matrices for any mesh and that only the remaining $21$ cases may result in convergent schemes. In this paper, using the same approach, we show that in 3D all the $96$ possible choices of the first and second-order four-field mixed methods lead to singular tangent stiffness matrices for any mesh and regardless of its size. 
To overcome this difficulty, we add some stabilization terms to the mixed formulations without compromising the consistency of their discretization schemes. This can also help to introduce a convergent mixed method with a fewer degrees of freedom, which is greatly beneficial for computationally expensive 3D problems. An example of such modification is the work of \citet{Hughes1986} on the Stokes problem, where they introduced a stabilized mixed finite element method using an equal-order $C^0$ interpolation of both velocity and pressure. Furthermore, inspired by the work of \citet{Hughes1986}, \citet{FrHu1988} developed a mixed finite element method for nearly incompressible linear elastic solids by adding stabilization terms to the weak formulation associated with the critical point of the Hellinger-Reissner principle. \citet{klaas1999} developed a stabilized displacement-pressure mixed finite element method for 3D finite elasticity by using linear shape functions for both displacement and pressure. In these works, the combinations of the finite element spaces are unstable according to the LBB condition and result in unphysical solutions. However, adding the stabilization terms resulted in convergent mixed methods. 
 
This paper is organized as follows. In \S2, we discuss the mixed formulations that are used in the 3D CSFEMs. In \S2.1, we review some preliminaries and definitions. In \S2.2, by defining suitable Hu-Washizu-type energy functionals we derive a three-field mixed formulation for compressible elastostatics and a four-field mixed formulation for incompressible elastostatics. 
In \S3, we discuss the finite element approximations for the proposed mixed formulations. In \S3.1, we define the finite elements (shape functions and degrees of freedom) for the displacement, displacement gradient, stress, and pressure. In \S3.2, we define the finite element approximation spaces and use them to introduce the mixed finite element methods in \S3.3. Next, the matrix formulation of the mixed finite elements are discussed in \S3.4. In \S3.5, we investigate singularities of the mixed methods for some combinations of finite element spaces and explain how the stabilizing terms remove those singularities. To study the performance of the 3D CSFEMs, we present several numerical examples in \S4 for both compressible and incompressible solids in dimension three. The paper ends by some concluding remarks in \S5.

%-----------------------------
%-----------------------------
\section{A Mixed Formulation for Nonlinear Elasticity}

In this section, following \citep{FaYa2018}, we present two mixed formulations one for 3D compressible nonlinear elasticity and one for 3D incompressible nonlinear elasticity.

%-----------------------------
%-----------------------------
\subsection{Preliminaries}

Suppose $\mathbf{X} = (\mathrm{X}^{1}, \mathrm{X}^{2}, \mathrm{X}^{3})\in\mathbb{R}^{3}$ is the position of a material point in the reference configuration $\mathcal{B}\subset\mathbb{R}^{3}$ with boundary $\partial\mathcal{B}$. For any vector field $\boldsymbol{U}$ and any $\binom{2}{0}$-tensor field $\boldsymbol{T}$, one can define $\binom{2}{0}$-tensors $\mathbf{grad}\,\boldsymbol{U}$ and $\mathbf{curl}\,\boldsymbol{T}$ and a vector field $\mathbf{div}\,\boldsymbol{T}$ with components 
%-----------------------------
\begin{equation*}
(\mathbf{grad}\,\boldsymbol{U})^{IJ}=\partial {U}^{I}/\partial\mathrm{X}^{J},~~ (\mathbf{curl}\,\boldsymbol{T})^{IJ}=\varepsilon_{JKL}\partial T^{IL}/\partial\mathrm{X}^{K}, ~~(\mathbf{div}\,\boldsymbol{T})^{I}=\partial T^{IJ}/\partial\mathrm{X}^{J},
\end{equation*}
%-----------------------------
where $\varepsilon_{JKL}$ is the standard permutation symbol, and summation convention for repeated indices is assumed.
Suppose $L^2(\mathcal{B})$, $L^{2}(T\mathcal{B})$, and $L^{2}(\otimes^2 T\mathcal{B})$ are the spaces of square integrable scalar fields, vector fields, and $\binom{2}{0}$-tensor fields, respectively. Define the following spaces:
%-----------------------------
\begin{align*}
H^{1}(T\mathcal{B}) &:=\left\{ \boldsymbol{U}\in L^{2}(T\mathcal{B}):  \mathbf{grad}\,\boldsymbol{U} \in L^{2}(\otimes^2 T\mathcal{B}) \right\}, \\
H^{\mathbf{c}}(\mathcal{B}) &:=\left\{ \boldsymbol{T} \in L^{2}(\otimes^2 T\mathcal{B}): \mathbf{curl}\,\boldsymbol{T}\in L^{2}(\otimes^2 T\mathcal{B}) \right\}, \\
H^{\mathbf{d}}(\mathcal{B}) &:=\left\{ \boldsymbol{T}\in L^{2}(\otimes^2 T\mathcal{B}): \mathbf{div}\,\boldsymbol{T}\in L^{2}(T\mathcal{B}) \right\}. \nonumber
\end{align*}
%-----------------------------
In the above spaces, $\mathbf{grad}$, $\mathbf{curl}\,$, and $\mathbf{div}$ are defined in the distributional sense. 
Recalling that \mbox{$\mathbf{curl}\,(\mathbf{grad}\,\boldsymbol{Y})=\boldsymbol{0}$} and \mbox{$\mathbf{div}(\mathbf{curl}\,\boldsymbol{T})=\boldsymbol{0}$}, one writes the following differential complex
\citep{AngoshtariYavari2014I,AngoshtariYavari2014II}:

%-----------------------------
\begin{equation}
\begin{gathered}
\scalebox{.9}{\xymatrix@C=3ex{ 
&  \text{displacements} \ar[r]  \ar@{<.>}[d] & \text{disp.~gradients} \ar[r]  \ar@{<.>}[d] & \text{compatibility}  \ar@{<.>}[d] &   & \\
 \mathbf{0} \ar[r] & H^{1}(T\mathcal{B}) \ar[r]^-{\mathbf{grad}}  & H^{\mathbf{c}}(\mathcal{B}) \ar[r]^-{\mathbf{curl}}  \ar@{<.>}[d] & H^{\mathbf{d}}(\mathcal{B})  \ar[r]^-{\mathbf{div}}  \ar@{<.>}[d] & L^{2}(T\mathcal{B}) \ar[r] \ar@{<.>}[d]  & \mathbf{0}  \\
  &   & \text{stress~functions} \ar[r] & \text{first~PK~stresses} \ar[r] & \text{equilibrium}  &  } } \nonumber 
\end{gathered}
\end{equation}
%-----------------------------
where the first arrow is a trivial operator sending zero to zero, and the last arrow indicates the zero operator mapping the $L^{2}$-space to zero.
The physical interpretation of this differential complex is as follows: Let $\boldsymbol{U}(\mathbf{X}):=\varphi(\mathbf{X})-\mathbf{X}$, $\mathbf{X}\in\mathcal{B}$, be the displacement field associated with a motion $\varphi:\mathcal{B}\rightarrow\mathbb{R}^{3}$. Then, $\boldsymbol{K}:=\mathbf{grad}\,\boldsymbol{U}$ is the displacement gradient and $\mathbf{curl}\,\boldsymbol{K}=\boldsymbol{0}$ is the necessary condition for the compatibility of $\boldsymbol{K}$. Moreover, given a first Piola-Kirchhoff stress tensor $\boldsymbol{P}$, the equilibrium equation $\mathbf{div}\,\boldsymbol{P}=\boldsymbol{0}$ is the necessary condition for the existence of a stress function $\boldsymbol{\Psi}$ such that $\boldsymbol{P}=\mathbf{curl}\,\boldsymbol{\Psi}$. This holds whenever $\boldsymbol{U} \in H^{1}(T\mathcal{B})$,  $\boldsymbol{K} \in \operatorname{ker}(\mathbf{curl}\,)\subset H^{\mathbf{c}}(\mathcal{B})$, and $\boldsymbol{P} \in \operatorname{ker}(\mathbf{div})\subset H^{\mathbf{d}}(\mathcal{B})$. 
The deformation gradient is defined as $\boldsymbol{F}:=\boldsymbol{I}+\boldsymbol{K}$, where $\boldsymbol{I}$ is the identity tensor, and $J:=\operatorname{det} \boldsymbol{F}$ (in Cartesian coordinates for both the reference and current configurations). One can show that $dv=J dV$, where $dV$ and $dv$ are the volume elements of the undeformed and deformed configurations, respectively. For incompressible solids, $J=1$. To weakly impose $J-1=0$, one considers a Lagrange multiplier $p$ as an independent field variable, which physically is realized as a pressure-like variable. 
At the discrete level, the restriction of $J$ to an element is a scaler describing the change of volume of that element \citep{Yavari2008}. Hence, one can assume that discrete pressure $p$ is also defined on each element, and in general, is not continuous across the element interfaces. Therefore, as a discontinuous scalar-valued field, $p \in L^2(\mathcal{B})$. 

%-----------------------------
%-----------------------------
\subsection{Mixed Formulations}

Let $\rho_0$ be the mass density of the body $\mathcal{B}$ and $\boldsymbol{B}$ be the body force per unit mass. Assume that the boundary of the body is a disjoint union of two subsets $\partial\mathcal{B}=\Gamma_d\sqcup\Gamma_t$ and is subjected to the displacement boundary condition $\boldsymbol{U}\big|_{\Gamma_d}=\overline{\boldsymbol{U}}$ and the traction boundary condition $(\boldsymbol{P}\boldsymbol{N})\big|_{\Gamma_t}=\overline{\boldsymbol{T}}$, where $\boldsymbol{N}$ is the unit outward normal vector field of $\partial\mathcal{B}$ in the reference configuration. Also, define  $H^{1}(T\mathcal{B},\Gamma_{d},\overline{\boldsymbol{U}}) :=\left\{ \boldsymbol{U}\in H^{1}(T\mathcal{B}): \boldsymbol{U}|_{\Gamma_{d}} =\overline{\boldsymbol{U}} \right\}$ and $H^{1}(T\mathcal{B},\Gamma_{d}) := H^{1}(T\mathcal{B},\Gamma_{d},\boldsymbol{0})$, where $\overline{\boldsymbol{U}}$ is of $H^{1/2}$-class. Suppose $ \langle,\rangle$ is the standard inner product of $\mathbb{R}^{3}$ and let $\llangle,\rrangle$ stand for the $L^{2}$-inner products of scalar, vector, and tensor fields, that is, $\llangle f,g \rrangle:=\int_{\mathcal{B}}fg\,dV$, $\llangle\boldsymbol{Y},\boldsymbol{Z}\rrangle:=\int_{\mathcal{B}}Y^{I}Z^{I}dV$, and $\llangle\boldsymbol{S},\boldsymbol{T}\rrangle:=\int_{\mathcal{B}}S^{IJ}T^{IJ}dV$. Then, one can define a Hu-Washizu-type functional \mbox{$\mathcal{I}: H^{1}(T\mathcal{B},\Gamma_{d},\overline{\boldsymbol{U}})\times H^{\mathbf{c}}(\mathcal{B}) \times H^{\mathbf{d}}(\mathcal{B}) =:\EuScript{D}\rightarrow\mathbb{R}$} as 
%-----------------------------
\begin{equation}\label{HWF}
  \mathcal{I}(\boldsymbol{U},\boldsymbol{K},\boldsymbol{P})
  =\int_{\mathcal{B}}W(\mathbf{X},\boldsymbol{K})dV - \llangle \boldsymbol{P},\boldsymbol{K}
  -\mathbf{grad}\,\boldsymbol{U}\rrangle - \llangle \rho_{0}\boldsymbol{B},\boldsymbol{U}\rrangle 
  - \int_{\Gamma_{t}}\langle \overline{\boldsymbol{T}},\boldsymbol{U}\rangle dA,
\end{equation}
%-----------------------------
where $W(\mathbf{X},\boldsymbol{K})$ is the stored energy function of a hyperelastic material. For an isotropic solid, the energy function can be written as $W = \widehat{W}(\mathbf{X},I_1,I_2,I_3)$, where $I_1=\operatorname{tr} \boldsymbol{C}$, $I_2=\frac{1}{2}[(\operatorname{tr} \boldsymbol{C})^2 -  \operatorname{tr} \boldsymbol{C}^2]$, and $I_3=\operatorname{det} \boldsymbol{C}$ are the invariants of the right Cauchy-Green deformation tensor $\boldsymbol{C}=\boldsymbol{F}^{\mathsf{T}}\boldsymbol{F}$. For incompressible solids, $J=\sqrt{I_3}=1$, and one modifies (\ref{HWF}) by defining 
%-----------------------------
\begin{equation}\label{HWFicmp}
  \overline{\mathcal{I}}(\boldsymbol{U},\boldsymbol{K},\boldsymbol{P},p)
  =\mathcal{I}(\boldsymbol{U},\boldsymbol{K},\boldsymbol{P})\Big|_{J(\boldsymbol{K})=1} 
  + \int_{\mathcal{B}}p\,C\big(J(\boldsymbol{K})\big)dV,
\end{equation}
% -----------------------------
where $C:\mathbb{R}^{+}\rightarrow \mathbb{R}$ is a smooth function such that $C(J)=0$ if and only if $J=1$ and $p \in L^{2}(\mathcal{B})$ is a pressure-like scalar field, to which we may refer simply as pressure. For 3D computations, in order to improve the stability of the mixed finite element methods, a stabilizing term is added to (\ref{HWFicmp}) as
%-----------------------------
\begin{equation}\label{HWFicmpStab}
  \mathcal{J}(\boldsymbol{U},\boldsymbol{K},\boldsymbol{P},p)
  = \overline{\mathcal{I}}(\boldsymbol{U},\boldsymbol{K},\boldsymbol{P},p) 
  +\frac{\alpha}{2}\llangle \boldsymbol{K}-\mathbf{grad}\,\boldsymbol{U},\boldsymbol{K}
  -\mathbf{grad}\,\boldsymbol{U}\rrangle,
\end{equation}
%-----------------------------
where $\alpha\geq0$ is a penalty constant for enforcing $\boldsymbol{K}=\mathbf{grad}\,\boldsymbol{U}$.
Extremizing (\ref{HWFicmpStab}), as discussed in \mbox{\cite[\S 2.2]{FaYa2018}}, results in the following weak formulation of the boundary-value problem for incompressible nonlinear elastostatics:

\bigskip
\begin{minipage}{0.95\textwidth}{\textit{Given a body force $\boldsymbol{B}$ of $L^{2}$-class, a boundary displacement $\overline{\boldsymbol{U}}$ on $\Gamma_{d}$ of $H^{1/2}$-class, a boundary traction $\overline{\boldsymbol{T}}$ on $\Gamma_{t}$ of $L^{2}$-class, and a stability constant $\alpha\geq0$, find $(\boldsymbol{U},\boldsymbol{K},\boldsymbol{P},p)\in H^{1}(T\mathcal{B},\Gamma_{d},\overline{\boldsymbol{U}})\times H^{\mathbf{c}}(\mathcal{B}) \times H^{\mathbf{d}}(\mathcal{B})\times L^{2}(\mathcal{B})$ such that}}
%-----------------------------
\begin{equation}\label{ElasMixF_1}
\begin{alignedat}{3}
\llangle\boldsymbol{P},\mathbf{grad}\,\boldsymbol{\Upsilon}\rrangle + \alpha s_1\!\left(\boldsymbol{U},\boldsymbol{K},\boldsymbol{\Upsilon}\right)&=f(\boldsymbol{\Upsilon}), &\quad &\forall \boldsymbol{\Upsilon}\in H^{1}(T\mathcal{B},\Gamma_{d}),\\
\llangle \widetilde{\boldsymbol{P}}(\boldsymbol{K}),\boldsymbol{\kappa}\rrangle - \llangle\boldsymbol{P},\boldsymbol{\kappa}\rrangle + \llangle p\boldsymbol{Q}(\boldsymbol{K}),\boldsymbol{\kappa}\rrangle + \alpha s_2\!\left(\boldsymbol{U},\boldsymbol{K},\boldsymbol{\kappa}\right)&= 0,& &\forall \boldsymbol{\kappa}\in H^{\mathbf{c}}(\mathcal{B}),\\
\llangle \mathbf{grad}\,\boldsymbol{U}, \boldsymbol{\pi} \rrangle - \llangle \boldsymbol{K},\boldsymbol{\pi}\rrangle &= 0, & &\forall \boldsymbol{\pi}\in H^{\mathbf{d}}(\mathcal{B}),\\
\llangle C(J), q\rrangle &= 0, & &\forall q\in L^{2}(\mathcal{B}),
\end{alignedat}
\end{equation}
%----------------------
\textit{where}
\begin{equation}\label{traction}
 f(\boldsymbol{\Upsilon}) = \llangle \rho_{0}\boldsymbol{B},\boldsymbol{\Upsilon} \rrangle + \int_{\Gamma_{t}}\langle \overline{\boldsymbol{T}},\boldsymbol{\Upsilon}\rangle \,dA,
\end{equation}
%----------------------
\textit{and}
%----------------------
\begin{equation}\label{stab}
\begin{alignedat}{3}
s_1\!\left(\boldsymbol{U},\boldsymbol{K},\boldsymbol{\Upsilon}\right) &= \llangle\mathbf{grad}\,\boldsymbol{U},\mathbf{grad}\,\boldsymbol{\Upsilon}\rrangle-\llangle\boldsymbol{K},\mathbf{grad}\,\boldsymbol{\Upsilon}\rrangle,\\
%--------------------------------------------
s_2\!\left(\boldsymbol{U},\boldsymbol{K},\boldsymbol{\kappa}\right) &= \llangle\boldsymbol{K},\boldsymbol{\kappa}\rrangle-\llangle\mathbf{grad}\,\boldsymbol{U},\boldsymbol{\kappa}\rrangle.
 \end{alignedat}
\end{equation}
\end{minipage}
\bigskip
 \\
 In \eqref{ElasMixF_1}$_2$, $\widetilde{\boldsymbol{P}}(\boldsymbol{K})={\partial\widetilde{W}}/{\partial\boldsymbol{K}}$ with $\widetilde{W}=\widehat{W}(\mathbf{X},I_1,I_2,I_3)\big|_{I_3=1}$ is the constitutive part of the stress, and $\boldsymbol{Q}(\boldsymbol{K})={\partial C}/{\partial\boldsymbol{K}}=C^\prime(J)(\boldsymbol{F}^{-1})^{\mathsf{T}}$ is the contribution of the incompressibility constraint $J=1$. Note that setting \mbox{$\alpha=0$} results in the standard weak formulation of incompressible nonlinear elastostatics \cite[\S 2.2]{FaYa2018}. The solutions of the above weak formulation are the critical points of the functional (\ref{HWFicmpStab}). Using Green's formula $\llangle \mathbf{div}\boldsymbol{P}, \boldsymbol{\Upsilon}\rrangle= -\llangle \boldsymbol{P}, \mathbf{grad}\boldsymbol{\Upsilon}\rrangle + \int_{\partial\mathcal{B}}\langle \boldsymbol{P}\boldsymbol{N}, \boldsymbol{\Upsilon}\rangle \,dA, ~\forall \boldsymbol{\Upsilon}\in H^{1}(T\mathcal{B},\Gamma_{d})$ and assuming that $\int_{\partial\mathcal{B}}\langle \boldsymbol{P}\boldsymbol{N}, \boldsymbol{\Upsilon}\rangle \,dA = \int_{\Gamma_{t}}\langle \overline{\boldsymbol{T}},\boldsymbol{\Upsilon}\rangle \,dA$ holds, $\forall \boldsymbol{\Upsilon}\in H^{1}(T\mathcal{B},\Gamma_{d})$, one can show that (\ref{ElasMixF_1}) results in the following set of governing equations for incompressible nonlinear elastostatics:
%-----------------------------
\begin{subequations}\label{3DNonLinStrong}
\begin{IEEEeqnarray}{rlrll}
& \mathbf{div}\boldsymbol{P} + \rho_{0} \boldsymbol{B} = \boldsymbol{0}, & &\quad &
\\*
& \boldsymbol{P} =\widetilde{\boldsymbol{P}}(\boldsymbol{K})+p\boldsymbol{Q}(\boldsymbol{K}), \label{3DNonLinStrong2}
\vspace*{-7pt}
\\*
& &\smash{\left.\IEEEstrut[32pt]\right\}} & &  \text{on } \mathcal{B}, \nonumber
\vspace*{-8pt}
\\*
& \boldsymbol{K}=\mathbf{grad}~\boldsymbol{U}, & & &
\\*
& J=1, & & &
\vspace*{3pt}
\\*
& \boldsymbol{U}=\overline{\boldsymbol{U}}, & & &\text{on } \Gamma_{d},
\\*
& \boldsymbol{P}\boldsymbol{N}=\overline{\boldsymbol{T}}, & & &\text{on } \Gamma_{t}.
\end{IEEEeqnarray}
\end{subequations}
%-----------------------------
Conversely, one can obtain (\ref{ElasMixF_1}) from (\ref{3DNonLinStrong}), see \cite[\S 2.2]{AnFSYa2017}. Note that (\ref{3DNonLinStrong2}) is the constitutive relation of an incompressible solid, which in terms of the Cauchy stress reads $\boldsymbol{\sigma}=\widetilde{\boldsymbol{P}}(\boldsymbol{K})\boldsymbol{F}^{\mathsf{T}}+\bar{p}\boldsymbol{I}$, where $\bar{p} = p\,C^\prime(J)$. Note that adding the stabilizing terms (\ref{stab}) to the weak formulation (\ref{ElasMixF_1}) does not change the set of governing equations (\ref{3DNonLinStrong}). In other words, these terms will vanish for the exact solutions of (\ref{ElasMixF_1}). Hence, with proper discretization, the extra terms (\ref{stab}) may improve the stability of the resulting mixed finite element methods without compromising their consistency (see \citep{arnold2015stability} for consistency and stability). We discuss this further in \S\ref{solvability}.

By setting $p=q=0$ in (\ref{ElasMixF_1}) and replacing $\widetilde{\boldsymbol{P}}(\boldsymbol{K})$ with $\widehat{\boldsymbol{P}}(\boldsymbol{K})={\partial\widehat{W}}/{\partial\boldsymbol{K}}$, one can readily arrive at the following weak formulation of the boundary-value problem of compressible nonlinear elastostatics:

\bigskip
\begin{minipage}{0.95\textwidth}{\textit{Given $\boldsymbol{B}$, $\overline{\boldsymbol{U}}$, $\overline{\boldsymbol{T}}$, and $\alpha\geq0$, find $(\boldsymbol{U},\boldsymbol{K},\boldsymbol{P})\in H^{1}(T\mathcal{B},\Gamma_{d},\overline{\boldsymbol{U}})\times H^{\mathbf{c}}(\mathcal{B}) \times H^{\mathbf{d}}(\mathcal{B})$ such that}}
\begin{equation}\label{ElasMixF_2}
\begin{alignedat}{3}
\llangle\boldsymbol{P},\mathbf{grad}\,\boldsymbol{\Upsilon}\rrangle + \alpha s_1\!\left(\boldsymbol{U},\boldsymbol{K},\boldsymbol{\Upsilon}\right)&=f(\boldsymbol{\Upsilon}), &\quad &\forall \boldsymbol{\Upsilon}\in H^{1}(T\mathcal{B},\Gamma_{d}),\\
\llangle \widehat{\boldsymbol{P}}(\boldsymbol{K}),\boldsymbol{\kappa}\rrangle - \llangle\boldsymbol{P},\boldsymbol{\kappa}\rrangle + \alpha s_2\!\left(\boldsymbol{U},\boldsymbol{K},\boldsymbol{\kappa}\right)&= 0,& &\forall \boldsymbol{\kappa}\in H^{\mathbf{c}}(\mathcal{B}),\\
\llangle \mathbf{grad}\,\boldsymbol{U}, \boldsymbol{\pi} \rrangle - \llangle \boldsymbol{K},\boldsymbol{\pi}\rrangle &= 0, & &\forall \boldsymbol{\pi}\in H^{\mathbf{d}}(\mathcal{B}).
\end{alignedat}
\end{equation}
\end{minipage}
\bigskip
\\
Similarly, one can show that (\ref{ElasMixF_2}) results in the following set of governing equations for compressible nonlinear elastostatics:
%-----------------------------
\begin{subequations}\label{3DNonLinStrong_cmp}
\begin{IEEEeqnarray}{rlrll}
& \mathbf{div}\,\boldsymbol{P}+\rho_{0} \boldsymbol{B}=\boldsymbol{0}, & &\quad & \label{3DNonLinStrong_cmp1}
\\*
& \boldsymbol{P}=\widehat{\boldsymbol{P}}(\boldsymbol{K}),  &\smash{\left.
\IEEEstrut[8\jot]
\right\}} & & \text{on } \mathcal{B}, \label{3DNonLinStrong_cmp2}
\\*
& \boldsymbol{K}=\mathbf{grad}\,\boldsymbol{U}, & & & \label{3DNonLinStrong_cmp3}
\\*
& \boldsymbol{U}=\overline{\boldsymbol{U}}, & & &\text{on } \Gamma_{d}, \label{3DNonLinStrong_cmp4}
\\*
& \boldsymbol{P}\boldsymbol{N}=\overline{\boldsymbol{T}}, & & &\text{on } \Gamma_{t}. \label{3DNonLinStrong_cmp5}
\end{IEEEeqnarray}
\end{subequations}
%-----------------------------

% %-----------------------------
% %-----------------------------
\section{Finite Element Approximations}

%**************************************************************************************
\newcommand{\vek}[1] {\lceil#1\rceil}
\newcommand{\bigVek}[2] {#1\lceil#2 #1\rceil} % [\big] [\Big] [\bigg] [\Bigg]
\newcommand{\jac} {\boldsymbol{\mathsf{J}}_\EuScript{T}}
\newcommand{\detJ}{\mathrm{det}\jac}
\newcommand{\bo}{\boldsymbol{\mathsf{b}}^1}
\newcommand{\go}{\boldsymbol{\mathsf{g}}^1}
\newcommand{\bc}{\boldsymbol{\mathsf{b}}^\mathbf{c}}
\newcommand{\bd}{\boldsymbol{\mathsf{b}}^\mathbf{d}}
\newcommand{\fes}[3]{\mathcal{P}^{\mathbf{#1}#2}_{#3}(\mathcal{B}_{h})}
\newcommand{\Rm}[2] {\boldsymbol{\mathsf{#1}}^{#2}}                     %reference matrix
\newcommand{\lm}[2] {\boldsymbol{\mathsf{#1}}^{#2}_{\EuScript{T}} }     %local matrix
\newcommand{\lmi}[3] {\mathsf{#1}^{#2}_{\EuScript{T},{#3}} }            %local matrix components
\newcommand{\gm}[2] {\boldsymbol{\mathsf{#1}}^{#2}_{h}}   %global matrix  %\mathcal{B}_{h}
\newcommand{\bz}{\boldsymbol{0}}
\newcommand{\base}[6]{
  \begin{bmatrix}
    #1 &#4 &#2 &#5 &\cdots &#3 &#6 \\
    #6 &#1 &#4 &#2 &\cdots &#5 &#3
 \end{bmatrix}
}
\newcommand{\bbase}[3]{\begin{bmatrix}  #1 &#2 &\cdots &#3 \end{bmatrix}}
\newcommand{\gradN}{\mathbf{grad}\mathsf{N}}
\newcommand{\divN}{\mathbf{div}\,\bd}
\newcommand{\st}{o_\EuScript{T}}
%**************************************************************************************

% %-----------------------------
% %-----------------------------
\subsection{Finite Elements}\label{FEs}

Suppose $\mathcal{P}_{r}(\mathbb{R}^{3})$ is the space of $\mathbb{R}$-valued polynomials in three variables $\{\mathrm{X}^{1},\mathrm{X}^{2},\mathrm{X}^{3}\}$ of degree at most $r\geq0$ and suppose $\mathcal{H}_{r}(\mathbb{R}^{3})\subset \mathcal{P}_{r}(\mathbb{R}^{3})$ is the space of homogeneous polynomials of degree $r$, that is, all the terms of the members of $\mathcal{H}_{r}(\mathbb{R}^{3})$ are of degree $r$. For $r<0$, these spaces are assumed to be empty. By $\mathcal{P}_{r}(T\mathbb{R}^{3})$ and $\mathcal{P}_{r}(\text{\large$\otimes$}^{2}T\mathbb{R}^{3})$ we denote the spaces of polynomial vector and $\binom{2}{0}$-tensor fields in $\mathbb{R}^{3}$ with Cartesian components in $\mathcal{P}_{r}(\mathbb{R}^{3})$. The spaces $\mathcal{H}_{r}(T\mathbb{R}^{3})$ and $\mathcal{H}_{r}(\text{\large$\otimes$}^{2}T\mathbb{R}^{3})$ are defined similarly. 
Next define the following subspaces of $\mathcal{P}_{r}(T\mathbb{R}^{3})$:
%-----------------------------
\begin{align*}
\mathcal{P}^{-}_{r}(T\mathbb{R}^{3})&:=\mathcal{P}_{r-1}(T\mathbb{R}^{3}) \oplus \mathrm{L}_{1}\!\left(  \mathcal{H}_{r-1}(T\mathbb{R}^{3})  \right), \\
\mathcal{P}^{\ominus}_{r}(T\mathbb{R}^{3}) &:= \mathcal{P}_{r-1}(T\mathbb{R}^{3}) \oplus \mathrm{L}_{2}\!\left(  \mathcal{H}_{r-1}(\mathbb{R}^{3})  \right),
\end{align*}
%-----------------------------
where $(\mathrm{L}_{1}(\boldsymbol{Y}))^{I}=\varepsilon_{IJL}\mathrm{X}^{L}Y^{J}$ for any vector field $\boldsymbol{Y}$, and $(\mathrm{L}_{2}(f))^{I}=\mathrm{X}^{I}f$ for any scalar field $f$. Similarly, one defines the following subspaces of $\mathcal{P}_{r}(\text{\large$\otimes$}^{2}T\mathbb{R}^{3})$:
%-----------------------------
\begin{align*}
\mathcal{P}^{-}_{r}(\text{\large$\otimes$}^{2}T\mathbb{R}^{3})&:=\mathcal{P}_{r-1}(\text{\large$\otimes$}^{2}T\mathbb{R}^{3}) \oplus \mathbf{L}_{1}\!\left(  \mathcal{H}_{r-1}(\text{\large$\otimes$}^{2}T\mathbb{R}^{3})  \right), \\
\mathcal{P}^{\ominus}_{r}(\text{\large$\otimes$}^{2}T\mathbb{R}^{3}) &:= \mathcal{P}_{r-1}(\text{\large$\otimes$}^{2}T\mathbb{R}^{3}) \oplus \mathbf{L}_{2}\!\left(  \mathcal{H}_{r-1}(T\mathbb{R}^{3})  \right), 
\end{align*}
%-----------------------------
where $(\mathbf{L}_{1}(\boldsymbol{T}))^{IJ} =\varepsilon_{JLK}\mathrm{X}^{K}T^{IL}$ for any $\binom{2}{0}$-tensor field $\boldsymbol{T}$, and $ (\mathbf{L}_{2}(\boldsymbol{Y}))^{IJ}=\mathrm{X}^{J}Y^{I}$ for any vector field $\boldsymbol{Y}$. One can show that 
 %-----------------------------  
\begin{equation*}
\begin{aligned}
 \dim \mathcal{P}_{r}(\text{\large$\otimes$}^{2}T\mathbb{R}^{3})  = 3\dim \mathcal{P}_{r}(T\mathbb{R}^{3})=9\dim\mathcal{P}_{r}(\mathbb{R}^{3})&= \frac{3}{2}(r+1)(r+2)(r+3), \\
\dim \mathcal{P}^{-}_{r}(\text{\large$\otimes$}^{2}T\mathbb{R}^{3}) = 3\dim\mathcal{P}^{-}_{r}(T\mathbb{R}^{3}) &=\frac{3}{2}r(r+2)(r+3), \\
\dim \mathcal{P}^{\ominus}_{r}(\text{\large$\otimes$}^{2}T\mathbb{R}^{3}) = 3\dim\mathcal{P}^{\ominus}_{r}(T\mathbb{R}^{3})&= \frac{3}{2}r(r+1)(r+3).\\
\end{aligned} 
\end{equation*}
%-----------------------------

Let $\widehat{\EuScript{T}}$ be a reference tetrahedral element with coordinates $\boldsymbol{\xi}=(\xi^1,\xi^2,\xi^3)$ shown in Figure \ref{RefElSch}. We denote the edges of $\widehat{\EuScript{T}}$ by $\widehat{\EuScript{E}}_{i}, i=1,2,\dots,6$ and their corresponding lengths by $\hat{\ell}_i, i=1,2,\dots,6$, and the faces of $\widehat{\EuScript{T}}$ by $\widehat{\EuScript{F}}_{i}, i=1,2,3,4$ and their corresponding areas by $\hat{A}_i, i=1,2,3,4$. For an edge joining two vertices $i$ and $j$, one defines a unique orientation as $i \rightarrow j$, where $i < j$.
We also define a unit tangent vector $\hat{\boldsymbol{\mathsf{t}}}_i$ on each edge such that it agrees with the edge orientation. Moreover, on each face containing three edges $\widehat{\EuScript{E}}_{i}$, $\widehat{\EuScript{E}}_{j}$, and $\widehat{\EuScript{E}}_{k}$, we define a unit normal vector $\hat{\boldsymbol{\mathsf{n}}}_l=\hat{\boldsymbol{\mathsf{t}}}_i\times\hat{\boldsymbol{\mathsf{t}}}_j$, where $i < j< k$.
%+++++++++++++++++++++++++++++++++++++++
\begin{figure}[h]
\begin{center}
\includegraphics[scale = 0.9 ,angle = 0]{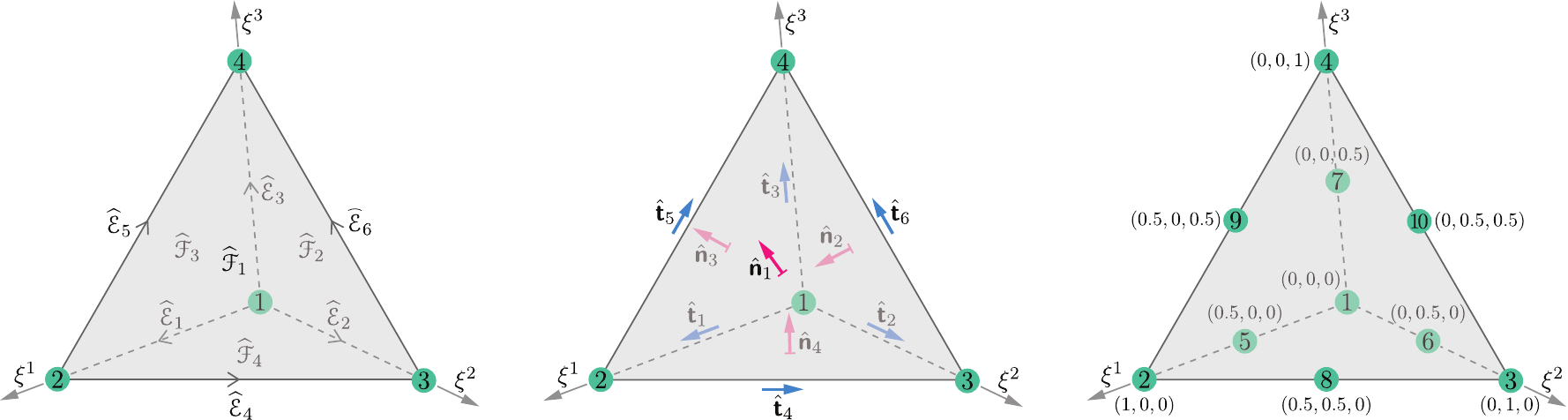}
\end{center}
\vspace*{-0.1in}
\caption{\footnotesize The four-node reference element and the edge and face numbers (left), the reference unit tangent and normal vectors (middle), and the ten-node reference element (right).}
\label{RefElSch}
\end{figure}
%+++++++++++++++++++++++++++++++++++++++

Following \citep{Ciarlet1978,ErnGuermond2004}, we define a finite element as a triplet $(\EuScript{T},\mathcal{P}(\EuScript{T}),\text{\textit{\textSigma}})$, where $\EuScript{T}$ is a tetrahedron in $\mathbb{R}^{3}$, $\mathcal{P}(\EuScript{T})$ is a space of polynomials on $\EuScript{T}$, and $\text{\textit{\textSigma}}$ is a set of $\mathbb{R}$-valued linear functionals acting on the members of $\mathcal{P}(\EuScript{T})$. The members of $\text{\textit{\textSigma}}$ are called the local degrees of freedom (DOF) and the local shape functions form a basis for $\mathcal{P}(\EuScript{T})$ (see \citep[\S3.1]{FaYa2018}). We consider the following \emph{reference} finite elements for the four field variables:
%-----------------------------
\begin{equation} \label{RFEs}
\begin{aligned}
&\left(\widehat{\EuScript{T}},\mathcal{P}_{r}(T\widehat{\EuScript{T}}),\text{\textit{\textSigma}}^{\widehat{\EuScript{T}},1}\right), &&\text{ for displacement } \boldsymbol{U},\\
&\left(\widehat{\EuScript{T}},\mathcal{P}^{-}_{r}(\text{\large$\otimes$}^{2}T\widehat{\EuScript{T}}), \text{\textit{\textSigma}}^{\widehat{\EuScript{T}},\mathbf{c}-}\right), \left(\widehat{\EuScript{T}},\mathcal{P}_{r}(\text{\large$\otimes$}^{2}T\widehat{\EuScript{T}}), \text{\textit{\textSigma}}^{\widehat{\EuScript{T}},\mathbf{c}}\right), &&\text{ for displacement gradient } \boldsymbol{K},   \\
&\left(\widehat{\EuScript{T}},\mathcal{P}^{\ominus}_{r}(\text{\large$\otimes$}^{2}T\widehat{\EuScript{T}}), \text{\textit{\textSigma}}^{\widehat{\EuScript{T}},\mathbf{d}-}\right), \left(\widehat{\EuScript{T}},\mathcal{P}_{r}(\text{\large$\otimes$}^{2}T\widehat{\EuScript{T}}), \text{\textit{\textSigma}}^{\widehat{\EuScript{T}},\mathbf{d}}\right), &&\text{ for stress } \boldsymbol{P},\\
&\left(\widehat{\EuScript{T}},\mathcal{P}_{r}(\widehat{\EuScript{T}}),\text{\textit{\textSigma}}^{\widehat{\EuScript{T}},\ell}\right), &&\text{ for the pressure-like field } p.
\end{aligned}{}
\end{equation}
%-----------------------------
Note that $\mathcal{P}_{r}(\widehat{\EuScript{T}}) = \mathcal{P}_{r}(\mathbb{R}^{3})\big|_{\widehat{\EuScript{T}}}$ and $\mathcal{P}_{r}(T\widehat{\EuScript{T}})$, $\mathcal{P}_{r}(\text{\large$\otimes$}^{2}T\widehat{\EuScript{T}})$, $\mathcal{P}^{-}_{r}(\text{\large$\otimes$}^{2}T\widehat{\EuScript{T}})$, and $\mathcal{P}^{\ominus}_{r}(\text{\large$\otimes$}^{2}T\widehat{\EuScript{T}})$ are defined similarly. The finite element for $\boldsymbol{U}$ is based on the standard Lagrange finite elements. For a vector field $\boldsymbol{V}:\widehat{\EuScript{T}}\rightarrow\mathbb{R}^3$, the set of local degrees of freedom is $\text{\textit{\textSigma}}^{\widehat{\EuScript{T}},1}=\{V^1(\boldsymbol{\xi}_1),V^2(\boldsymbol{\xi}_1),V^3(\boldsymbol{\xi}_1),\dots,V^1(\boldsymbol{\xi}_{m}),V^2(\boldsymbol{\xi}_{m}),V^3(\boldsymbol{\xi}_{m})\}$, where $\boldsymbol{\xi}_i$ contains the coordinates of the $i$-th node of the $m$-node $\widehat{\EuScript{T}}$, where $m=4$ ($m=10$) for $r=1$ ($r=2$). For $r=1,2$, a basis of the polynomial space $\mathcal{P}_{r}(T\widehat{\EuScript{T}})$ includes
%-----------------------------
\begin{equation}\label{sf1}
\boldsymbol{h}^{\widehat{\EuScript{T}}}_{3i-2} = \begin{bmatrix} l^{r}_i \\ 0 \\ 0 \end{bmatrix},\quad
\boldsymbol{h}^{\widehat{\EuScript{T}}}_{3i-1} = \begin{bmatrix} 0 \\ l^{r}_i \\ 0\end{bmatrix}, \quad
\boldsymbol{h}^{\widehat{\EuScript{T}}}_{3i} = \begin{bmatrix} 0 \\ 0 \\ l^{r}_i \end{bmatrix}, \quad 
i = 1,2,...,m.
\end{equation}
%-----------------------------
The Lagrange polynomials $l^{r}_i$ for the four-node reference tetrahedron $\widehat{\EuScript{T}}$ are
%-----------------------------
\begin{equation}\label{lagFun3}
 l^{1}_1 = 1-\xi^1-\xi^2-\xi^3,\quad
 l^{1}_2 = \xi^1,\quad
 l^{1}_3 = \xi^2,\quad
l^{1}_4 = \xi^3. 
\end{equation}
For the ten-node $\widehat{\EuScript{T}}$, the Lagrange polynomials are $l^{2}_i = l^{1}_i(2l^{1}_i-1)$ for the nodes at the vertices $i=1,2,3,4$ and $l^{2}_{k} = 4l^{1}_{i}l^{1}_{j}$ for the middle node of each edge joining vertices $i$ and $j$ as shown in Figure \ref{RefElSch}.  We will use $\mathcal{P}_{r}(T\widehat{\EuScript{T}}), r=1,2$ spanned by $\boldsymbol{h}^{\widehat{\EuScript{T}}}_l, l=1,2,...,3m$ to construct the approximation space of $\boldsymbol{U}$. To interpolate $\boldsymbol{K}\in H^{\mathbf{c}}$, we define two finite elements given in \eqref{RFEs}$_2$ based on the N\'{e}d\'{e}lec $1^{\mathrm{st}}$-kind \emph{edge} elements in $\mathbb{R}^3$ (NE1) \citep{Ned1980} and the N\'{e}d\'{e}lec $2^{\mathrm{nd}}$-kind \emph{edge} elements in $\mathbb{R}^3$ (NE2) \citep{Ned1986}, respectively. Let $\overrightarrow{\boldsymbol{T}}_{\!\!I}:= \begin{bmatrix} T^{I1} & T^{I2} & T^{I3}\end{bmatrix}^{\mathsf{T}}$ be a vector containing the elements of the $I$-th row of a $\binom{2}{0}$-tensor $\boldsymbol{T}$. The set of the local degrees of freedom $\text{\textit{\textSigma}}^{\widehat{\EuScript{T}},\mathbf{c}-}$ ($\text{\textit{\textSigma}}^{\widehat{\EuScript{T}},\mathbf{c}}$) in \eqref{RFEs}$_2$ is defined as $\left\{ \phi^{\widehat{\EuScript{T}},\widehat{\EuScript{E}}_{k}}_{I,J}, \phi^{\widehat{\EuScript{T}},\widehat{\EuScript{F}}_{l}}_{I,J}, \phi^{\widehat{\EuScript{T}},\widehat{\EuScript{T}}}_{I,J} \right\}$, where
%-----------------------------
\begin{equation}\label{DOF_hc}
\begin{aligned}
\phi^{\widehat{\EuScript{T}},\widehat{\EuScript{E}}_{k}}_{I,J}(\boldsymbol{T}) &=\int_{\widehat{\EuScript{E}}_k} f_J\langle \overrightarrow{\boldsymbol{T}}_{\!\!I} ,\hat{\boldsymbol{\mathsf{t}}}_k \rangle \,d\hat{s}, && \forall f_J\text{ that form a basis for } \mathcal{P}_{r-1}(\mathbb{R}^{3})\big|_{\widehat{\EuScript{E}}_k} \left(\mathcal{P}_{r}(\mathbb{R}^{3})\big|_{\widehat{\EuScript{E}}_k}\right),\\
\phi^{\widehat{\EuScript{T}},\widehat{\EuScript{F}}_{l}}_{I,J}(\boldsymbol{T}) & =\int_{\widehat{\EuScript{F}}_l} \langle \overrightarrow{\boldsymbol{T}}_{\!\!I}\times\boldsymbol{Y}_{\!\!J},\hat{\boldsymbol{\mathsf{n}}}_l \rangle \,d\hat{A}, && \forall \boldsymbol{Y}_{\!\!J} \text{ that form a basis for } \mathcal{P}_{r-2}(T\mathbb{R}^{3})\big|_{\widehat{\EuScript{F}}_l} \left(\mathcal{P}^{-}_{r-1}(T\mathbb{R}^{3})\big|_{\widehat{\EuScript{F}}_l}\right),   \\
\phi^{\widehat{\EuScript{T}},\widehat{\EuScript{T}}}_{I,J}(\boldsymbol{T}) &=\int_{\widehat{\EuScript{T}}}\langle \overrightarrow{\boldsymbol{T}}_{\!\!I} , \boldsymbol{Z}_{\!J} \rangle \,d\hat{V}, && \forall\boldsymbol{Z}_{\!J} \text{ that form a basis for } \mathcal{P}_{r-3}(T\mathbb{R}^{3})\big|_{\widehat{\EuScript{T}}} \left(\mathcal{P}^{\ominus}_{r-2}(T\mathbb{R}^{3})\big|_{\widehat{\EuScript{T}}}\right).
\end{aligned}{}
\end{equation}
%-----------------------------
We next discuss the corresponding local shape functions of the two finite elements given in \eqref{RFEs}$_2$. Let $\boldsymbol{v}_{J}^{\widehat{\EuScript{T}},\widehat{\EuScript{E}}_{k}}$, $\boldsymbol{v}_{J}^{\widehat{\EuScript{T}},\widehat{\EuScript{F}}_{l}}$, and $\boldsymbol{v}_{J}^{\widehat{\EuScript{T}},\widehat{\EuScript{T}}}$ denote the shape functions of those N\'{e}d\'{e}lec edge elements that are associated to the $k-$th edge of $\widehat{\EuScript{T}}$, the $l-$th face of $\widehat{\EuScript{T}}$, and the entire $\widehat{\EuScript{T}}$, respectively. We consider these vector-valued polynomials in $\mathbb{R}^3$ as row vectors and define the following tensorial shape functions:
%-----------------------------
\begin{equation}\label{sfc}
  \boldsymbol{r}_{1,J}^{\widehat{\EuScript{T}},\widehat{\EuScript{E}}_{k}} = \begin{bmatrix} \boldsymbol{v}_{J}^{\widehat{\EuScript{T}},\widehat{\EuScript{E}}_{k}} \\ \boldsymbol{0} \\ \boldsymbol{0} \end{bmatrix}_{3\times3},\quad 
  \boldsymbol{r}_{2,J}^{\widehat{\EuScript{T}},\widehat{\EuScript{E}}_{k}} = \begin{bmatrix} \boldsymbol{0} \\ \boldsymbol{v}_{J}^{\widehat{\EuScript{T}},\widehat{\EuScript{E}}_{k}} \\ \boldsymbol{0} \end{bmatrix}_{3\times3},\quad
  \boldsymbol{r}_{3,J}^{\widehat{\EuScript{T}},\widehat{\EuScript{E}}_{k}} = \begin{bmatrix} \boldsymbol{0} \\ \boldsymbol{0} \\\boldsymbol{v}_{J}^{\widehat{\EuScript{T}},\widehat{\EuScript{E}}_{k}}  \end{bmatrix}_{3\times3}.
\end{equation}
%-----------------------------
Similarly, we define $\boldsymbol{r}_{I,J}^{\widehat{\EuScript{T}},\widehat{\EuScript{F}}_{l}}$ and $\boldsymbol{r}_{I,J}^{\widehat{\EuScript{T}},\widehat{\EuScript{T}}}$ for $I=1,2,3$ using $\boldsymbol{v}_{J}^{\widehat{\EuScript{T}},\widehat{\EuScript{F}}_{l}}$ and $\boldsymbol{v}_{J}^{\widehat{\EuScript{T}},\widehat{\EuScript{T}}}$, respectively. The polynomial spaces $\mathcal{P}^{-}_{r}(\text{\large$\otimes$}^{2}T\widehat{\EuScript{T}})$ and $\mathcal{P}_{r}(\text{\large$\otimes$}^{2}T\widehat{\EuScript{T}})$ in \eqref{RFEs}$_2$ are spanned by a basis $\left\{ \boldsymbol{r}_{I,J}^{\widehat{\EuScript{T}},\widehat{\EuScript{E}}_{k}}, \boldsymbol{r}_{I,J}^{\widehat{\EuScript{T}},\widehat{\EuScript{F}}_{l}}, \boldsymbol{r}_{I,J}^{\widehat{\EuScript{T}},\widehat{\EuScript{T}}} \right\}$ that are respectively based on the shape functions of NE1 and NE2. Moreover, the following relations hold:
%-----------------------------
\begin{equation}\label{duality1}
\begin{aligned}
\phi^{\widehat{\EuScript{T}},\widehat{\EuScript{E}}_p}_{M,N}\Big(\boldsymbol{r}^{\widehat{\EuScript{T}},\widehat{\EuScript{E}}_k}_{I,J}\Big) &= \begin{cases}
1, & \text{ if } p=k \text{ and } I=M \text{ and } J=N,\\
0, & \text{otherwise},\end{cases} & \phi^{\widehat{\EuScript{T}},\widehat{\EuScript{E}}_p}_{M,N}\Big(\boldsymbol{r}^{\widehat{\EuScript{T}},\widehat{\EuScript{F}}_l}_{I,J}\Big)=\phi^{\widehat{\EuScript{T}},\widehat{\EuScript{E}}_p}_{M,N}\Big(\boldsymbol{r}^{\widehat{\EuScript{T}},\widehat{\EuScript{T}}}_{I,J}\Big)=0, \\
%-----------------------------------------------------------------------------------------------------
\phi^{\widehat{\EuScript{T}},\widehat{\EuScript{F}}_q}_{M,N}\Big(\boldsymbol{r}^{\widehat{\EuScript{T}},\widehat{\EuScript{F}}_l}_{I,J}\Big) &= \begin{cases}
1, & \text{ if } q=l \text{ and } I=M \text{ and } J=N,\\
0, & \text{ otherwise},\end{cases}  & \phi^{\widehat{\EuScript{T}},\widehat{\EuScript{F}}_q}_{M,N}\Big(\boldsymbol{r}^{\widehat{\EuScript{T}},\widehat{\EuScript{E}}_k}_{I,J}\Big)=\phi^{\widehat{\EuScript{T}},\widehat{\EuScript{F}}_q}_{M,N}\Big(\boldsymbol{r}^{\widehat{\EuScript{T}},\widehat{\EuScript{T}}}_{I,J}\Big)=0, \\
%-----------------------------------------------------------------------------------------------------
\phi^{\widehat{\EuScript{T}},\widehat{\EuScript{T}}}_{M,N}\Big(\boldsymbol{r}^{\widehat{\EuScript{T}},\widehat{\EuScript{T}}}_{I,J}\Big) &= \begin{cases}
1, & \text{if } I=M \text{ and } J=N,\\
0, & \text{ otherwise}, \end{cases}  &\phi^{\widehat{\EuScript{T}},\widehat{\EuScript{T}}}_{M,N}\Big(\boldsymbol{r}^{\widehat{\EuScript{T}},\widehat{\EuScript{E}}_k}_{I,J}\Big)=\phi^{\widehat{\EuScript{T}},\widehat{\EuScript{T}}}_{M,N}\Big(\boldsymbol{r}^{\widehat{\EuScript{T}},\widehat{\EuScript{F}}_l}_{I,J}\Big)=0.
\end{aligned}
\end{equation}
%-----------------------------
Later in this section, we will provide explicit expressions for some of the above shape functions that are used in our numerical examples. 
To interpolate $\boldsymbol{P}\in H^{\mathbf{d}}$, we define the two finite elements given in \eqref{RFEs}$_3$ based on respectively the N\'{e}d\'{e}lec $1^{\mathrm{st}}$-kind \emph{face} elements in $\mathbb{R}^3$ (NF1) \citep{Ned1980}, and the N\'{e}d\'{e}lec $2^{\mathrm{nd}}$-kind \emph{face} elements in $\mathbb{R}^3$ (NF2) \citep{Ned1986}. We denote the set of the local degrees of freedom $\text{\textit{\textSigma}}^{\widehat{\EuScript{T}},\mathbf{d}-}$ ($\text{\textit{\textSigma}}^{\widehat{\EuScript{T}},\mathbf{d}}$) by $\left\{ \psi^{\widehat{\EuScript{T}},\widehat{\EuScript{F}}_{l}}_{I,J}, \psi^{\widehat{\EuScript{T}},\widehat{\EuScript{T}}}_{I,J}\right\}$, where
%-----------------------------
\begin{equation}\label{DOF_hd}
\begin{aligned}
\psi^{\widehat{\EuScript{T}},\widehat{\EuScript{F}}_{l}}_{I,J}(\boldsymbol{T}) &=\int_{\widehat{\EuScript{F}}_l} f_J\langle \overrightarrow{\boldsymbol{T}}_{\!\!I},\hat{\boldsymbol{\mathsf{n}}}_l \rangle \,d\hat{A}, && \forall f_J\text{ that form a basis for } \mathcal{P}_{r-1}(\mathbb{R}^{3})\big|_{\widehat{\EuScript{F}}_l} \left(\mathcal{P}_{r}(\mathbb{R}^{3})\big|_{\widehat{\EuScript{F}}_l}\right),\\
\psi^{\widehat{\EuScript{T}},\widehat{\EuScript{T}}}_{I,J}(\boldsymbol{T}) &=\int_{\widehat{\EuScript{T}}}\langle \overrightarrow{\boldsymbol{T}}_{\!\!I} , \boldsymbol{Z}_{\!J} \rangle \,d\hat{V}, &&  \forall \boldsymbol{Z}_{\!\!J} \text{ that form a basis for }  \mathcal{P}_{r-2}(T\mathbb{R}^{3})\big|_{\widehat{\EuScript{T}}} \left(\mathcal{P}^{-}_{r-1}(T\mathbb{R}^{3})\big|_{\widehat{\EuScript{T}}}\right).
\end{aligned}{}
\end{equation}
%-----------------------------
We denote the set of the local shape functions of \eqref{RFEs}$_3$ by $\left\{\boldsymbol{s}_{I,J}^{\widehat{\EuScript{T}},\widehat{\EuScript{F}}_{l}}, \boldsymbol{s}_{I,J}^{\widehat{\EuScript{T}},\widehat{\EuScript{T}}} \right\}$. Both $\boldsymbol{s}_{I,J}^{\widehat{\EuScript{T}},\widehat{\EuScript{F}}_{l}}$ and $\boldsymbol{s}_{I,J}^{\widehat{\EuScript{T}},\widehat{\EuScript{T}}}$ are defined similar to (\ref{sfc}) but using the vector-valued shape functions of N\'{e}d\'{e}lec face elements, which we denote by $\boldsymbol{u}_{J}^{\widehat{\EuScript{T}},\widehat{\EuScript{F}}_{l}}$ and $\boldsymbol{u}_{J}^{\widehat{\EuScript{T}},\widehat{\EuScript{T}}}$. Also, one has
%-----------------------------
\begin{equation}\label{duality1}
\begin{aligned}
\psi^{\widehat{\EuScript{T}},\widehat{\EuScript{F}}_q}_{M,N}\Big(\boldsymbol{s}^{\widehat{\EuScript{T}},\widehat{\EuScript{F}}_l}_{I,J}\Big) &= \begin{cases}
1, & \text{ if } q=l \text{ and } I=M \text{ and } J=N,\\
0, & \text{ otherwise},\end{cases}  &\psi^{\widehat{\EuScript{T}},\widehat{\EuScript{F}}_q}_{M,N}\Big(\boldsymbol{s}^{\widehat{\EuScript{T}},\widehat{\EuScript{T}}}_{I,J}\Big)=0, \\
%-----------------------------------------------------------------------------------------------------
\psi^{\widehat{\EuScript{T}},\widehat{\EuScript{T}}}_{M,N}\Big(\boldsymbol{s}^{\widehat{\EuScript{T}},\widehat{\EuScript{T}}}_{I,J}\Big) &= \begin{cases}
1, & \text{if } I=M \text{ and } J=N,\\
0, & \text{ otherwise}, \end{cases}  &\psi^{\widehat{\EuScript{T}},\widehat{\EuScript{T}}}_{M,N}\Big(\boldsymbol{s}^{\widehat{\EuScript{T}},\widehat{\EuScript{F}}_l}_{I,J}\Big)=0.
\end{aligned}
\end{equation}
%-----------------------------
For the reference finite element of pressure \eqref{RFEs}$_4$, we have $\text{\textit{\textSigma}}^{\widehat{\EuScript{T}},\ell}=\{\omega^{\widehat{\EuScript{T}}}_{i}\}$, where $\omega^{\widehat{\EuScript{T}}}_{i}(f)=\frac{1}{\hat{A}}\int_{\widehat{\EuScript{T}}} p_{i} f\,d\hat{V}$ for all the polynomials $p_i$ that form a basis for $\mathcal{P}_{r}(\mathbb{R}^{3})\big|_{\widehat{\EuScript{T}}}$. Also, the set of local shape functions $\big\{t^{\widehat{\EuScript{T}}}_{i}\big\}$, which spans $\mathcal{P}_{r}(\widehat{\EuScript{T}})$, is $\{1\}$ for $r=0$, $\{1,\xi^1,\xi^2,\xi^3\}$ for $r=1$, and $\{1,\xi^1,\xi^2,(\xi^1)^2,(\xi^2)^2,(\xi^3)^2,\xi^1\xi^2,\xi^1\xi^3,\xi^2\xi^3\}$ for $r=2$. Choosing $p_i$ properly, one can show that $\omega^{\widehat{\EuScript{T}}}_{i}(t_{j}^{\widehat{\EuScript{T}}})=\delta_{ij}$.
 % %-----------------------------
\begin{figure}[t]
\begin{center}
\includegraphics[width = 0.9\textwidth]{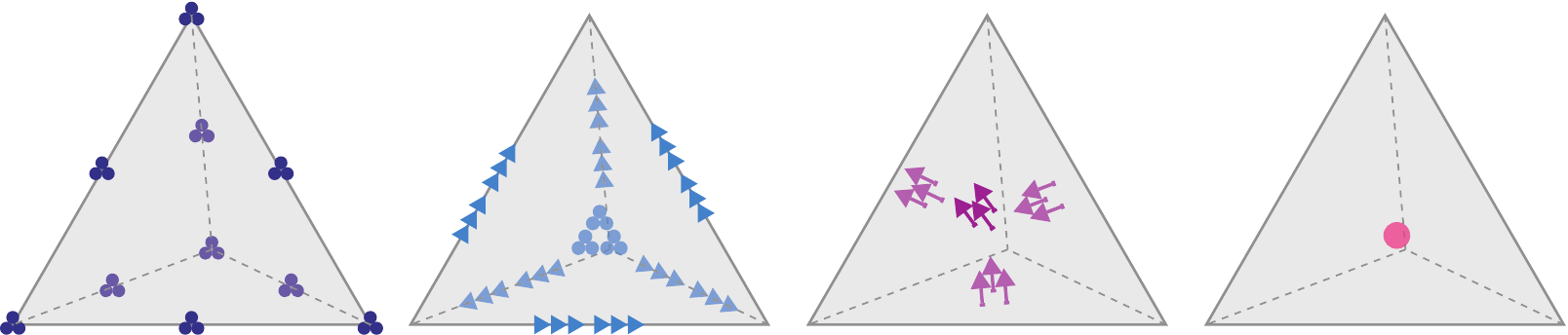}
\end{center}
 \caption{\footnotesize 
The schematic diagrams for the finite elements (\ref{RFEs_m}). The elements form left to right are for $\boldsymbol{U}$, $\boldsymbol{K}$, $\boldsymbol{P}$, and $p$. The total number of DOF is $88$.}
\label{CSFEMs3D}
\end{figure}
% %-----------------------------

To extend 2D CSFEMs \citep{FaYa2018} to 3D, we first followed the same approach we had proposed in \citep{FaYa2018} and considered $r=1,2$ for the finite elements of $\boldsymbol{U}$, $\boldsymbol{K}$, and $\boldsymbol{P}$ and $r=0,1,2$ for the finite elements of $p$ in (\ref{RFEs}). This provides $96$ combinations of elements for discretizing the boundary-value problem (\ref{ElasMixF_1}). 
Using the matrix formulation of the linearization of (\ref{ElasMixF_1}) for $\alpha=0$ using the approach discussed in \citep[\S3.5]{FaYa2018}, we concluded that all the $96$ choices lead to strictly singular or unstable methods in 3D. We will discuss this further in \S\ref{solvability}.  
To overcome this singularity issue, we modify a suitable combination of elements among the aforementioned unstable $96$ choices and propose the following convergent finite elements for $\boldsymbol{U}$, $\boldsymbol{K}$,$\boldsymbol{P}$, and $p$:
%-----------------------------
\begin{equation} \label{RFEs_m}
\begin{aligned}
\left(\widehat{\EuScript{T}},\mathcal{P}_{2}(T\widehat{\EuScript{T}}),\text{\textit{\textSigma}}^{\widehat{\EuScript{T}},1}\right), \quad 
\left(\widehat{\EuScript{T}},\overline{\mathcal{P}}_{3}(\text{\large$\otimes$}^{2}T\widehat{\EuScript{T}}), \overline{\text{\textit{\textSigma}}}^{\widehat{\EuScript{T}},\mathbf{c}}\right), \quad 
\left(\widehat{\EuScript{T}},\mathcal{P}^{\ominus}_{1}(\text{\large$\otimes$}^{2}T\widehat{\EuScript{T}}), \text{\textit{\textSigma}}^{\widehat{\EuScript{T}},\mathbf{d}-}\right), \quad 
\left(\widehat{\EuScript{T}},\mathcal{P}_{0}(\widehat{\EuScript{T}}),\text{\textit{\textSigma}}^{\widehat{\EuScript{T}},\ell}\right),
\end{aligned}{}
\end{equation}
%-----------------------------
where $\overline{\mathcal{P}}_{3}(\text{\large$\otimes$}^{2}T\widehat{\EuScript{T}}) := \mathcal{P}_{1}(\text{\large$\otimes$}^{2}T\widehat{\EuScript{T}})\oplus\text{span}\left\{\boldsymbol{r}^{\widehat{\EuScript{T}},\widehat{\EuScript{T}}}_{I,J}\right\}_{I,J=1,2,3}$ for $\boldsymbol{r}^{\widehat{\EuScript{T}},\widehat{\EuScript{T}}}_{I,J} \in \mathcal{P}^{-}_{3}(\text{\large$\otimes$}^{2}T\widehat{\EuScript{T}})$, and $\overline{\text{\textit{\textSigma}}}^{\widehat{\EuScript{T}},\mathbf{c}}$ is the union of $\text{\textit{\textSigma}}^{\widehat{\EuScript{T}},\mathbf{c}}$ for $r=1$ and $\left\{\phi^{\widehat{\EuScript{T}},\widehat{\EuScript{T}}}_{I,J}\right\}_{I,J=1,2,3}\subset\text{\textit{\textSigma}}^{\widehat{\EuScript{T}},\mathbf{c}-}$ for $r=3$. The schematic diagram of (\ref{RFEs_m}) is illustrated in Figure \ref{CSFEMs3D}.
Moreover, the shape functions for the finite element of $\boldsymbol{U}$ in (\ref{RFEs_m}) are given by (\ref{sf1}) for $r=2$ and $m =10$, and the shape function for the finite element of $p$ in (\ref{RFEs_m}) is simply $t^{\widehat{\EuScript{T}}}=1$. We use the results of \citet{AFW2009} to provide the explicit expression for the shape functions for the finite elements of $\boldsymbol{K}$ and $\boldsymbol{P}$. 
The finite element of $\boldsymbol{K}$ in (\ref{RFEs_m}) has $6$ shape functions associated to each edge $\widehat{\EuScript{E}}_{k}$ of $\widehat{\EuScript{T}}$ and $9$ shape functions associated to $\widehat{\EuScript{T}}$. Let us ignore the superscript of $l^{1}_i, i=1,2,3,4$ in (\ref{lagFun3}) and consider $\nabla l_i= \begin{bmatrix} \partial l_i/\partial \xi^1  & \partial l_i/\partial \xi^2 & \partial l_i/\partial \xi^3 \end{bmatrix}$ as a row vector. Then, for an edge $\widehat{\EuScript{E}}_{k}$ joining two vertices $i$ and $j$ as shown in Figure \ref{RefElSch}, the $6$ shape functions $\boldsymbol{r}_{I,2}^{\widehat{\EuScript{T}},\widehat{\EuScript{E}}_{k}},\boldsymbol{r}_{I,3}^{\widehat{\EuScript{T}},\widehat{\EuScript{E}}_{k}}, I=1,2,3$ are obtained using (\ref{sfc}) and the following vector-valued shape functions for NE2 of order $1$ \citep{AFW2009}:
%-----------------------------
\begin{equation}\nonumber
  \boldsymbol{v}_{1}^{\widehat{\EuScript{T}},\widehat{\EuScript{E}}_{k}} = l_i\nabla l_j, \quad \boldsymbol{v}_{2}^{\widehat{\EuScript{T}},\widehat{\EuScript{E}}_{k}} = l_j\nabla l_i. 
\end{equation}
 %-----------------------------
The $9$ remaining shape functions $\boldsymbol{r}^{\widehat{\EuScript{T}},\widehat{\EuScript{T}}}_{I,1},\boldsymbol{r}^{\widehat{\EuScript{T}},\widehat{\EuScript{T}}}_{I,2},\boldsymbol{r}^{\widehat{\EuScript{T}},\widehat{\EuScript{T}}}_{I,3},I=1,2,3$ are obtained similar to (\ref{sfc}) and using the following vector-valued shape functions for NE1 of order $3$ \citep{AFW2009}:
 %-----------------------------
\begin{equation}\nonumber
  \boldsymbol{v}_{1}^{\widehat{\EuScript{T}},\widehat{\EuScript{T}}} = l_3 l_4 \boldsymbol{w}_{12}, \quad  \boldsymbol{v}_{2}^{\widehat{\EuScript{T}},\widehat{\EuScript{T}}} = l_2 l_4 \boldsymbol{w}_{13}, \quad \boldsymbol{v}_{3}^{\widehat{\EuScript{T}},\widehat{\EuScript{T}}} = l_2 l_3 \boldsymbol{w}_{14},
\end{equation}
 %-----------------------------
 where $\boldsymbol{w}_{ij} = l_i\nabla l_j-l_j\nabla l_i$. The finite element of $\boldsymbol{P}$ in (\ref{RFEs_m}) has $3$ shape functions $\boldsymbol{s}_{I}^{\widehat{\EuScript{T}},\widehat{\EuScript{F}}_{l}}, I=1,2,3$ associated to each face $\widehat{\EuScript{F}}_{l}$ of $\widehat{\EuScript{T}}$ that contains the three vertices $i$, $j$, and $k$ according to Figure \ref{RefElSch}. These shape functions are obtained similar to (\ref{sfc}) and using the following vector-valued shape function for NF1 of order $1$\citep{AFW2009}:
  %-----------------------------
\begin{equation}\nonumber
\boldsymbol{u}^{\widehat{\EuScript{T}},\widehat{\EuScript{F}}_{l}} = l_i\nabla l_j \times \nabla l_k - l_j\nabla l_i \times \nabla l_k+ l_k\nabla l_i \times \nabla l_j.
\end{equation}
 %-----------------------------

%-----------------------------
\newcommand{\TT}{ \boldsymbol{\mathsf{T}} } 
%-----------------------------
Next we explain how to calculate the shape functions in an arbitrary element in a mesh using the shape functions of the reference finite elements. 
 Let $\mathcal{B}_{h}$ be a triangulation of the reference configuration $\mathcal{B}$ consisting of arbitrary tetrahedra $\EuScript{T}$ such that the intersection of any two distinct tetrahedra is either empty or a common face/edge/vertex of each. The discretization parameter $h$ is defined as $h:={\max}\,\mathrm{diam}\,\EuScript{T},\, \forall {\EuScript{T}\in\mathcal{B}_{h}}$.
We define a local ordering for vertices of each $\EuScript{T}\in\mathcal{B}_h$ by assigning the numbers $1,2,3,4$ to them. 
We then denote the Cartesian coordinates of the $i$-th vertex of $\EuScript{T}$ by a column vector $\boldsymbol{\mathsf{X}}^\EuScript{T}_i = [ \mathsf{X}^{1,\EuScript{T}}_{i} ~ \mathsf{X}^{2,\EuScript{T}}_{i} ~  \mathsf{X}^{3,\EuScript{T}}_{i} ]^\mathsf{T}$ and define the following affine transformation:
%-----------------------------
\begin{equation}\label{affineMap}
\TT_\EuScript{T}: \widehat{\EuScript{T}} \longrightarrow \EuScript{T},\quad
 \TT_\EuScript{T}(\boldsymbol{\xi}) :=
 \jac \boldsymbol{\xi} +\boldsymbol{\mathsf{X}}^\EuScript{T}_1,
\end{equation}
%-----------------------------
where $\jac  = [ \boldsymbol{\mathsf{X}}^\EuScript{T}_2-\boldsymbol{\mathsf{X}}^\EuScript{T}_1  ~ \boldsymbol{\mathsf{X}}^\EuScript{T}_3-\boldsymbol{\mathsf{X}}^\EuScript{T}_1 ~\boldsymbol{\mathsf{X}}^\EuScript{T}_4-\boldsymbol{\mathsf{X}}^\EuScript{T}_1 ]_{3\times3}$.
Note that $\TT_\EuScript{T}$ is bijective and $\jac$ is invertible.
For an element $\EuScript{T}\in\mathcal{B}_{h}$, we denote its edges by $\EuScript{E}^\EuScript{T}_i=\TT_\EuScript{T}(\widehat{\EuScript{E}}_i)$, $i=1,2,\dots,6$, and its faces by $\EuScript{F}^\EuScript{T}_i=\TT_\EuScript{T}(\widehat{\EuScript{F}}_i)$, $i=1,2,3,4$. We assume that $\EuScript{E}^\EuScript{T}_i$ inherits the orientation of its reference counterpart $\widehat{\EuScript{E}}_i$. Moreover, the tangent vector $\boldsymbol{\mathsf{t}}_i$ defined on $\EuScript{E}^\EuScript{T}_i$ inherits the orientation of $\EuScript{E}^\EuScript{T}_i$, and the normal vector on $\EuScript{F}^\EuScript{T}_l$ containing three edges $\EuScript{E}^\EuScript{T}_i$, $\EuScript{E}^\EuScript{T}_j$, and $\EuScript{E}^\EuScript{T}_k$ such that $i < j< k$ is defined as $\boldsymbol{\mathsf{n}}_l=\boldsymbol{\mathsf{t}}_i\times\boldsymbol{\mathsf{t}}_j$. One can show that $\boldsymbol{\mathsf{t}}_i = \jac \hat{\boldsymbol{\mathsf{t}}}_i$ and $\boldsymbol{\mathsf{n}}_i =\mathrm{det}\,\jac \jac^{\mathsf{-T}} \hat{\boldsymbol{\mathsf{n}}}_i$. For efficient assembly of the finite elements of $\boldsymbol{K}\in H^{\mathbf{c}}$ and $\boldsymbol{P}\in H^{\mathbf{d}}$, we use the numbering scheme discussed in \citep{RaKiLo2012}. Using this scheme, one first assumes that every vertex in a mesh $\mathcal{B}_{h}$ has a distinct global number and then the local ordering of four vertices of every tetrahedron in that mesh $\EuScript{T}\in\mathcal{B}_{h}$ agree with the \emph{ascending} order of the global numbers of its four vertices. Considering the edge orientations of the reference element shown in Figure \ref{RefElSch}, the orientation of every edge in the mesh is from a vertex with a smaller global number to a vertex with a larger global number. The advantage of this scheme is that the orientation of a common edge between elements in a mesh is uniquely defined and is identical to that of the edge in any of those elements. It follows that some elements in a mesh sharing a common edge have an identical tangent vector on that edge, and any two elements with a common face have an identical normal vector on that face. For an illustration of this, see \citep[Figue 5.2]{RaKiLo2012}. Note that using this scheme, the normal vectors of some of the exterior faces of the mesh are \emph{not} pointed outward, and $\st = \mathrm{sign}\left( \mathrm{det}\,\boldsymbol{\mathsf{J}}_{\EuScript{T}} \right)$ can be either $1$ or $-1$.

Consider the following mappings:
%-----------------------------
\begin{equation}\label{maps}
\begin{aligned}
&\TT^{1}_\EuScript{T}: C^0(T\widehat{\EuScript{T}}) \longrightarrow C^0(T\EuScript{T}),\quad
\TT^{1}_\EuScript{T}(\widehat{\boldsymbol{V}}) := \widehat{\boldsymbol{V}}\circ \TT^{-1}_\EuScript{T}, \\
%-----------------------------------------------------------------------------------------
&\TT^{\mathbf{c}}_\EuScript{T}: H^{\mathbf{c}}(T\widehat{\EuScript{T}}) \longrightarrow H^{\mathbf{c}}(T\EuScript{T}),\quad
\TT^{\mathbf{c}}_\EuScript{T}(\widehat{\boldsymbol{V}}) := \boldsymbol{\mathsf{J}}^{\mathsf{-T}}_{\EuScript{T}}\widehat{\boldsymbol{V}} \circ \TT^{-1}_\EuScript{T}, \\
%-----------------------------------------------------------------------------------------
&\TT^{\mathbf{d}}_\EuScript{T}: H^{\mathbf{d}}(T\widehat{\EuScript{T}}) \longrightarrow H^{\mathbf{d}}(T\EuScript{T}),\quad
\TT^{\mathbf{d}}_\EuScript{T}(\widehat{\boldsymbol{V}}) := \frac{1}{\mathrm{det}\,\boldsymbol{\mathsf{J}}_{\EuScript{T}}}\boldsymbol{\mathsf{J}}_{\EuScript{T}}\widehat{\boldsymbol{V}}\circ \TT^{-1}_\EuScript{T},\\
%-----------------------------------------------------------------------------------------
&\TT^{\ell}_\EuScript{T}: L^2(\widehat{\EuScript{T}}) \longrightarrow L^2(\EuScript{T}),\quad
\TT^{\ell}_\EuScript{T}(\hat{f}) := \hat{f}\circ \TT^{-1}_\EuScript{T},
%-----------------------------
\end{aligned}{}
\end{equation}
%-----------------------------
where $\TT^{\mathbf{c}}_\EuScript{T}$ and $\TT^{\mathbf{d}}_\EuScript{T}$ are known as the Piola transforms. For a $\binom{2}{0}$-tensor $\boldsymbol{T}$, one calculates the Piola transforms separately for each row:
%-----------------------------
\begin{equation}\nonumber
  \TT^{\mathbf{c}}_\EuScript{T}(\boldsymbol{T}) := \begin{bmatrix}  \TT^{\mathbf{c}}_\EuScript{T}(\overrightarrow{\boldsymbol{T}}_{\!\!1})^\mathsf{T} \\[4pt] \TT^{\mathbf{c}}_\EuScript{T}(\overrightarrow{\boldsymbol{T}}_{\!\!2})^\mathsf{T} \\[4pt] \TT^{\mathbf{c}}_\EuScript{T}(\overrightarrow{\boldsymbol{T}}_{\!\!3})^\mathsf{T} \end{bmatrix},
\end{equation}
%-----------------------------
and $\TT^{\mathbf{d}}_\EuScript{T}(\boldsymbol{T})$ is calculated similarly. 
Using \citep[Proposition 8]{FaYa2018}, (\ref{maps}), and the local shape functions in the reference element $\widehat{\EuScript{T}}$, one can obtain the local shape functions in any element $\EuScript{T}\in\mathcal{B}_h$ enabling one to locally interpolate the four field variables \mbox{$(\boldsymbol{U},\boldsymbol{K},\boldsymbol{P},p)$} over that element. In particular, the local shape functions for $\boldsymbol{U}$ are obtained as \mbox{$\boldsymbol{h}^{\EuScript{T}}_{k}=\TT^{1}_\EuScript{T}\left(\boldsymbol{h}^{\widehat{\EuScript{T}}}_{k}\right)$}; the local shape functions for $\boldsymbol{K}$ are $\boldsymbol{r}_{I,J}^{\EuScript{T},\EuScript{E}_{k}} = \TT^{\mathbf{c}}_{\EuScript{T}}\left( \boldsymbol{r}_{I,J}^{\widehat{\EuScript{T}},\widehat{\EuScript{E}}_{k}} \right)$, $\boldsymbol{r}_{I,J}^{\EuScript{T},\EuScript{F}_{l}} = \TT^{\mathbf{c}}_\EuScript{T}\left( \boldsymbol{r}_{I,J}^{\widehat{\EuScript{T}},\widehat{\EuScript{F}}_{l}}\right)$, and $\boldsymbol{r}_{I,J}^{\EuScript{T},\EuScript{T}} = \TT^{\mathbf{c}}_\EuScript{T}\left(\boldsymbol{r}_{I,J}^{\widehat{\EuScript{T}},\widehat{\EuScript{T}}}\right)$; the local shape functions for $\boldsymbol{P}$ are $\boldsymbol{s}_{I,J}^{\EuScript{T},\EuScript{F}_{l}} = \TT^{\mathbf{d}}_\EuScript{T}\left( \boldsymbol{s}_{I,J}^{\widehat{\EuScript{T}},\widehat{\EuScript{F}}_{l}} \right)$ and $\boldsymbol{s}_{I,J}^{\EuScript{T},{\EuScript{T}}} = \TT^{\mathbf{d}}_\EuScript{T}\left(\boldsymbol{s}_{I,J}^{\widehat{\EuScript{T}},\widehat{\EuScript{T}}} \right)$; and $t_i^{\EuScript{T}}=\TT^{\ell}_\EuScript{T}\left(t_i^{\widehat{\EuScript{T}}}\right)$ gives the local shape functions for $p$. Using \citep[Proposition 8]{FaYa2018} and (\ref{maps}), one can also obtain the local degrees of freedom for the finite elements of any element $\EuScript{T}\in\mathcal{B}_h$.  For example, considering (\ref{DOF_hd}), $\psi^{{\EuScript{T}},{\EuScript{F}}_{l}}_{I,J}(\boldsymbol{T})=\left(\psi^{\widehat{\EuScript{T}},\widehat{\EuScript{F}}_{l}}_{I,J} \circ {\TT^{\mathbf{d}}_\EuScript{T}}^{-1}\right)(\boldsymbol{T})$ and $\psi^{{\EuScript{T}},\EuScript{T}}_{I,J}(\boldsymbol{T})=\left(\psi^{\widehat{\EuScript{T}},\widehat{\EuScript{T}}}_{I,J}\circ {\TT^{\mathbf{d}}_\EuScript{T}}^{-1}\right)(\boldsymbol{T})$ are the degrees of freedom for the finite element of $\EuScript{T}$ that we use for $\boldsymbol{P}$. The other degrees of freedom can be written similarly using their reference counterparts. In this work, the traction boundary conditions are imposed weakly through (\ref{traction}); one does not need to impose them directly by calculating the related degrees of freedom on the boundary of the mesh. Thus, in practice, all degrees of freedom, even those on the boundary of the mesh, are obtained by solving the final discrete system; calculating their explicit expressions are not required.

% %-----------------------------
% %-----------------------------
\subsection{Finite Element Spaces}\label{FES} 
\newcommand{\sif}{ \EuScript{F}_h^i }                         
\newcommand{\sie}{ \EuScript{E}_h^i }  

Next, some conforming finite element spaces are introduced in order to discretize (\ref{ElasMixF_1}) and (\ref{ElasMixF_2}). Let $\sif$ be the set of all \emph{interior} faces of a 3D mesh $\mathcal{B}_h$. Given a face $\EuScript{F}\in\sif$, there are two elements $\EuScript{T},\EuScript{T}'\in\mathcal{B}_{h}$ such that $\EuScript{F} = \EuScript{T}\cap\EuScript{T}^\prime$. Suppose $\boldsymbol{V}$ is a vector-valued function and $\boldsymbol{T}$ is a tensor-valued function both defined on $\mathcal{B}_h$ with limits on both sides of $\forall\EuScript{F}\in\sif$. We define the following notions of jump across a face $\EuScript{F}\in\sif$:
%-----------------------------
\begin{equation}\label{jumps}
  \llbracket\boldsymbol{V}\rrbracket_{\EuScript{F}} 
  := \boldsymbol{V}_{\EuScript{T}^\prime}-\boldsymbol{V}_{\EuScript{T}}, \quad
  \llbracket \mathsf{t}\boldsymbol{T}\rrbracket_{\EuScript{F}} 
  := \left(\boldsymbol{T}_{\EuScript{T}^\prime}-\boldsymbol{T}_{\EuScript{T}} \right) \boldsymbol{\mathsf{t}},  \quad
  \llbracket \mathsf{n}\boldsymbol{T}\rrbracket_{\EuScript{F}} 
  := \left(\boldsymbol{T}_{\EuScript{T}^\prime}-\boldsymbol{T}_{\EuScript{T}} \right) \boldsymbol{\mathsf{n}},
\end{equation}
%-----------------------------
where $\boldsymbol{V}_{\EuScript{T}} := \boldsymbol{V}|_{\EuScript{T}}$, $\boldsymbol{T}_{\EuScript{T}} := \boldsymbol{T}|_{\EuScript{T}}$, and $\boldsymbol{V}_{\EuScript{T^\prime}}$ and $\boldsymbol{T}_{\EuScript{T}^\prime}$ are defined similarly. $\boldsymbol{\mathsf{t}}$ ($\boldsymbol{\mathsf{n}}$) is a unit vector tangent (normal) to $\EuScript{F}$. We write $\llbracket \mathsf{t}\boldsymbol{T}\rrbracket_{\EuScript{F}}=\boldsymbol{0}$ ($\llbracket \mathsf{n}\boldsymbol{T}\rrbracket_{\EuScript{F}}=\boldsymbol{0}$), if the jump is zero for any unit vector $\boldsymbol{\mathsf{t}}$ ($\boldsymbol{\mathsf{n}}$) on $\EuScript{F}$. Note that all the above jumps are vector-valued functions in 3D. Consider the following finite element spaces:
%-----------------------------
\begin{equation}\nonumber
\begin{aligned}
&V_{h,r}^{1} := \left\{\boldsymbol{V}_h \in L^2(T\mathcal{B}_h) : \forall\EuScript{T}\in\mathcal{B}_h,~ \boldsymbol{V}_h|_{\EuScript{T}}\in\mathcal{P}_{r}(T\EuScript{T}),~ \forall\EuScript{F}\in \sif,~\llbracket\boldsymbol{V}_h\rrbracket_{\EuScript{F}}=\boldsymbol{0}\right\}, \\ 
%--------------------------------------------
&V_{h,r}^{\mathbf{c}{-}} := \left\{\boldsymbol{T}_h \in L^2(\otimes^2 T\mathcal{B}_h): \forall\EuScript{T}\in\mathcal{B}_h,~ \boldsymbol{T}_h|_{\EuScript{T}} \in \mathcal{P}^{-}_{r}(\text{\large$\otimes$}^{2}T\EuScript{T}),~ \forall\EuScript{F}\in \sif,~\llbracket\mathsf{t}\boldsymbol{T}_h\rrbracket_{\EuScript{F}}=\boldsymbol{0}\right\},\\
%--------------------------------------------
&V_{h,r}^{\mathbf{c}} := \left\{\boldsymbol{T}_h \in L^2(\otimes^2 T\mathcal{B}_h): \forall\EuScript{T}\in\mathcal{B}_h,~ \boldsymbol{T}_h|_{\EuScript{T}} \in \mathcal{P}_{r}(\text{\large$\otimes$}^{2}T\EuScript{T}),~ \forall\EuScript{F}\in \sif,~\llbracket\mathsf{t}\boldsymbol{T}_h\rrbracket_{\EuScript{F}}=\boldsymbol{0}\right\},\\
%--------------------------------------------
&V_{h,r}^{\mathbf{d}{-}} := \left\{\boldsymbol{T}_h \in L^2(\otimes^2 T\mathcal{B}_h) : \forall\EuScript{T}\in\mathcal{B}_h,~ \boldsymbol{T}_h|_{\EuScript{T}} \in \mathcal{P}^{\ominus}_{r}(\text{\large$\otimes$}^{2}T\EuScript{T}),~ \forall\EuScript{F}\in \sif,~\llbracket\mathsf{n}\boldsymbol{T}_h\rrbracket_{\EuScript{F}}=\boldsymbol{0}\right\},\\
%--------------------------------------------
&V_{h,r}^{\mathbf{d}} := \left\{\boldsymbol{T}_h \in L^2(\otimes^2 T\mathcal{B}_h) : \forall\EuScript{T}\in\mathcal{B}_h,~ \boldsymbol{T}_h|_{\EuScript{T}} \in \mathcal{P}_{r}(\text{\large$\otimes$}^{2}T\EuScript{T}),~ \forall\EuScript{F}\in \sif,~\llbracket\mathsf{n}\boldsymbol{T}_h\rrbracket_{\EuScript{F}}=\boldsymbol{0}\right\},\\
%--------------------------------------------
&V_{h,r}^{\ell} := \left\{f_h \in L^2(\mathcal{B}_h) : \forall\EuScript{T}\in\mathcal{B}_h,~ f_{h}|_{\EuScript{T}}\in\mathcal{P}_{r}(\EuScript{T})\right\}.
\end{aligned}
\end{equation}
%-----------------------------
Note that the above sapces are conforming, i.e.,  $V_{h,r}^{1} \subset H^{1}(T\mathcal{B}_h)$, \mbox{$V_{h,r}^{\mathbf{c}{-}} \subset V_{h,r}^{\mathbf{c}}\subset H^{\mathbf{c}}(\mathcal{B}_h)$}, \mbox{$V_{h,r}^{\mathbf{d}{-}} \subset V_{h,r}^{\mathbf{d}}\subset H^{\mathbf{d}}(\mathcal{B}_h)$}, and $V_{h,r}^{\ell}\subset L^2(\mathcal{B}_h)$. Recalling the definition of $\overline{\mathcal{P}}_{3}(\text{\large$\otimes$}^{2}T\EuScript{T})$ in (\ref{RFEs_m}), we define 
% %-----------------------------
\begin{equation}\label{Vc3}
	\overline{V}_{h,3}^{\mathbf{c}} := V_{h,1}^{\mathbf{c}} \oplus\left\{\boldsymbol{T}_h \in L^2(\otimes^2 T\mathcal{B}_h): \forall\EuScript{T}\in\mathcal{B}_h,~ \boldsymbol{T}_h|_{\EuScript{T}}\in\text{span}\left\{\boldsymbol{r}^{\EuScript{T},\EuScript{T}}_{I,J}\right\}_{I,J=1,2,3}\subset\mathcal{P}^{-}_{3}(\text{\large$\otimes$}^{2}T\EuScript{T}) \right\}.
\end{equation}
% %-----------------------------
Note that $\big(\boldsymbol{r}^{\EuScript{T},\EuScript{T}}_{I,J}\boldsymbol{\mathsf{t}}\big)\big|_{\EuScript{F}}=\boldsymbol{0}$ for $I,J=1,2,3$, and for every $\EuScript{F}$ in the mesh and $\overline{V}_{h,3}^{\mathbf{c}}\subset H^{\mathbf{c}}(\mathcal{B}_h)$.  

% %-----------------------------
% %-----------------------------
\subsection{ The Compatible-Strain Mixed Finite Element Methods}\label{Sec_CSFEM} 

We write the following mixed finite element method for the boundary-value problem of \emph{incompressible} nonlinear elastostatics (\ref{ElasMixF_1}) based on the reference elements (\ref{RFEs_m}) and the corresponding approximation spaces \mbox{($V_{h,2}^{1}$, $\overline{V}_{h,3}^{\mathbf{c}}$, $V_{h,1}^{\mathbf{d}-}$, $V_{h,0}^{\ell}$)} defined in the previous section:

\bigskip
\begin{minipage}{0.95\textwidth}{\textit{Given a body force $\boldsymbol{B}$ of $L^{2}$-class, a boundary displacement $\overline{\boldsymbol{U}}$ on $\Gamma_{d}$ of $H^{1/2}$-class, a boundary traction $\overline{\boldsymbol{T}}$ on $\Gamma_{t}$ of $L^{2}$-class, and a stability constant $\alpha\geq0$,  find $(\boldsymbol{U}_h,\boldsymbol{K}_h,\boldsymbol{P}_h,p_h)\in V_{h,2}^{1}(\Gamma_{d},\overline{\boldsymbol{U}})\times \overline{V}_{h,3}^{\mathbf{c}} \times V_{h,1}^{\mathbf{d}-}\times V_{h,0}^{\ell}$ such that}}
%----------------------
\begin{equation}\label{FEh}
\begin{alignedat}{3}
\llangle\boldsymbol{P}_h,\mathbf{grad}\,\boldsymbol{\Upsilon}_h\rrangle+\alpha s_{h_1}\!\left(\boldsymbol{U}_h,\boldsymbol{K}_h,\boldsymbol{\Upsilon}_h\right) &= f_h(\boldsymbol{\Upsilon}_h), &\quad &\forall \boldsymbol{\Upsilon}_h\in V_{h,2}^{1}(\Gamma_{d}),\\
%--------------------------------------------
\llangle \widetilde{\boldsymbol{P}}(\boldsymbol{K}_h),\boldsymbol{\kappa}_h\rrangle - \llangle\boldsymbol{P}_h,\boldsymbol{\kappa}_h\rrangle + \llangle p_h\boldsymbol{Q}(\boldsymbol{K}_h),\boldsymbol{\kappa}_h\rrangle+\alpha s_{h_2}\!\left(\boldsymbol{U}_h,\boldsymbol{K}_h,\boldsymbol{\kappa}_h\right)&= 0,& &\forall \boldsymbol{\kappa}_h\in \overline{V}_{h,3}^{\mathbf{c}},\\
%--------------------------------------------
\llangle \mathbf{grad}\,\boldsymbol{U}_h, \boldsymbol{\pi}_h \rrangle - \llangle \boldsymbol{K}_h,\boldsymbol{\pi}_h\rrangle &= 0, & &\forall \boldsymbol{\pi}_h\in V_{h,1}^{\mathbf{d}-},\\
 %--------------------------------------------
\llangle C(J_h), q_h\rrangle &= 0, & &\forall q_h\in V_{h,0}^{\ell},
\end{alignedat}
\end{equation}
%----------------------
\textit{where} 
%----------------------
\begin{equation}\nonumber
 f_h(\boldsymbol{\Upsilon}_h) = \llangle \rho_{0}\boldsymbol{B},\boldsymbol{\Upsilon}_h \rrangle + \int_{\Gamma_{t}}\langle \overline{\boldsymbol{T}},\boldsymbol{\Upsilon}_h\rangle\,dA,
\end{equation}
%----------------------
\textit{and}
%----------------------
\begin{equation}\nonumber
\begin{alignedat}{3}
s_{h_1}\!\left(\boldsymbol{U}_h,\boldsymbol{K}_h,\boldsymbol{\Upsilon}_h\right) &= \llangle\mathbf{grad}\,\boldsymbol{U}_h,\mathbf{grad}\,\boldsymbol{\Upsilon}_h\rrangle-\llangle\boldsymbol{K}_h,\mathbf{grad}\,\boldsymbol{\Upsilon}_h\rrangle,\\
%--------------------------------------------
s_{h_2}\!\left(\boldsymbol{U}_h,\boldsymbol{K}_h,\boldsymbol{\kappa}_h\right) &= \llangle\boldsymbol{K}_h,\boldsymbol{\kappa}_h\rrangle-\llangle\mathbf{grad}\,\boldsymbol{U}_h,\boldsymbol{\kappa}_h\rrangle.
\end{alignedat}
\end{equation}
\end{minipage}
\bigskip

\noindent Similarly, one can define the following mixed finite element method for the boundary-value problem of \mbox{\emph{compressible}} nonlinear elastostatics (\ref{ElasMixF_2}):

\bigskip
\begin{minipage}{0.95\textwidth}{\textit{Given ($\boldsymbol{B},\overline{\boldsymbol{U}},{\boldsymbol{T}}$) and $\alpha\geq0$, find $(\boldsymbol{U}_h,\boldsymbol{K}_h,\boldsymbol{P}_h)\in V_{h,2}^{1}(\Gamma_{d},\overline{\boldsymbol{U}})\times \overline{V}_{h,3}^{\mathbf{c}} \times V_{h,1}^{\mathbf{d}-}$ such that}}
\begin{equation}\label{FEh_2}
\begin{alignedat}{3}
\llangle\boldsymbol{P}_h,\mathbf{grad}\,\boldsymbol{\Upsilon}_h\rrangle + \alpha s_{h_1}\!\left(\boldsymbol{U}_h,\boldsymbol{K}_h,\boldsymbol{\Upsilon}_h\right)&=f_h(\boldsymbol{\Upsilon}_h), &\quad &\forall \boldsymbol{\Upsilon}_h\in V_{h,2}^{1}(\Gamma_{d}),\\
\llangle \widehat{\boldsymbol{P}}(\boldsymbol{K}_h),\boldsymbol{\kappa}_h\rrangle - \llangle\boldsymbol{P}_h,\boldsymbol{\kappa}_h\rrangle + \alpha s_{h_2}\!\left(\boldsymbol{U}_h,\boldsymbol{K}_h,\boldsymbol{\kappa}_h\right)&= 0,& &\forall \boldsymbol{\kappa}_h\in \overline{V}_{h,3}^{\mathbf{c}},\\
\llangle \mathbf{grad}\,\boldsymbol{U}_h, \boldsymbol{\pi}_h \rrangle - \llangle \boldsymbol{K}_h,\boldsymbol{\pi}_h\rrangle &= 0, & &\forall \boldsymbol{\pi}_h\in V_{h,1}^{\mathbf{d}-}.
\end{alignedat}
\end{equation}
\end{minipage}
\bigskip
\\

\noindent The above mixed finite element methods are extensions of the \emph{compatible-strain mixed finite element methods (\mbox{CSFEMs)}} introduced in \citep{FaYa2018} and \citep{AnFSYa2017} to three dimensional problems.  

\begin{rem}[Compatibility of Strain and Continuity of Traction]\label{feature}\leavevmode
\begin{enumerate}[$(i)$]
\item Recalling (\ref{jumps}) and considering a displacement gradient field $\boldsymbol{K}_h$ on a 3D mesh $\mathcal{B}_{h}$, the Hadamard jump condition is defined as the zero jump $\llbracket\mathsf{t}\boldsymbol{K}_h\rrbracket_{\EuScript{F}}=\boldsymbol{0}$ for any tangent vector on ${\EuScript{F}}$ and the three edges enclosing ${\EuScript{F}}$. A necessary condition for the existence of $\boldsymbol{U}_h\in  H^{1}(T\mathcal{B}_h)$ such that $\boldsymbol{K}_h = \mathbf{grad}\,\boldsymbol{U}_h$ is that the Hadamard jump condition holds $\forall\EuScript{F}\in\EuScript{F}_h^i$ \citep{AngoshtariYavari2016}. By construction, the mixed finite element methods (\ref{FEh}) and (\ref{FEh_2}) satisfy this necessary condition as $\boldsymbol{K}_h\in\overline{V}_{h,3}^{\mathbf{c}}\subset H^{\mathbf{c}}(\mathcal{B}_h)$.

\item Let $\boldsymbol{P}_h$ be a stress field on a 3D mesh $\mathcal{B}_{h}$. The localization of the balance of linear momentum requires that $\llbracket\mathsf{n}\boldsymbol{P}_h\rrbracket_{\EuScript{F}}=\boldsymbol{0}$, $\forall\EuScript{F}\in\EuScript{F}_h^i$, that is, the traction vector associated with $\boldsymbol{P}_h$ is continuous across all the internal faces of $\mathcal{B}_{h}$. By construction, (\ref{FEh}) and (\ref{FEh_2}) satisfy this requirement as $\boldsymbol{P}_h\in V_{h,1}^{\mathbf{d}-}\subset H^{\mathbf{d}}(\mathcal{B}_h)$. 
\end{enumerate}
\end{rem}

% %-----------------------------
% %-----------------------------
\newcommand{\ivek}[1] {[#1]}
\newcommand{\bigiVek}[2]{#1[ #2 #1]} 
% %-----------------------------
\subsection{Matrix Formulation of CSFEMs}

The procedure of obtaining the matrix formulation of (\ref{FEh}) or (\ref{FEh_2}) is similar to that of 2D CSFEMs, which we discussed in detail in \cite[\S 3.4]{FaYa2018}. In this section, we only write the final formulations needed for the implementation and studying the stability of the 3D CSFEMs. One can write  (\ref{FEh}) in the following matrix form
%------------------------
\begin{equation}\label{FEd}
  \mathbb{K}_h\mathbb{Q}_h+\mathbb{N}_h(\mathbb{Q}_h) = \mathbb{F}_h,
\end{equation}
%------------------------
where
% ------------------------
\begin{equation}\nonumber
\mathbb{K}_h =
\begin{bmatrix}
\begin{array}{llll}
  \alpha\gm{M}{11}            &\alpha\gm{M}{1\mathbf{c}}          &\gm{K}{1\mathbf{d}}           &\bz\\
  \alpha\gm{M}{1\mathbf{c}}   &\alpha\gm{M}{\mathbf{c}\mathbf{c}} &\gm{K}{\mathbf{c}\mathbf{d}}  &\bz \\
  ~~\gm{K}{\mathbf{d}1}       &~~\gm{K}{\mathbf{d}\mathbf{c}}     &\bz                           &\bz \\
  ~~\bz                       &~~\bz          					  &\bz                           &\bz  
\end{array}
\end{bmatrix}
\quad
% ------
\mathbb{Q}_h =
\begin{bmatrix} \gm{q}{1} \\  \gm{q}{\mathbf{c}} \\ \gm{q}{\mathbf{d}} \\ \gm{q}{\ell}  \end{bmatrix},
% ------
\quad
\mathbb{N}_h(\mathbb{Q}_h) =
\begin{bmatrix} \bz \\  \gm{N}{\mathbf{c}}(\gm{q}{\mathbf{c}},\gm{q}{\ell}) \\ \bz \\ \gm{N}{\ell}(\gm{q}{\mathbf{c}}) \end{bmatrix},
% ------
\quad
\mathbb{F}_h =
\begin{bmatrix}
\gm{F}{1}+\boldsymbol{\mathsf{F}}^{1}_{\Gamma_{t}}
\\ \bz \\ \bz \\  \bz
\end{bmatrix}.
\end{equation}
%------------------------
The column vectors $\gm{q}{1}$, $\gm{q}{\mathbf{c}}$, $\gm{q}{\mathbf{d}}$, $\gm{q}{\ell}$ contain all the unknown global degrees of freedom for $\boldsymbol{U}$, $\boldsymbol{K}$, $\boldsymbol{P}$, and $p$, respectively. Let $n$ be the total number of nodes in $\mathcal{B}_h$ except those lying on the displacement boundary $\Gamma_d$, and let $n_\EuScript{E}$, $n_\EuScript{F}$, and $n_\EuScript{T}$ be the total numbers of edges, faces, and elements in $\mathcal{B}_h$, respectively.
The number of degrees of freedom in $\gm{q}{1}$, $\gm{q}{\mathbf{c}}$, $\gm{q}{\mathbf{d}}$, and $\gm{q}{\ell}$ is $3n$, $6n_\EuScript{E}+9n_\EuScript{T}$, $3n_\EuScript{F}$, and $n_\EuScript{T}$, respectively, see Figure \ref{CSFEMs3D}. The total number of degrees of freedom is $N = 3n+6n_\EuScript{E}+3n_\EuScript{F}+10n_\EuScript{T}$. The size of $\mathbb{K}_h$ is $N \times N$, and the size of $\mathbb{Q}_h$, $\mathbb{N}_h$, and $\mathbb{F}_h$ is $N \times 1$. Let us define $\boldsymbol{V}_{\EuScript{T}}:=\boldsymbol{V}_h|_{\EuScript{T}}$ and $\boldsymbol{T}_{\EuScript{T}}:=\boldsymbol{T}_h|_{\EuScript{T}}$ for any discrete vector field $\boldsymbol{V}_h$ and any discrete tensor field $\boldsymbol{T}_h$. 
The global \emph{sparse} matrices $\gm{M}{11}$, $\gm{M}{1\mathbf{c}}$, $\gm{M}{\mathbf{c}\mathbf{c}}$, $\gm{K}{1\mathbf{d}}$, and $\gm{K}{\mathbf{c}\mathbf{d}}$ in $\mathbb{K}_h$ are the result of assembling a set of $n_\EuScript{T}$ local matrices that are obtained from calculating respectively $\llangle\mathbf{grad}\,\boldsymbol{U}_{\EuScript{T}},\mathbf{grad}\,\boldsymbol{\Upsilon}_{\EuScript{T}}\rrangle$, 
$-\llangle\boldsymbol{K}_{\EuScript{T}},\mathbf{grad}\,\boldsymbol{\Upsilon}_{\EuScript{T}}\rrangle$,
$\llangle\boldsymbol{K}_{\EuScript{T}},\boldsymbol{\kappa}_{\EuScript{T}}\rrangle$,
$\llangle\boldsymbol{P}_{\EuScript{T}},\mathbf{grad}\,\boldsymbol{\Upsilon}_{\EuScript{T}}\rrangle$, and 
$-\llangle\boldsymbol{P}_{\EuScript{T}},\boldsymbol{\kappa}_{\EuScript{T}}\rrangle$, $\forall\EuScript{T}\in\mathcal{B}_h$.
Moreover, $\mathbb{K}_h$ is a symmetric matrix and $\gm{M}{\mathbf{c}1} = (\gm{M}{1\mathbf{c}} )^{\mathsf{T}}$, $\gm{K}{\mathbf{d}1} = (\gm{K}{1\mathbf{d}})^{\mathsf{T}}$, and $\gm{K}{\mathbf{d}\mathbf{c}} = (\gm{K}{\mathbf{c}\mathbf{d}})^{\mathsf{T}}$. 
For given $\gm{q}{\mathbf{c}}$ and $\gm{q}{\ell}$, one obtains the global vectors $\gm{N}{\mathbf{c}}(\gm{q}{\mathbf{c}},\gm{q}{\ell})$ and $\gm{N}{\ell}(\gm{q}{\mathbf{c}})$ in $\mathbb{N}_h$ by assembling a set of $n_\EuScript{T}$ local vectors that are obtained form calculating the nonlinear terms $\llangle \widetilde{\boldsymbol{P}}(\boldsymbol{K}_{\EuScript{T}})+p_{\EuScript{T}}\boldsymbol{Q}(\boldsymbol{K}_{\EuScript{T}}),\boldsymbol{\kappa}_{\EuScript{T}}\rrangle$ and $\llangle C(J_{\EuScript{T}}), q_{\EuScript{T}}\rrangle$, $\forall\EuScript{T}\in\mathcal{B}_h$, respectively. 
Similarly, for a given body force $\boldsymbol{B}$, one obtains $\gm{F}{1}$ in $\mathbb{F}_h$ by calculating $\llangle \rho_{0}\boldsymbol{B},\boldsymbol{\Upsilon}_{\EuScript{T}} \rrangle$, $\forall\EuScript{T}\in\mathcal{B}_h$. 
Finally, for a given traction $\overline{\boldsymbol{T}}$ on $\Gamma_t$, one obtains $\boldsymbol{\mathsf{F}}^{1}_{\Gamma_{t}}$ in $\mathbb{F}_h$ through assembling all the local vectors obtained from $\int_{\EuScript{F}^\EuScript{T}_t}\langle \overline{\boldsymbol{T}},\boldsymbol{\Upsilon}|_{\EuScript{F}^\EuScript{T}_t}\rangle \,dA$ for every face $\EuScript{F}^\EuScript{T}_t$ lying on $\Gamma_t$.  
See \cite[(3.21)-(3.33)]{FaYa2018} for details of calculating the local matrices and vectors in each element. 
To obtain the matrix form of (\ref{FEh_2}) for compressible solids, we modify (\ref{FEd}) by setting $p_h=0$ ($\gm{q}{\ell} = \boldsymbol{0}$) and removing the fourth row and the fourth column of $\mathbb{K}_h$ and the forth entries of $\mathbb{Q}_h$, $\mathbb{N}_h$, and $\mathbb{F}_h$. We also use $\widehat{\boldsymbol{P}}(\boldsymbol{K})$ instead of $\widetilde{\boldsymbol{P}}(\boldsymbol{K})$ in our calculations.

Using Newton's method, one can approximate the solution of the nonlinear equation (\ref{FEd}) iteratively using $\mathbb{Q}^{(i+1)}_h = \mathbb{Q}^{(i)}_h-\mathbb{K}^{-1}_{t_h}\left(\mathbb{Q}^{(i)}_h\right) \mathbb{R}_h\left(\mathbb{Q}^{(i)}_h\right)$, where $\mathbb{R}_h(\mathbb{Q}_h) = \mathbb{K}_h\mathbb{Q}_h +\mathbb{N}_h(\mathbb{Q}_h) - \mathbb{F}_h$ is the residual vector and $\mathbb{K}_{{t}_h}$ is the tangent stiffness matrix (Jacobian matrix) given by
%------------------------
\begin{equation}\label{tan_icmp}
\mathbb{K}_{{t}_h}(\gm{q}{\mathbf{c}},\gm{q}{\ell}) =
\begin{bmatrix}
\begin{array}{llll}
  \alpha\gm{M}{11}            &\alpha\gm{M}{1\mathbf{c}}                                        &\gm{K}{1\mathbf{d}}           &\bz\\
  \alpha\gm{M}{1\mathbf{c}}   &\alpha\gm{M}{\mathbf{c}\mathbf{c}} \! + \! \gm{H}{\mathbf{c}\mathbf{c}}(\gm{q}{\mathbf{c}},\gm{q}{\ell})  &\gm{K}{\mathbf{c}\mathbf{d}}  &\gm{H}{\mathbf{c}\ell}(\gm{q}{\mathbf{c}}) \\
  ~~\gm{K}{\mathbf{d}1}       &~~\gm{K}{\mathbf{d}\mathbf{c}}                                   &\bz                           &\bz \\
  ~~\bz                       &~~\gm{H}{\ell\mathbf{c}}(\gm{q}{\mathbf{c}})                                         &\bz                           &\bz 
\end{array}
\end{bmatrix}.
\end{equation}
% %-----------------------------
% %-----------------------------
The matrix $\gm{H}{\mathbf{c}\mathbf{c}}(\gm{q}{\mathbf{c}},\gm{q}{\ell})$ $\big(\gm{H}{\mathbf{c}\ell}(\gm{q}{\mathbf{c}})\big)$ contains the derivative of components of $\gm{N}{\mathbf{c}}(\gm{q}{\mathbf{c}},\gm{q}{\ell})$ in $\mathbb{N}_h$ with respect to components of $\gm{q}{\mathbf{c}}$ ($\gm{q}{\ell}$). Also, $\gm{H}{\ell\mathbf{c}}(\gm{q}{\mathbf{c}})$ contains the derivative of components of $\gm{N}{\ell}(\gm{q}{\mathbf{c}})$  in $\mathbb{N}_h$  with respect to components of $\gm{q}{\mathbf{c}}$. 
Linearizing $\llangle \widetilde{\boldsymbol{P}}(\boldsymbol{K})+p\boldsymbol{Q}(\boldsymbol{K}),\boldsymbol{\kappa}\rrangle$ in \eqref{ElasMixF_1}$_2$ at a given displacement gradient $\boldsymbol{K}^0 \in H^{\mathbf{c}}(\mathcal{B})$ and a given pressure $p^0 \in L^{2}(\mathcal{B})$ gives \mbox{$\llangle \boldsymbol{\widetilde{A}}(\boldsymbol{K}^0,p^0) \!:\! \boldsymbol{K},\boldsymbol{\kappa}\rrangle+\llangle p\boldsymbol{Q}(\boldsymbol{K}^0),\boldsymbol{\kappa}\rrangle$}, where  $ \boldsymbol{\widetilde{A}}$ is the elasticity tensor and $( \boldsymbol{\widetilde{A}} \!:\! \boldsymbol{K})^{IJ} = \widetilde{A}^{IJMN} K^{MN}$. Also, linearizing $\llangle C(J(\boldsymbol{K})), q\rrangle$  in \eqref{ElasMixF_1}$_4$ at $\boldsymbol{K}^0$ results in $\llangle \boldsymbol{Q}(\boldsymbol{K}^0)\!:\!\boldsymbol{K}, q\rrangle=\llangle \boldsymbol{K}, q\boldsymbol{Q}(\boldsymbol{K}^0)\rrangle$, where $\boldsymbol{Q}\!:\!\boldsymbol{K} ={Q}^{IJ}{K}^{IJ}$. 
After discretization, for given $\boldsymbol{K}^0_h$ and $p^0_h$ (or $\gm{q}{\mathbf{c}}$ and $\gm{q}{\ell}$), one can calculate the local matrices for $\llangle \boldsymbol{\widetilde{A}}_{\EuScript{T}}(\boldsymbol{K}^0_{\EuScript{T}},p^0_{\EuScript{T}}) \!:\! \boldsymbol{K}_{\EuScript{T}},\boldsymbol{\kappa}_{\EuScript{T}}\rrangle$, $\llangle p_{\EuScript{T}}\boldsymbol{Q}(\boldsymbol{K}^0_{\EuScript{T}}),\boldsymbol{\kappa}_{\EuScript{T}}\rrangle$, and $\llangle \boldsymbol{K}_{\EuScript{T}}, q_{\EuScript{T}}\boldsymbol{Q}(\boldsymbol{K}^0_{\EuScript{T}})\rrangle$ $\forall\EuScript{T}\in\mathcal{B}_h$ to assemble $\gm{H}{\mathbf{c}\mathbf{c}}$, $\gm{H}{\mathbf{c}\ell}$, and $\gm{H}{\ell\mathbf{c}}$, respectively. For more details, see \cite[(3.36)]{FaYa2018}, and note that $\gm{H}{\mathbf{c}\ell} = (\gm{H}{\ell\mathbf{c}})^{\mathsf{T}}$. 
For compressible solids, (\ref{tan_icmp}) simplifies to 
%------------------------
\begin{equation}\label{tan_comp}
\mathbb{K}_{{t}_h}(\gm{q}{\mathbf{c}}) =
\begin{bmatrix}
\begin{array}{lll}
  \alpha\gm{M}{11}            &\alpha\gm{M}{1\mathbf{c}}                                        							&\gm{K}{1\mathbf{d}}\\          
  \alpha\gm{M}{1\mathbf{c}}   &\alpha\gm{M}{\mathbf{c}\mathbf{c}} \! + \! \widehat{\boldsymbol{\mathsf{H}}}_{h}^{\mathbf{c}\mathbf{c}}(\gm{q}{\mathbf{c}})  &\gm{K}{\mathbf{c}\mathbf{d}}\\ 
  ~~\gm{K}{\mathbf{d}1}       &~~\gm{K}{\mathbf{d}\mathbf{c}}                                   							&\bz                          
\end{array}
\end{bmatrix},
\end{equation}
% %-----------------------------
where $\widehat{\boldsymbol{\mathsf{H}}}_{h}^{\mathbf{c}\mathbf{c}}(\gm{q}{\mathbf{c}})$ is obtained by linearizing $\llangle \widehat{\boldsymbol{P}}(\boldsymbol{K}_h),\boldsymbol{\kappa}_h\rrangle$ in \eqref{FEh_2}$_{2}$.

We use neo-Hookean materials for our numerical examples. However, note that our formulation can use any elastic constitutive equation. For compressible solids, we use the energy function
%-----------------------------
\begin{equation}\label{Wcomp}
	\widehat{W}(I_1,I_3) = \frac{\mu}{2}\left(I_1-3\right)-\frac{\mu}{2}\ln{I_3}+\frac{\kappa}{8}(\ln{I_3})^2,
\end{equation}
%-----------------------------
where $\mu$ and $\kappa$ are the shear and bulk moduli at the ground state, respectively. Recalling that $\boldsymbol{F} = \boldsymbol{I}+\boldsymbol{K}$, the constitutive relation reads
%-----------------------------
\begin{equation}\nonumber
 \widehat{\boldsymbol{P}}(\boldsymbol{K}) = \mu\left(\boldsymbol{F}-\boldsymbol{F}^{-\mathsf{T}}\right)+\kappa\ln{J}\boldsymbol{F}^{-\mathsf{T}}.
\end{equation}
%-----------------------------
To calculate $\mathbb{K}_{t_h}$ defined in (\ref{tan_comp}), one needs to obtain the elasticity tensor $\boldsymbol{\widehat{A}}(\boldsymbol{K})$ by taking the derivative of components of $\widehat{\boldsymbol{P}}(\boldsymbol{K})$ with respect to components of $\boldsymbol{K}$. 
In the implementation (see \cite[(3.36)]{FaYa2018}), it is more convenient to represent the elasticity tensor as a matrix $\widehat{\boldsymbol{\mathsf{A}}}$, whose size is $9\times9$ in 3D. Let $\vek{\boldsymbol{T}}$ be a vector representation of a tensor $\boldsymbol{T}$, and let $\left[\boldsymbol{V}\right]_{\!\times}$ be a skew-symmetric matrix representing a vector $\boldsymbol{V}$ that are defined as
%-----------------------------
\begin{equation}\nonumber
\vek{\boldsymbol{T}} := {\begin{bmatrix} T^{11} &T^{12} &T^{13} &T^{21} &T^{22} &T^{23} &T^{31} &T^{32} &T^{33} \end{bmatrix}}^{\mathsf{T}},
  ~~\text{and}~~~
	\left[\boldsymbol{V}\right]_{\!\times} := 
  \begin{bmatrix}
   	~0    	&-V^3     &~~V^2 \\
   	~~V^3  	&~0        &-V^1 \\
   -V^2  	&~~V^1      &~0
  \end{bmatrix}.
\end{equation}
%-----------------------------
Considering (\ref{Wcomp}), one obtains
%-----------------------------
\begin{equation}\nonumber
\widehat{\boldsymbol{\mathsf{A}}}(\boldsymbol{K}) = 
\mu\boldsymbol{\mathsf{I}}+
(\mu-\kappa\ln{J}+\kappa)\bigVek{\big}{\boldsymbol{F}^{-\mathsf{T}} }\bigVek{\big}{\boldsymbol{F}^{-\mathsf{T}} }^\mathsf{T}
-\frac{\mu-\kappa\ln{J}}{J}\boldsymbol{\mathsf{S}}(\boldsymbol{F}),
\end{equation}
%-----------------------------
where $\boldsymbol{\mathsf{I}}$ is the $9\times9$ identity matrix and
%-----------------------------
\begin{equation}\nonumber
\boldsymbol{\mathsf{S}}(\boldsymbol{F}) := 
\begin{bmatrix}
\begin{array}{ccc}
   \bz &-\!\left[\overrightarrow{\boldsymbol{F}}_{\!\!3}\right]_{\!\times}  &~~\left[\overrightarrow{\boldsymbol{F}}_{\!\!2}\right]_{\!\times}\\
   ~~\left[\overrightarrow{\boldsymbol{F}}_{\!\!3}\right]_{\!\times}  &\bz  &-\!\left[\overrightarrow{\boldsymbol{F}}_{\!\!1}\right]_{\!\times} \\
   -\!\left[\overrightarrow{\boldsymbol{F}}_{\!\!2}\right]_{\!\times} &~~\left[\overrightarrow{\boldsymbol{F}}_{\!\!1}\right]_{\!\times}      &\bz
\end{array}
\end{bmatrix}_{9\times9}. 
\end{equation}
%-----------------------------
For incompressible solids with $I_3=1$, the following neo-Hookean energy function is used.
%-----------------------------
\begin{equation}\label{Wicmp}
	\widetilde{W}(I_1) = \frac{\mu}{2}(I_1-3).
\end{equation}
%------------------------------
The constitutive part of stress is $\widetilde{P}(\boldsymbol{K})=\mu(\boldsymbol{I}+\boldsymbol{K})$.
To impose the incompressibility constraint $J=1$, we use the constraint functions $C_1(J) = J-1$ or $C_2(J) = \ln J$; we choose the function that results in a better numerical performance of the method in a given example. To obtain $\mathbb{N}_h$ in (\ref{FEd}) and $\mathbb{K}_{t_h}$ in (\ref{tan_icmp}), one needs the following matrices:
%-----------------------------
\begin{align*}
\boldsymbol{Q}_1(\boldsymbol{K}) &= J\boldsymbol{F}^{-\mathsf{T}}, 
\quad
\widetilde{\boldsymbol{\mathsf{A}}}_{1}(\boldsymbol{K},p) = \mu\boldsymbol{\mathsf{I}} +p\,\boldsymbol{\mathsf{S}}(\boldsymbol{F}), \\
% ------------------------------------------------------------------------------------
\boldsymbol{Q}_2(\boldsymbol{K}) &= \boldsymbol{F}^{-\mathsf{T}}, 
\quad\quad\!\!
\widetilde{\boldsymbol{\mathsf{A}}}_{2}(\boldsymbol{K},p) = \mu\boldsymbol{\mathsf{I}}-p\bigVek{\big}{\boldsymbol{F}^{-\mathsf{T}} }\bigVek{\big}{\boldsymbol{F}^{-\mathsf{T}} }^\mathsf{T}+\frac{p}{J}\boldsymbol{\mathsf{S}}(\boldsymbol{F}).
\end{align*}
%-----------------------------

\subsection{Solvability and Stability} \label{solvability}
\begin{thm}\label{solve} Let $N^1$, $N^c$, $N^d$, and $N^\ell$ be the numbers of degrees of freedom in $\gm{q}{1}$, $\gm{q}{\mathbf{c}}$, $\gm{q}{\mathbf{d}}$, and $\gm{q}{\ell}$, respectively. For $\alpha>0$, the tangent stiffness matrix $\mathbb{K}_{t_h}$ of incompressible solids (\ref{tan_icmp}) is non-singular if and only if the following conditions hold.
\begin{enumerate}[$(i)$]
%-------------------------------------------------------------------------------------------------------------------------
\item $\operatorname{ker}(\gm{H}{\mathbf{c}\ell}) = \{\boldsymbol{0}_{N^\ell\times1}\}$,
%-------------------------------------------------------------------------------------------------------------------------
 \item $\operatorname{ker}(\gm{K}{1\mathbf{d}})\cap\operatorname{ker}(\boldsymbol{\mathsf{B}}^{\ell\mathbf{c}}_{0}\gm{K}{\mathbf{c}\mathbf{d}})= \{\boldsymbol{0}_{N^d\times1}\}$,
 \quad $\left( \operatorname{ker}(\gm{K}{\mathbf{c}\mathbf{d}}) \subseteq \operatorname{ker}(\boldsymbol{\mathsf{B}}^{\ell\mathbf{c}}_{0}\gm{K}{\mathbf{c}\mathbf{d}})\right)$,
%-------------------------------------------------------------------------------------------------------------------------
\item $\operatorname{ker}\left(\gm{H}{\mathbf{c}\mathbf{c}}+\alpha\gm{M}{\mathbf{c}\mathbf{c}}-\alpha\gm{M}{\mathbf{c}1}\left(\gm{M}{11}\right)^{-1}\gm{M}{1\mathbf{c}}\right)\cap\operatorname{ker}\left(\gm{K}{\mathbf{d}\mathbf{c}}-\gm{K}{\mathbf{d}1}\left(\gm{M}{11}\right)^{-1}\gm{M}{1\mathbf{c}}\right)\cap\operatorname{ker}\left(\gm{H}{\ell\mathbf{c}}\right)=\{\boldsymbol{0}_{N^c\times1}\}$,
%-----------------------------------------
\end{enumerate}
where $\boldsymbol{\mathsf{B}}^{\ell\mathbf{c}}_{0}$ is a matrix whose rows form a basis for $\operatorname{ker}(\gm{H}{\ell\mathbf{c}})$. For $\alpha=0$, $\mathbb{K}_{t_h}$ is non-singular if and only if $(i)$ and $(ii)$ and the following conditions hold.
\begin{enumerate}[$(i)^\prime$] \setcounter{enumi}{2}
\item $\operatorname{ker}(\gm{K}{\mathbf{d}1}) = \{\boldsymbol{0}_{N^1\times1}\}$,
%-------------------------------------------------------------------------------------------------------------------------
\item $\operatorname{ker}(\gm{H}{\mathbf{c}\mathbf{c}})\cap\operatorname{ker}\left(\boldsymbol{\mathsf{B}}^{1\mathbf{d}}_{0}\gm{K}{\mathbf{d}\mathbf{c}}\right)\cap\operatorname{ker}\left(\gm{H}{\ell\mathbf{c}}\right)=\{\boldsymbol{0}_{N^c\times1}\}$,
\end{enumerate}
where $\boldsymbol{\mathsf{B}}^{1\mathbf{d}}_{0}$ is a matrix whose rows form a basis for $\operatorname{ker}(\gm{K}{1\mathbf{d}})$.
\end{thm}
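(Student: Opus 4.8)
The plan is to exploit the symmetry of $\mathbb{K}_{t_h}$ (noted after (\ref{tan_icmp})), so that non-singularity is equivalent to triviality of the kernel, and then to peel off the four field-blocks by block Gaussian elimination, reading off one condition per block. Throughout I would use two structural facts: $\gm{M}{11}$ is symmetric positive definite on $V_{h,2}^{1}(\Gamma_{d})$ (the form $\llangle\mathbf{grad}\,\boldsymbol U,\mathbf{grad}\,\boldsymbol U\rrangle$ is a norm once $\boldsymbol U|_{\Gamma_{d}}=\bz$), and the upper-left $2\times2$ block $\alpha\begin{psmallmatrix}\gm{M}{11}&\gm{M}{1\mathbf{c}}\\ \gm{M}{\mathbf{c}1}&\gm{M}{\mathbf{c}\mathbf{c}}\end{psmallmatrix}$ is exactly the Hessian of $\tfrac{\alpha}{2}\llangle\boldsymbol K_h-\mathbf{grad}\,\boldsymbol U_h,\boldsymbol K_h-\mathbf{grad}\,\boldsymbol U_h\rrangle$, hence positive semi-definite.

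For $\alpha>0$ I would first eliminate the displacement unknowns $\gm{q}{1}$ using the invertible pivot $\alpha\gm{M}{11}$. Solving the first block-row of $\mathbb{K}_{t_h}\mathbb{Q}_h=\bz$ for $\gm{q}{1}$ and substituting yields a Schur-complement system $\boldsymbol{\mathsf{S}}$ in $(\gm{q}{\mathbf{c}},\gm{q}{\mathbf{d}},\gm{q}{\ell})$ whose $(\mathbf{c}\mathbf{c})$ block is $\mathsf{T}:=\gm{H}{\mathbf{c}\mathbf{c}}+\alpha\gm{M}{\mathbf{c}\mathbf{c}}-\alpha\gm{M}{\mathbf{c}1}(\gm{M}{11})^{-1}\gm{M}{1\mathbf{c}}$, whose $(\mathbf{c}\mathbf{d})$ block is $\mathsf{R}^{\mathsf{T}}$ with $\mathsf{R}:=\gm{K}{\mathbf{d}\mathbf{c}}-\gm{K}{\mathbf{d}1}(\gm{M}{11})^{-1}\gm{M}{1\mathbf{c}}$, whose $(\mathbf{c}\ell)$ block is $\gm{H}{\mathbf{c}\ell}$, and whose $(\mathbf{d}\mathbf{d})$ block is $-\tfrac1\alpha\mathsf{W}$ with $\mathsf{W}:=\gm{K}{\mathbf{d}1}(\gm{M}{11})^{-1}\gm{K}{1\mathbf{d}}\succeq\bz$ and $\operatorname{ker}\mathsf{W}=\operatorname{ker}(\gm{K}{1\mathbf{d}})$; note $\mathsf{T}$ and $\mathsf{R}$ are precisely the operators in condition $(iii)$. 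By congruence, $\mathbb{K}_{t_h}$ is non-singular iff $\boldsymbol{\mathsf{S}}$ is. I then analyze $\operatorname{ker}\boldsymbol{\mathsf{S}}$: the last block-row gives $\gm{q}{\mathbf{c}}\in\operatorname{ker}(\gm{H}{\ell\mathbf{c}})$, and the splitting $\mathbb{R}^{N^c}=\operatorname{ker}(\gm{H}{\ell\mathbf{c}})\oplus\operatorname{range}(\gm{H}{\mathbf{c}\ell})$ lets me project the second block-row onto $\operatorname{ker}(\gm{H}{\ell\mathbf{c}})$ (which annihilates the $\gm{H}{\mathbf{c}\ell}\gm{q}{\ell}$ coupling through $\boldsymbol{\mathsf{B}}^{\ell\mathbf{c}}_{0}$) and onto its complement (which then determines $\gm{q}{\ell}$). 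Necessity of each condition follows from an explicit kernel vector: if $(i)$ fails take $\gm{q}{\mathbf{c}}=\gm{q}{\mathbf{d}}=\bz$ and $\gm{q}{\ell}\in\operatorname{ker}(\gm{H}{\mathbf{c}\ell})\setminus\{\bz\}$; if $(ii)$ fails take $\gm{q}{\mathbf{c}}=\bz$ and $\gm{q}{\mathbf{d}}$ in the stated double kernel (using $\operatorname{ker}\mathsf{W}=\operatorname{ker}\gm{K}{1\mathbf{d}}$ and the identity $\mathsf{R}^{\mathsf{T}}\gm{q}{\mathbf{d}}=\gm{K}{\mathbf{c}\mathbf{d}}\gm{q}{\mathbf{d}}$ on $\operatorname{ker}\gm{K}{1\mathbf{d}}$); if $(iii)$ fails take $\gm{q}{\mathbf{d}}=\gm{q}{\ell}=\bz$ and $\gm{q}{\mathbf{c}}$ in the triple kernel. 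Sufficiency is the reverse: assuming $(i)$–$(iii)$, the complementary projection and injectivity of $\gm{H}{\mathbf{c}\ell}$ force $\gm{q}{\ell}=\bz$, then $(ii)$ with $\operatorname{ker}\mathsf{W}=\operatorname{ker}\gm{K}{1\mathbf{d}}$ forces $\gm{q}{\mathbf{d}}=\bz$, then $(iii)$ forces $\gm{q}{\mathbf{c}}=\bz$, and finally $\gm{q}{1}=\bz$.

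For $\alpha=0$ the pivot $\alpha\gm{M}{11}$ vanishes and the displacement couples only through $\gm{K}{1\mathbf{d}}$, so I would reorganize the elimination so that injectivity of $\gm{K}{\mathbf{d}1}$ (condition $(iii)'$) plays the role formerly played by invertibility of $\gm{M}{11}$ in recovering $\gm{q}{1}$, while the first block-row now reads $\gm{K}{1\mathbf{d}}\gm{q}{\mathbf{d}}=\bz$, i.e. $\gm{q}{\mathbf{d}}\in\operatorname{ker}(\gm{K}{1\mathbf{d}})$, so the projector $\boldsymbol{\mathsf{B}}^{1\mathbf{d}}_{0}$ onto $\operatorname{ker}(\gm{K}{1\mathbf{d}})$ takes over the coupling-killing role of $\boldsymbol{\mathsf{B}}^{\ell\mathbf{c}}_{0}$. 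Conditions $(i)$ and $(ii)$ persist verbatim and $(iv)'$ emerges as the analogue of $(iii)$; the bookkeeping is structurally identical to the $\alpha>0$ case.

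The main obstacle is that the material block $\gm{H}{\mathbf{c}\mathbf{c}}$ (hence $\mathsf{T}$) is in general indefinite at an arbitrary linearization state, so $\boldsymbol{\mathsf{S}}$ is a genuine doubly-constrained saddle system and the energy identity $(\gm{q}{\mathbf{c}})^{\mathsf{T}}\mathsf{T}\gm{q}{\mathbf{c}}+\tfrac1\alpha(\gm{q}{\mathbf{d}})^{\mathsf{T}}\mathsf{W}\gm{q}{\mathbf{d}}=0$ cannot be used to split the kernel; the equivalence must instead be obtained by the purely algebraic kernel/range bookkeeping above. The delicate point is handling the two simultaneous multiplier constraints acting on the single field $\gm{q}{\mathbf{c}}$ — the stress $\gm{q}{\mathbf{d}}$ through $\gm{K}{1\mathbf{d}},\gm{K}{\mathbf{c}\mathbf{d}}$ and the pressure $\gm{q}{\ell}$ through $\gm{H}{\mathbf{c}\ell}$ — which is exactly what forces the projected operator $\boldsymbol{\mathsf{B}}^{\ell\mathbf{c}}_{0}\gm{K}{\mathbf{c}\mathbf{d}}$ and the triple kernel intersection to appear; and the sufficiency direction, where one must ensure the indefinite $\mathsf{T}$ contributes only through its kernel on the constrained subspace, is the part requiring the most care.
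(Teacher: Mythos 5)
You take a genuinely different route from the paper. The paper permutes $\mathbb{K}_{t_h}$ into the two-field saddle form $\begin{bmatrix}\mathbb{A}_h&\mathbb{B}_h^{\mathsf{T}}\\ \mathbb{B}_h&\bz\end{bmatrix}$, invokes \citep[Theorem 3.2.1]{boffi2013mixed} (non-singularity $\Leftrightarrow$ $\mathbb{B}_h$ surjective and $\mathbb{A}_h$ injective on $\operatorname{ker}(\mathbb{B}_h)$), and then decomposes $\operatorname{ker}(\mathbb{B}_h)$ and $\operatorname{ker}(\mathbb{B}_h^{\mathsf{T}})$ into the explicit sets $S_1,S_2,S_1^{\prime},S_2^{\prime}$; you instead eliminate $\gm{q}{1}$ through the positive-definite pivot $\alpha\gm{M}{11}$ and analyze the reduced three-field matrix $\boldsymbol{\mathsf{S}}$ directly. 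Your reduction is correct (the blocks $\mathsf{T}$, $\mathsf{R}$, $-\frac{1}{\alpha}\mathsf{W}$ are exactly as you state, and the congruence preserves non-singularity), and it has the virtue of making the operators in condition $(iii)$ appear for a structural reason---they are Schur-complement blocks---rather than falling out of a kernel computation. Your necessity half is also complete and correct: the three explicit kernel vectors are valid, and your two key identities, $\mathsf{R}^{\mathsf{T}}\gm{q}{\mathbf{d}}=\gm{K}{\mathbf{c}\mathbf{d}}\gm{q}{\mathbf{d}}$ on $\operatorname{ker}(\gm{K}{1\mathbf{d}})$ and the solvability of $\gm{H}{\mathbf{c}\ell}\gm{q}{\ell}=-\gm{K}{\mathbf{c}\mathbf{d}}\gm{q}{\mathbf{d}}$ via $\operatorname{Im}(\gm{H}{\mathbf{c}\ell})=\operatorname{ker}(\gm{H}{\ell\mathbf{c}})^{\perp}$, are precisely the paper's argument that $S_2^{\prime}=\emptyset$ is equivalent to $(ii)$, transplanted to the reduced system.

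The genuine gap is your sufficiency chain, and it is not a bookkeeping detail. A kernel vector of $\boldsymbol{\mathsf{S}}$ satisfies $(a)$ $\gm{H}{\ell\mathbf{c}}\gm{q}{\mathbf{c}}=\bz$, $(b)$ $\mathsf{W}\gm{q}{\mathbf{d}}=\alpha\mathsf{R}\gm{q}{\mathbf{c}}$, and $(c)$ $\mathsf{T}\gm{q}{\mathbf{c}}+\mathsf{R}^{\mathsf{T}}\gm{q}{\mathbf{d}}+\gm{H}{\mathbf{c}\ell}\gm{q}{\ell}=\bz$. Projecting $(c)$ onto $\operatorname{Im}(\gm{H}{\mathbf{c}\ell})$ only yields $\gm{H}{\mathbf{c}\ell}\gm{q}{\ell}=-\Pi\bigl(\mathsf{T}\gm{q}{\mathbf{c}}+\mathsf{R}^{\mathsf{T}}\gm{q}{\mathbf{d}}\bigr)$, which determines $\gm{q}{\ell}$ from the other unknowns and forces nothing to vanish; likewise $(b)$ places $\gm{q}{\mathbf{d}}$ in $\operatorname{ker}(\mathsf{W})=\operatorname{ker}(\gm{K}{1\mathbf{d}})$ only after $\mathsf{R}\gm{q}{\mathbf{c}}=\bz$ is known, and no single equation places $\gm{q}{\mathbf{c}}$ in $\operatorname{ker}(\mathsf{T})\cap\operatorname{ker}(\mathsf{R})$. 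So the elimination must run in the reverse order ($\gm{q}{\mathbf{c}}$, then $\gm{q}{\mathbf{d}}$, then $\gm{q}{\ell}$), and the only way to start it is to pair $(c)$ with $\gm{q}{\mathbf{c}}$ and use $(a)$, $(b)$ to obtain $(\gm{q}{\mathbf{c}})^{\mathsf{T}}\mathsf{T}\gm{q}{\mathbf{c}}+\frac{1}{\alpha}(\gm{q}{\mathbf{d}})^{\mathsf{T}}\mathsf{W}\gm{q}{\mathbf{d}}=0$---exactly the energy identity you declared unusable, and which indeed yields $\mathsf{T}\gm{q}{\mathbf{c}}=\bz$, $\mathsf{W}\gm{q}{\mathbf{d}}=\bz$ only if $\mathsf{T}$ is positive semi-definite on $\operatorname{ker}(\gm{H}{\ell\mathbf{c}})$. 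Without that definiteness the implication you need is false as matrix algebra: take $N^c=3$, $N^d=N^\ell=1$, $\mathsf{T}=\operatorname{diag}(1,-1,1)$, $\mathsf{R}=\begin{bmatrix}1&1&0\end{bmatrix}$, $\mathsf{W}=\bz$, $\gm{H}{\mathbf{c}\ell}=\begin{bmatrix}0&0&1\end{bmatrix}^{\mathsf{T}}$ (realizable with $\gm{M}{1\mathbf{c}}=\bz$, $\gm{K}{1\mathbf{d}}=\bz$); then $(i)$--$(iii)$ all hold, yet $(1,-1,0,-1,0)^{\mathsf{T}}\in\operatorname{ker}(\boldsymbol{\mathsf{S}})$. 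Hence ``purely algebraic kernel/range bookkeeping'' from $(i)$--$(iii)$ and the block symmetry provably cannot close the sufficiency direction; some semi-definiteness of the material block on the constraint kernel (or a property of the actual FEM matrices beyond the block structure) must be invoked, and your $\alpha=0$ sketch inherits the same obstruction. To be fair, this is also the soft spot of the paper's own proof, where condition $(1)$ of \citep[Theorem 3.2.1]{boffi2013mixed} is declared equivalent to $\operatorname{ker}(\mathbb{A}_h)\cap\operatorname{ker}(\mathbb{B}_h)=\{\bz\}$; that equivalence likewise requires $\mathbb{A}_h$ to be semi-definite on $\operatorname{ker}(\mathbb{B}_h)$, which an indefinite $\gm{H}{\mathbf{c}\mathbf{c}}$ does not guarantee. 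The difference is that the paper conceals the issue behind a citation, while your attempt, by executing the elimination explicitly, runs straight into it; as written, your sufficiency step is a genuine gap.
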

% ^{^{\!\mathsf{T}}}

\begin{proof} Rearrange the rows and the columns of $\mathbb{K}_{t_h}$ to obtain
%------------------------
 \begin{equation}\nonumber
  \mathbb{K}_{t_h} =
  \begin{bmatrix}
   \mathbb{A}_h  &\mathbb{B}^\mathsf{T}_h    \\
   \mathbb{B}_h         &\bz                  
  \end{bmatrix},
  ~~~~
  \mathbb{A}_h =
  \begin{bmatrix}
   \gm{H}{\mathbf{c}\mathbf{c}}  &\bz    \\
   \bz         &\bz                  
  \end{bmatrix}+
   \alpha 
   \begin{bmatrix}
   \gm{M}{\mathbf{c}\mathbf{c}}  &\gm{M}{\mathbf{c}1}    \\
   \gm{M}{1\mathbf{c}}        	 &\gm{M}{11}              
  \end{bmatrix},
  ~~~~
  \mathbb{B}_h = 
  \begin{bmatrix}
  \gm{K}{\mathbf{d}\mathbf{c}}  &\gm{K}{\mathbf{d}1} \\
   \gm{H}{\ell\mathbf{c}}  &\bz                        
  \end{bmatrix},
   ~~~~
  \mathbb{B}^\mathsf{T}_h = 
  \begin{bmatrix}
   \gm{K}{\mathbf{c}\mathbf{d}} &\gm{H}{\mathbf{c}\ell}\\
   \gm{K}{1\mathbf{d}}          &\bz                           
  \end{bmatrix}.
\end{equation}
%------------------------
Then, according to \citep[Theorem 3.2.1]{boffi2013mixed}, the matrix $\mathbb{K}_{t_h}$ is non-singular if and only if the following holds:
\begin{enumerate}[$(1)$]
%-------------------------------------------------------------------------------------------------------------------------
\item The restriction of $\mathbb{A}_h$ to $\operatorname{ker}(\mathbb{B}_h)$ is surjective (or equivalently injective),
%-----------------------------------------
\item $\mathbb{B}_h$ is surjective (or equivalently $\mathbb{B}^\mathsf{T}_h$ is injective or $\operatorname{ker}(\mathbb{B}^\mathsf{T}_h) = \{\boldsymbol{0}\}$).
\end{enumerate}

Consider the following sets:
\begin{equation}\nonumber
\begin{aligned}
	S_1 &:= \left\{ \begin{bmatrix}  \boldsymbol{0}_{N^c\times1} \\ \boldsymbol{\mathsf{Y}}_{N^1\times1}\end{bmatrix}: 
	\boldsymbol{0}\neq\boldsymbol{\mathsf{Y}}\in\operatorname{ker}(\gm{K}{\mathbf{d}1}) \right\},\\
	%-------------------------------------------------------------------------
	S_2 &:= \left\{ \begin{bmatrix}  \boldsymbol{\mathsf{X}}_{N^c\times1}\\ \boldsymbol{\mathsf{Y}}_{N^1\times1}\end{bmatrix}:
	\boldsymbol{0}\neq\boldsymbol{\mathsf{X}}\in\operatorname{ker}(\gm{H}{\ell\mathbf{c}}) \text{ and }  \gm{K}{\mathbf{d}\mathbf{c}}\boldsymbol{\mathsf{X}}+\gm{K}{\mathbf{d}1}\boldsymbol{\mathsf{Y}}=\boldsymbol{0} \text{ for some }\boldsymbol{\mathsf{Y}}\in\mathbb{R}^{N^1} \right\},\\
	%-------------------------------------------------------------------------
	S^\prime_1 &:= \left\{ \begin{bmatrix}  \boldsymbol{0}_{N^d\times1} \\ \boldsymbol{\mathsf{Y}}_{N^\ell\times1}\end{bmatrix}: 
	\boldsymbol{0}\neq\boldsymbol{\mathsf{Y}}\in\operatorname{ker}(\gm{H}{\mathbf{c}\ell}) \right\},\\
	%-------------------------------------------------------------------------
	S^\prime_2 &:= \left\{ \begin{bmatrix}  \boldsymbol{\mathsf{X}}_{N^d\times1}\\ \boldsymbol{\mathsf{Y}}_{N^\ell\times1}\end{bmatrix}:
	 \boldsymbol{0}\neq\boldsymbol{\mathsf{X}}\in\operatorname{ker}(\gm{K}{1\mathbf{d}}) \text{ and }  \gm{K}{\mathbf{c}\mathbf{d}}\boldsymbol{\mathsf{X}}+\gm{H}{\mathbf{c}\ell}\boldsymbol{\mathsf{Y}}=\boldsymbol{0} \text{ for some }\boldsymbol{\mathsf{Y}}\in\mathbb{R}^{N^\ell} \right\}.
\end{aligned}
\end{equation}
%-----------------------
One can show that
%-----------------------
\begin{equation}\nonumber
	\operatorname{ker}\left(\mathbb{B}_h\right) = \left\{\boldsymbol{0}_{(N^c+N^1)\times1}\right\} \sqcup S_1 
	\sqcup S_2, ~~\text{and}~~~
	\operatorname{ker}\left(\mathbb{B}^\mathsf{T}_h\right) 
	= \left\{\boldsymbol{0}_{(N^d+N^\ell)\times1}\right\} \sqcup S^\prime_1 \sqcup S^\prime_2.
\end{equation}
%-----------------------
Therefore, the requirement $(2)$ ($\operatorname{ker}(\mathbb{B}^\mathsf{T}_h) = \{\boldsymbol{0}\}$) is equivalent to $S^\prime_1=S^\prime_2=\emptyset$. It is straightforward to show that $S^\prime_1=\emptyset$ is equivalent to $(i)$. We write $S^\prime_2=\emptyset$ as $\operatorname{ker}(\gm{K}{1\mathbf{d}})\cap s^{\prime}_2=\{\boldsymbol{0}\}$, where $s^{\prime}_2 = \left\{ \boldsymbol{\mathsf{X}}: \gm{K}{\mathbf{c}\mathbf{d}}\boldsymbol{\mathsf{X}}= -\gm{H}{\mathbf{c}\ell}\boldsymbol{\mathsf{Y}}\text{ for some }\boldsymbol{\mathsf{Y}}\in\mathbb{R}^{N^\ell} \right\}$. Using $\operatorname{Im}(\gm{H}{\mathbf{c}\ell})=\operatorname{ker}(\gm{H}{\ell\mathbf{c}})^{\perp}$, where the superscript $\perp$ indicates the orthogonal complement, one can readily show that $s^{\prime}_2 = \left\{ \boldsymbol{\mathsf{X}}: \boldsymbol{\mathsf{b}}_{0}^{\mathsf{T}}\gm{K}{\mathbf{c}\mathbf{d}}\boldsymbol{\mathsf{X}}= 0, \forall \boldsymbol{\mathsf{b}}_{0}\in \operatorname{ker}(\gm{H}{\ell\mathbf{c}}) \right\}$. Let $\boldsymbol{\mathsf{B}}^{\ell\mathbf{c}}_{0}$ be a matrix whose rows form a basis for $\operatorname{ker}(\gm{H}{\ell\mathbf{c}})$, then $s^{\prime}_2=\operatorname{ker}(\boldsymbol{\mathsf{B}}^{\ell\mathbf{c}}_{0}\gm{K}{\mathbf{c}\mathbf{d}})$. Therefore, $S^\prime_2=\emptyset$ is equivalent to $(ii)$. 
%---------------------------------------------------------
Next, we assume that $\alpha>0$ and show that $(1)$ is equivalent to $(iii)$. For $\alpha>0$, one can write
%---------------------------------
\begin{equation}\nonumber
\operatorname{ker}(\mathbb{A}_h)=
\left\{\begin{bmatrix}  \boldsymbol{\mathsf{X}}_{N^c\times1}\\ \boldsymbol{\mathsf{Y}}_{N^1\times1}\end{bmatrix}: \boldsymbol{\mathsf{X}}\in\operatorname{ker}\left(\gm{H}{\mathbf{c}\mathbf{c}}+\alpha\gm{M}{\mathbf{c}\mathbf{c}}-\alpha\gm{M}{\mathbf{c}1}\left(\gm{M}{11}\right)^{-1}\gm{M}{1\mathbf{c}}\right) \text{ and }  \boldsymbol{\mathsf{Y}}=-\left(\gm{M}{11}\right)^{-1}\gm{M}{1\mathbf{c}}\boldsymbol{\mathsf{X}} \right\},
\end{equation}
%---------------------------------
where use was made of the fact that $\gm{M}{11}$ is a Gram matrix and positive-definite by construction (and hence injective). Note that $(1)$ is  equivalent to $\operatorname{ker}(\mathbb{A}_h)\cap\operatorname{ker}(\mathbb{B}_h)=(\operatorname{ker}(\mathbb{A}_h)\cap\{\boldsymbol{0}\})\sqcup(\operatorname{ker}(\mathbb{A}_h) \cap S_1)\sqcup(\operatorname{ker}(\mathbb{A}_h) \cap S_2)=\{\boldsymbol{0}\}$, which is equivalent to $\operatorname{ker}(\mathbb{A}_h) \cap S_1 = \operatorname{ker}(\mathbb{A}_h) \cap S_2=\emptyset$. We know that $\mathbb{A}_h\boldsymbol{\mathsf{Q}}\neq\boldsymbol{0}$, $\forall\boldsymbol{\mathsf{Q}}\in S_1$, due to injectivity of $\gm{M}{11}$, so $\operatorname{ker}(\mathbb{A}_h) \cap S_1 =\emptyset$ is trivial. The remaining condition $\operatorname{ker}(\mathbb{A}_h) \cap S_2 =\emptyset$ simplifies to $(iii)$. For $\alpha=0$, one can write
%---------------------------------
\begin{equation}\nonumber
\operatorname{ker}(\mathbb{A}_h)=
\left\{\begin{bmatrix}  \boldsymbol{\mathsf{X}}_{N^c\times1}\\ \boldsymbol{\mathsf{Y}}_{N^1\times1}\end{bmatrix}: \boldsymbol{\mathsf{X}}\in\operatorname{ker}\left(\gm{H}{\mathbf{c}\mathbf{c}}\right) \text{ and }  \boldsymbol{\mathsf{Y}}\in\mathbb{R}^{N^1} \right\}.
\end{equation}
%---------------------------------
Now, $\operatorname{ker}(\mathbb{A}_h) \cap S_1=\emptyset$ simplifies to $(iii)^{\prime}$,  and $\operatorname{ker}(\mathbb{A}_h) \cap S_2=\emptyset$ simplifies to $(iv)^{\prime}$.
\end{proof}

\begin{cor}\label{solve1} For $\alpha>0$, the tangent stiffness matrix $\mathbb{K}_{t_h}$ of compressible solids (\ref{tan_comp}) is non-singular if and only if the following conditions hold:
\begin{enumerate}[$(i)$]
%-------------------------------------------------------------------------------------------------------------------------
 \item $\operatorname{ker}(\gm{K}{1\mathbf{d}})\cap\operatorname{ker}(\gm{K}{\mathbf{c}\mathbf{d}})= \{\boldsymbol{0}_{N^d\times1}\}$,
%-------------------------------------------------------------------------------------------------------------------------
\item $\operatorname{ker}\left(\gm{H}{\mathbf{c}\mathbf{c}}+\alpha\gm{M}{\mathbf{c}\mathbf{c}}-\alpha\gm{M}{\mathbf{c}1}\left(\gm{M}{11}\right)^{-1}\gm{M}{1\mathbf{c}}\right)\cap\operatorname{ker}\left(\gm{K}{\mathbf{d}\mathbf{c}}-\gm{K}{\mathbf{d}1}\left(\gm{M}{11}\right)^{-1}\gm{M}{1\mathbf{c}}\right)=\{\boldsymbol{0}_{N^c\times1}\}$.
%-----------------------------------------
\end{enumerate}
For $\alpha=0$, $\mathbb{K}_{t_h}$ is non-singular if and only if $(i)$ and the following conditions hold:
\begin{enumerate}[$(i)^\prime$] \setcounter{enumi}{1}
\item $\operatorname{ker}(\gm{K}{\mathbf{d}1}) = \{\boldsymbol{0}_{N^1\times1}\}$,
%-------------------------------------------------------------------------------------------------------------------------
\item $\operatorname{ker}(\gm{H}{\mathbf{c}\mathbf{c}})\cap\operatorname{ker}\left(\boldsymbol{\mathsf{B}}^{1\mathbf{d}}_{0}\gm{K}{\mathbf{d}\mathbf{c}}\right)=\{\boldsymbol{0}_{N^c\times1}\}$.
\end{enumerate}
\end{cor}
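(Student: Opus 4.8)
The plan is to mirror the proof of Theorem~\ref{solve}, specialized to the compressible setting where the pressure field $p$---and with it the blocks $\gm{H}{\mathbf{c}\ell}$ and $\gm{H}{\ell\mathbf{c}}$---is absent. First I would rearrange the rows and columns of (\ref{tan_comp}) into the saddle-point form
\begin{equation*}
\mathbb{K}_{t_h} = \begin{bmatrix} \mathbb{A}_h & \mathbb{B}^\mathsf{T}_h \\ \mathbb{B}_h & \bz \end{bmatrix},
\end{equation*}
grouping $(\gm{q}{\mathbf{c}},\gm{q}{1})$ as the primal block and the stress $\gm{q}{\mathbf{d}}$ as the single dual block, so that $\mathbb{A}_h$ is identical to the corresponding block of Theorem~\ref{solve} (with $\widehat{\boldsymbol{\mathsf{H}}}_{h}^{\mathbf{c}\mathbf{c}}$ playing the role of $\gm{H}{\mathbf{c}\mathbf{c}}$), while $\mathbb{B}_h = [\,\gm{K}{\mathbf{d}\mathbf{c}}\ \ \gm{K}{\mathbf{d}1}\,]$ now has only one row of blocks. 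Applying \citep[Theorem 3.2.1]{boffi2013mixed} exactly as in the theorem, $\mathbb{K}_{t_h}$ is non-singular if and only if $(1)$ the restriction of $\mathbb{A}_h$ to $\operatorname{ker}(\mathbb{B}_h)$ is injective and $(2)$ $\operatorname{ker}(\mathbb{B}^\mathsf{T}_h)=\{\boldsymbol{0}\}$.

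Condition $(2)$ is the quick part. Using $\gm{K}{\mathbf{d}\mathbf{c}} = (\gm{K}{\mathbf{c}\mathbf{d}})^\mathsf{T}$ and $\gm{K}{\mathbf{d}1} = (\gm{K}{1\mathbf{d}})^\mathsf{T}$, a vector $\boldsymbol{\mathsf{X}}$ lies in $\operatorname{ker}(\mathbb{B}^\mathsf{T}_h)$ iff $\gm{K}{\mathbf{c}\mathbf{d}}\boldsymbol{\mathsf{X}}=\boldsymbol{0}$ and $\gm{K}{1\mathbf{d}}\boldsymbol{\mathsf{X}}=\boldsymbol{0}$, so $(2)$ is precisely $(i)$. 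Since $\mathbb{B}_h$ carries no $\alpha$, this conclusion is identical in the $\alpha>0$ and $\alpha=0$ regimes, which is why $(i)$ appears in both parts of the statement.

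For condition $(1)$ I would reuse the theorem's decomposition $\operatorname{ker}(\mathbb{B}_h) = \{\boldsymbol{0}\} \sqcup S_1 \sqcup S_2$, where now $S_1 = \{(\boldsymbol{0},\boldsymbol{\mathsf{Y}}) : \boldsymbol{0}\neq\boldsymbol{\mathsf{Y}}\in\operatorname{ker}(\gm{K}{\mathbf{d}1})\}$ and $S_2$ collects the elements with $\boldsymbol{\mathsf{X}}\neq\boldsymbol{0}$ satisfying $\gm{K}{\mathbf{d}\mathbf{c}}\boldsymbol{\mathsf{X}}+\gm{K}{\mathbf{d}1}\boldsymbol{\mathsf{Y}}=\boldsymbol{0}$, and I would require $\operatorname{ker}(\mathbb{A}_h)\cap S_1 = \operatorname{ker}(\mathbb{A}_h)\cap S_2 = \emptyset$. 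For $\alpha>0$, eliminating the positive-definite Gram matrix $\gm{M}{11}$ by its Schur complement shows that $(\boldsymbol{\mathsf{X}},\boldsymbol{\mathsf{Y}})\in\operatorname{ker}(\mathbb{A}_h)$ forces $\boldsymbol{\mathsf{Y}}=-(\gm{M}{11})^{-1}\gm{M}{1\mathbf{c}}\boldsymbol{\mathsf{X}}$ with $\boldsymbol{\mathsf{X}}$ in the kernel of $\widehat{\boldsymbol{\mathsf{H}}}_{h}^{\mathbf{c}\mathbf{c}}+\alpha\gm{M}{\mathbf{c}\mathbf{c}}-\alpha\gm{M}{\mathbf{c}1}(\gm{M}{11})^{-1}\gm{M}{1\mathbf{c}}$; the $S_1$ intersection is empty by injectivity of $\gm{M}{11}$, and substituting $\boldsymbol{\mathsf{Y}}$ into the constraint collapses the $S_2$ intersection to $(ii)$. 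For $\alpha=0$ one has $\operatorname{ker}(\mathbb{A}_h)=\operatorname{ker}(\widehat{\boldsymbol{\mathsf{H}}}_{h}^{\mathbf{c}\mathbf{c}})\times\mathbb{R}^{N^1}$, so $\operatorname{ker}(\mathbb{A}_h)\cap S_1 = \emptyset$ reduces to $(ii)^\prime$, while $\operatorname{ker}(\mathbb{A}_h)\cap S_2=\emptyset$ demands that no nonzero $\boldsymbol{\mathsf{X}}\in\operatorname{ker}(\widehat{\boldsymbol{\mathsf{H}}}_{h}^{\mathbf{c}\mathbf{c}})$ have $\gm{K}{\mathbf{d}\mathbf{c}}\boldsymbol{\mathsf{X}}\in\operatorname{Im}(\gm{K}{\mathbf{d}1})$. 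The single delicate step---the same one as in the theorem---is the identity $\operatorname{Im}(\gm{K}{\mathbf{d}1})=\operatorname{ker}(\gm{K}{1\mathbf{d}})^{\perp}$, which lets me eliminate the existential quantifier on $\boldsymbol{\mathsf{Y}}$ and rewrite $\gm{K}{\mathbf{d}\mathbf{c}}\boldsymbol{\mathsf{X}}\in\operatorname{Im}(\gm{K}{\mathbf{d}1})$ as $\boldsymbol{\mathsf{B}}^{1\mathbf{d}}_{0}\gm{K}{\mathbf{d}\mathbf{c}}\boldsymbol{\mathsf{X}}=\boldsymbol{0}$, yielding $(iii)^\prime$. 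Everything else is bookkeeping inherited from Theorem~\ref{solve} with the pressure block deleted.
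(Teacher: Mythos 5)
Your proposal is correct and takes essentially the same approach as the paper: Corollary~\ref{solve1} is Theorem~\ref{solve} with the pressure degrees of freedom deleted, and your argument---the saddle-point rearrangement, the appeal to \citep[Theorem 3.2.1]{boffi2013mixed}, the $S_1\sqcup S_2$ decomposition of $\operatorname{ker}(\mathbb{B}_h)$, and the identity $\operatorname{Im}\big(\boldsymbol{\mathsf{K}}^{\mathbf{d}1}_{h}\big)=\operatorname{ker}\big(\boldsymbol{\mathsf{K}}^{1\mathbf{d}}_{h}\big)^{\perp}$ used to produce $\boldsymbol{\mathsf{B}}^{1\mathbf{d}}_{0}$---is exactly the specialization of the paper's proof of that theorem to the compressible tangent matrix, which is how the paper (implicitly, giving no separate proof) obtains the corollary. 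Every reduction you perform (condition $(2)$ collapsing to $(i)$ independently of $\alpha$, the Schur-complement elimination of $\boldsymbol{\mathsf{M}}^{11}_{h}$ for $\alpha>0$, and the $\alpha=0$ kernel analysis yielding $(ii)^{\prime}$ and $(iii)^{\prime}$) matches the paper's treatment of the corresponding steps in Theorem~\ref{solve}.
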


\begin{cor}\label{Solve2} If the tangent stiffness matrix $\mathbb{K}_{t_h}$ is non-singular, then
\begin{enumerate}[$(1)$]
\item $N^d\leq N^c+N^1$ for $\alpha\geq0$ and for both compressible and incompressible solids,
\item $N^\ell \leq N^c$ only for incompressible solids,
\item $N^1 \leq N^d$ only for $\alpha=0$.
\end{enumerate}
\end{cor}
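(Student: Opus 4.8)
The plan is to obtain each of the three inequalities as a dimension count forced by one of the kernel-triviality conditions already established in Theorem \ref{solve} and Corollary \ref{solve1}. The single principle I would use throughout is elementary: if a matrix $\boldsymbol{\mathsf{M}}$ with $a$ rows and $b$ columns has $\operatorname{ker}(\boldsymbol{\mathsf{M}})=\{\boldsymbol{0}\}$, then $\boldsymbol{\mathsf{M}}$ is injective, so $b=\operatorname{rank}(\boldsymbol{\mathsf{M}})\le a$. Thus for each part it suffices to recognize the relevant condition as the injectivity of a specific block matrix and to compare its number of columns with its number of rows. I would first record the block sizes once, using $\gm{K}{\mathbf{d}1}=(\gm{K}{1\mathbf{d}})^{\mathsf{T}}$ and $\gm{H}{\ell\mathbf{c}}=(\gm{H}{\mathbf{c}\ell})^{\mathsf{T}}$: the block $\gm{K}{1\mathbf{d}}$ is $N^1\times N^d$, $\gm{K}{\mathbf{c}\mathbf{d}}$ is $N^c\times N^d$, and $\gm{H}{\mathbf{c}\ell}$ is $N^c\times N^\ell$.

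For part (1) I would start from condition $(i)$ of Corollary \ref{solve1} in the compressible case and condition $(ii)$ of Theorem \ref{solve} in the incompressible case; both hold whenever $\mathbb{K}_{t_h}$ is non-singular, for every $\alpha\ge0$. In the compressible case, $\operatorname{ker}(\gm{K}{1\mathbf{d}})\cap\operatorname{ker}(\gm{K}{\mathbf{c}\mathbf{d}})$ is precisely the kernel of the block-column matrix $\begin{bmatrix}\gm{K}{1\mathbf{d}}\\ \gm{K}{\mathbf{c}\mathbf{d}}\end{bmatrix}$, which has $N^d$ columns and $N^1+N^c$ rows, so triviality of this kernel yields $N^d\le N^1+N^c$. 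In the incompressible case the second block is replaced by $\boldsymbol{\mathsf{B}}^{\ell\mathbf{c}}_{0}\gm{K}{\mathbf{c}\mathbf{d}}$, whose number of rows is $d_0:=\dim\operatorname{ker}(\gm{H}{\ell\mathbf{c}})$; since $\gm{H}{\ell\mathbf{c}}$ has $N^c$ columns we have $d_0\le N^c$, so the stacked matrix has at most $N^1+N^c$ rows and the same bound $N^d\le N^1+N^c$ follows.

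For part (2) I would invoke condition $(i)$ of Theorem \ref{solve}, namely $\operatorname{ker}(\gm{H}{\mathbf{c}\ell})=\{\boldsymbol{0}_{N^\ell\times1}\}$; since $\gm{H}{\mathbf{c}\ell}$ has $N^\ell$ columns and $N^c$ rows, injectivity forces $N^\ell\le N^c$. This bound is meaningful only for incompressible solids because the pressure unknowns $\gm{q}{\ell}$, and hence $N^\ell$ and the block $\gm{H}{\mathbf{c}\ell}$, do not appear in the compressible system (\ref{tan_comp}). For part (3) the relevant hypothesis is condition $(iii)^\prime$ of Theorem \ref{solve} (equivalently $(ii)^\prime$ of Corollary \ref{solve1}), which is required only in the $\alpha=0$ regime: $\operatorname{ker}(\gm{K}{\mathbf{d}1})=\{\boldsymbol{0}_{N^1\times1}\}$. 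As $\gm{K}{\mathbf{d}1}=(\gm{K}{1\mathbf{d}})^{\mathsf{T}}$ has $N^1$ columns and $N^d$ rows, its injectivity gives $N^1\le N^d$.

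I do not expect a genuine obstacle, as the whole argument is bookkeeping of matrix shapes. The only step that needs a moment's care is the incompressible case of part (1): one must notice that the auxiliary matrix $\boldsymbol{\mathsf{B}}^{\ell\mathbf{c}}_{0}$ contributes only $d_0\le N^c$ rows rather than $N^c$ rows, so that $N^1$ still combines with at most $N^c$ to bound $N^d$. Recording all block sizes at the outset makes each of the three inequalities immediate.
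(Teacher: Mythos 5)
Your proof is correct and follows essentially the same route as the paper's: each inequality is read off as a column-versus-row count from the injectivity of the block matrix appearing in the corresponding kernel-triviality condition of Theorem \ref{solve} or Corollary \ref{solve1}. The only cosmetic difference is in the incompressible case of part (1), where the paper invokes the inclusion $\operatorname{ker}(\gm{K}{\mathbf{c}\mathbf{d}}) \subseteq \operatorname{ker}(\boldsymbol{\mathsf{B}}^{\ell\mathbf{c}}_{0}\gm{K}{\mathbf{c}\mathbf{d}})$ to reduce Theorem \ref{solve}~$(ii)$ to the same stacked matrix $\begin{bmatrix}\gm{K}{1\mathbf{d}} \\ \gm{K}{\mathbf{c}\mathbf{d}}\end{bmatrix}$ used in the compressible case, whereas you bound the row count of $\boldsymbol{\mathsf{B}}^{\ell\mathbf{c}}_{0}\gm{K}{\mathbf{c}\mathbf{d}}$ by $\dim\operatorname{ker}(\gm{H}{\ell\mathbf{c}})\leq N^c$ directly---both yield $N^d\leq N^1+N^c$.
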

\begin{proof} Noting that $\operatorname{ker}(\gm{K}{\mathbf{c}\mathbf{d}}) \subseteq \operatorname{ker}(\boldsymbol{\mathsf{B}}^{\ell\mathbf{c}}_{0}\gm{K}{\mathbf{c}\mathbf{d}})$, both Theorem \ref{solve} $(ii)$ and Corollary \ref{solve1} $(i)$, imply $(1)$. Theorem \ref{solve} (i) implies $(2)$. Both Theorem \ref{solve} $(iii)^\prime$ and Corollary \ref{solve1} $(ii)^\prime$, imply $(3)$.
\end{proof}

\noindent In view of Theorem \ref{solve}, one can see how adding (\ref{stab}) to the weak formulation (\ref{ElasMixF_1}) may improve the stability and the performance of the resulting finite element methods. 
Without the stabilization terms ($\alpha=0$), the violation of \mbox{$\operatorname{ker}(\gm{K}{\mathbf{d}1}) = \{\boldsymbol{0}_{N^1\times1}\}$} or more strongly having $ N^1>N^d$ leads to a singular tangent stiffness matrix $\mathbb{K}_{t_h}$. This restricts the choices of finite elements for the displacement and stress in both 2D and 3D. 
In particular, considering $(\boldsymbol{U}_h,\boldsymbol{P}_h)$ in $V_{h,m}^{1}\times V_{h,n}^{\mathbf{d}-}$ or $V_{h,m}^{1}\times V_{h,n}^{\mathbf{d}}$ such that $m>n$ results in a singular $\mathbb{K}_{t_h}$ in both 2D and 3D independent of the size of the mesh. Adding the stabilization terms (\ref{stab}) ($\alpha>0$) overcomes this limitation and enables one to improve the convergence of the displacement field by discretizing it using second-order shape functions without the need for modifying the finite elements of other fields. For instance, for $\alpha=0$, the finite elements (\ref{RFEs_m}) result in a singular system, but they converge to correct solutions for large values of $\alpha$. To avoid the singularity of (\ref{RFEs_m}) for $\alpha=0$, we have no choice but to approximate the displacement field using first-order polynomials and to compromise the rate of convergence of the method. Also, note that approximating the displacement $\boldsymbol{U}$ in a second-order polynomial space leads to a more accurate discretization of $\boldsymbol{K}=\mathbf{grad}\,\boldsymbol{U}$ as the intersection of image of $\mathbf{grad}$ and the approximation space of $\boldsymbol{K}$ becomes larger at the discrete level. 

We next discuss how modifying the finite element of the displacement gradient $\boldsymbol{K}$ in (\ref{RFEs_m}) and its resulting finite element space (\ref{Vc3}) lead to solvability of the mixed finite element methods (\ref{FEh}) and (\ref{FEh_2}). 
Let $(\boldsymbol{U}_h,\boldsymbol{K}_h,\boldsymbol{P}_h) \in V_{h,m}^{1} \times V_{h,n}^{\mathbf{c}-}(V_{h,n}^{\mathbf{c}}) \times V_{h,k}^{\mathbf{d}-}(V_{h,k}^{\mathbf{d}})$ for $m,n,k = 1,2$, which results in $32$ different combinations (note that pressure is not relevant here).  
In 3D, all the $32$ combinations except $V_{h,m}^{1} \times V_{h,2}^{\mathbf{c}} \times V_{h,1}^{\mathbf{d}-}, m=1,2$ result in a singular $\mathbb{K}_{t_h}$. These combinations either give $N^d > N^c+N^1$ for any mesh or their smallest singular value of $\begin{bmatrix} \gm{K}{\mathbf{d}\mathbf{c}} &\gm{K}{\mathbf{d}1} \end{bmatrix}^\mathsf{T}$ goes to zero as one refines the mesh (see \cite[Remark 16]{FaYa2018}). Any of these two cases is a violation of Theorem \ref{solve} $(ii)$ (or Corollary \ref{solve1} $(i)$).
The two remaining choices $V_{h,m}^{1} \times V_{h,2}^{\mathbf{c}}\times V_{h,1}^{\mathbf{d}-}, m=1,2$ are not practical as they have poor performances considering their expensive computational cost. $V_{h,2}^{\mathbf{c}}$ of displacement gradient has $90$ degrees of freedom per element, which significantly increases the computational cost, but paired with the lowest-order space of stress $V_{h,1}^{\mathbf{d}-}$, it cannot improve the overall convergence of the method.
To resolve this issue, we proposed $\overline{V}_{h,3}^{\mathbf{c}}$ in (\ref{Vc3}) and considered $(\boldsymbol{U}_h,\boldsymbol{K}_h,\boldsymbol{P}_h) \in V_{h,2}^{1} \times \overline{V}_{h,3}^{\mathbf{c}} \times V_{h,1}^{\mathbf{d}-}$. Note that, in each element, $\overline{V}_{h,3}^{\mathbf{c}}$ has only $9$ degrees of freedom more than the first-order space $V_{h,1}^{\mathbf{c}}$ with $36$ degrees of freedom (see \mbox{ Figure \ref{CSFEMs3D} }). Hence, it does not increase the computational cost of the method significantly. 
Moreover, we observe that the smallest singular value of $[ \gm{K}{\mathbf{d}\mathbf{c}} ~\gm{K}{\mathbf{d}1} ]^\mathsf{T}$ for $V_{h,2}^{1} \times \overline{V}_{h,3}^{\mathbf{c}} \times V_{h,1}^{\mathbf{d}-}$ remains positive as we refine different arbitrary meshes. 

So far, we have discussed that, for $\alpha>0$, (\ref{FEh}) and (\ref{FEh_2}) do not result in a singular $\mathbb{K}_{t_h}$ even if $\operatorname{ker}(\gm{K}{\mathbf{d}1}) \neq \{\boldsymbol{0}_{N^1\times1}\}$, and they result in $N^d\leq N^c+N^1$ and $\operatorname{ker}(\gm{K}{1\mathbf{d}}) \cap \operatorname{ker}(\gm{K}{\mathbf{c}\mathbf{d}}) = \{\boldsymbol{0}_{N^d\times1}\}$, which are required for satisfying Theorem \ref{solve} $(ii)$ or Corollary \ref{solve1} $(i)$. These have been made possible through studying the linear operators $\gm{K}{\mathbf{c}\mathbf{d}}$ and $\gm{K}{1\mathbf{d}}$ in $\mathbb{K}_{t_h}$, which are independent of the physics of the problem. 
The \emph{stability} of (\ref{FEh}) requires that all the conditions of Theorem \ref{solve} hold as one refines the mesh. However, given the nonlinear nature of the problems of interest here, this is difficult to check. In particular, the nonlinear operators $\gm{H}{\mathbf{c}\mathbf{c}}(\gm{q}{\mathbf{c}},\gm{q}{\ell})$ and $\gm{H}{\mathbf{c}\ell}(\gm{q}{\mathbf{c}})$ in $\mathbb{K}_{t_h}$ depend on the material properties of the body and its state of deformation. Therefore, one cannot draw a general conclusion for stability or convergence of the mixed methods only by studying the formulations and without considering the physics of the problem. Based on the various numerical examples presented in the next section, we have concluded that (\ref{FEh}) and (\ref{FEh_2}) have an overall good performance in capturing the large deformations of incompressible and compressible solids in 3D.

% %------------------------
% %------------------------
\section{Numerical Examples}

In this section, we consider several examples to assess the performance of the mixed finite elements (\ref{FEh}) and (\ref{FEh_2}) in modeling compressible and incompressible solids in 3D. We use the Frobenius norm $\|\boldsymbol{T}\|:= (\sum\nolimits_{I,J}T^{IJ}T^{IJ})^{\frac{1}{2}}$ for $\boldsymbol{K}_h$ and $\boldsymbol{P}_h$ in the deformed configurations. We use the $L^{2}$-norm for $\boldsymbol{U}_h$, $\boldsymbol{K}_h$, $\boldsymbol{P}_h$, and $p_h$ over the entire mesh in convergence analyses. We use $\alpha=1\times 10^6$ in all the examples (the solutions actually converge for smaller values of $\alpha$ in each example; assuming larger values does not change the solutions).

% %-----------------------------
% %-----------------------------
%-------------------------------------------------------------
\newcommand{\norm}[1]{\left\lVert#1\right\rVert}

% %-----------------------------
\begin{figure}[H]
\begin{center}
\includegraphics[width = .8\textwidth]{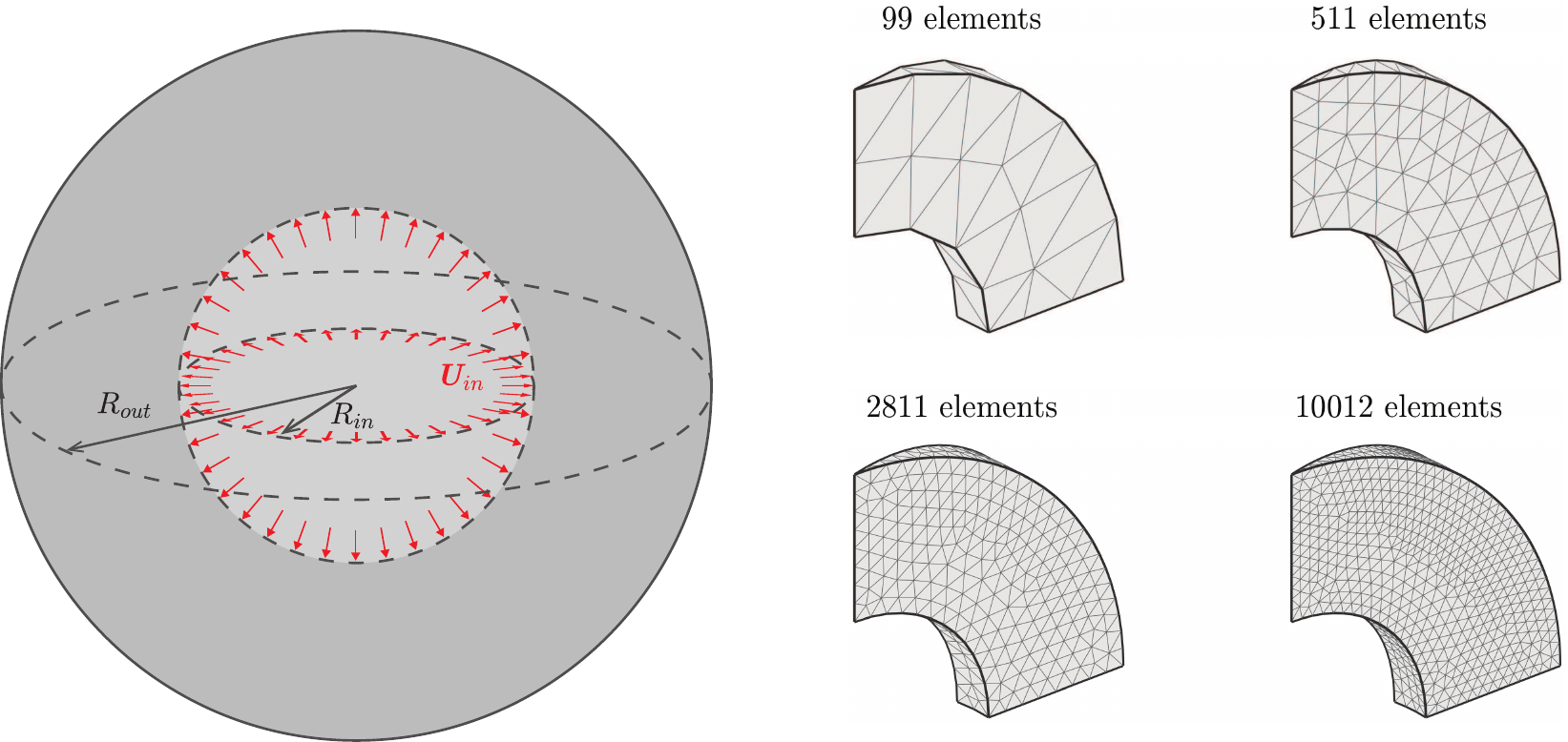}
\end{center}
\caption{\footnotesize Inflation of a hollow spherical ball: Geometry and four unstructured meshes. The outer boundary of the sphere is traction free.}
\label{EX1Mesh}
\end{figure}
% %-----------------------------

\paragraph{Example 1: Inflation of a Hollow Spherical Ball.} Let us consider an incompressible hollow spherical ball shown in Figure \ref{EX1Mesh}. We assume that the inner boundary of the ball is subjected to the displacement boundary condition $\boldsymbol{U}_{\text{in}} = (\lambda-1)\mathbf{X}$, the outer boundary is traction free, and there are no body forces.  This is an example of a universal deformation \citep{Ericksen1954} and the exact solution reads
%-----------------------------
\begin{equation}
	\boldsymbol{U}_{e}(\mathbf{X}) = \left[\frac{r(R)}{R}-1\right]\mathbf{X},~~~~~
  p_{e}(\mathbf{X}) = -\mu \frac{R_{\text{out}}^4}{r^4(R_{\text{out}})} 
  +\frac{\mu}{2}\left[g(R)-g(R_{\text{out}})\right],
\end{equation}
%-----------------------------
where $R=\|\mathbf{X}\|$, $r(R) = \left(R^3+(\lambda^3-1)R_{\text{in}}^3\right)^{\frac{1}{3}}$, and $g(R) = R\left(3r^3(R)+(\lambda^3-1)R^3_{\text{in}}\right)/r^4(R)$. It follows that $\boldsymbol{K}_e=\mathbf{grad}\,\boldsymbol{U}_{e}$, and $\boldsymbol{P}_e = \widetilde{\boldsymbol{P}}(\boldsymbol{K}_e)+p_e\boldsymbol{Q}(\boldsymbol{K}_e)$. Having the exact solution, we assess the accuracy and convergence of CSFEM given in (\ref{FEh}). For our computations, we consider the \mbox{neo-Hookean} energy function (\ref{Wicmp}) with $\mu =1~\mathrm{N}/\mathrm{mm}^{2}$, the constraint function $C(J) = J-1$, $R_{\text{in}} = 0.5\,\mathrm{mm}$, $R_{\text{out}} = 1\,\mathrm{mm}$, and $\lambda=3$. Using symmetry, we model only $1/24$ of a hemisphere as shown in Figure \ref{EX1Mesh}.
To study the convergence order of (\ref{FEh}), we plot the relative errors of the field variables versus the maximum diameter $h$ of some unstructured meshes in a log-log graph in Figure \ref{EX1Order}. The convergence order of the displacement $\boldsymbol{U}_h$ is close to $2$, and those of the displacement gradient $\boldsymbol{K}_h$, the stress $\boldsymbol{P}_h$, and the pressure-like variable $p_h$ are almost $1$.
Figure \ref{EX1ConfStress} shows the reference and the deformed configurations of the four unstructured meshes given in Figure \ref{EX1Mesh} obtained using CSFEM in (\ref{FEh}) for $\lambda=3$. Colors show the values of $\|\boldsymbol{K}_h\|$ in the first row and the values of $p_h$ in the second row with lighter colors associated with the larger values.
%------------------------------------------------------------------------------------------------------
\begin{figure}[htb]
\begin{center}
\includegraphics[width = \textwidth]{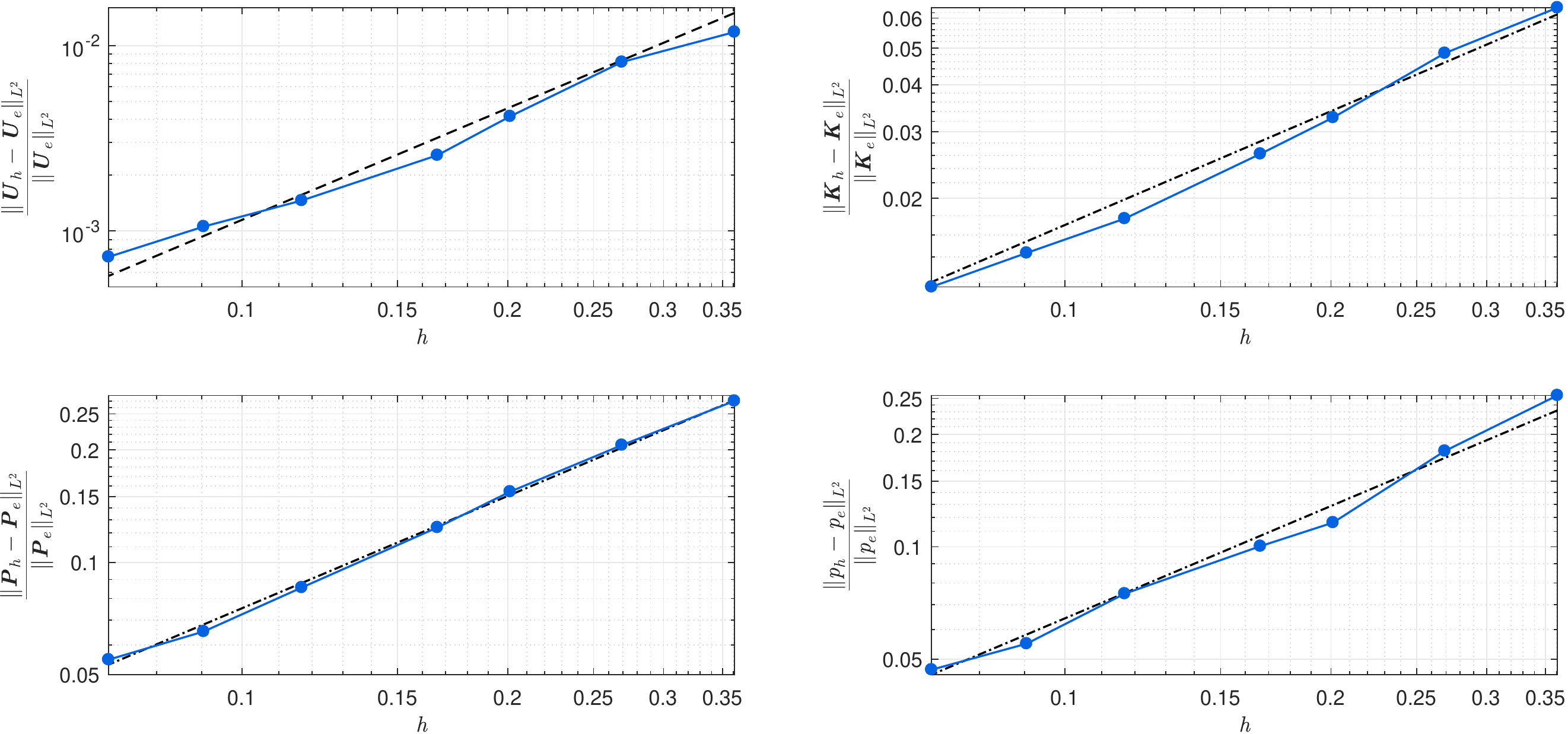}
\end{center}
\caption{\footnotesize Relative $L^{2}$-norms of errors for approximating displacement, displacement gradient, stress, and pressure versus the maximum diameter $h$ using (\ref{FEh}). The dash-dot and the dashed lines have the slopes of $1$ and $2$, respectively. }
\label{EX1Order}
\end{figure}
%------------------------------------------------------------------------------------------------------
%------------------------------------------------------------------------------------------------------
\begin{figure}[H]
\begin{center}
\vspace*{0.2in}
\includegraphics[width = \textwidth]{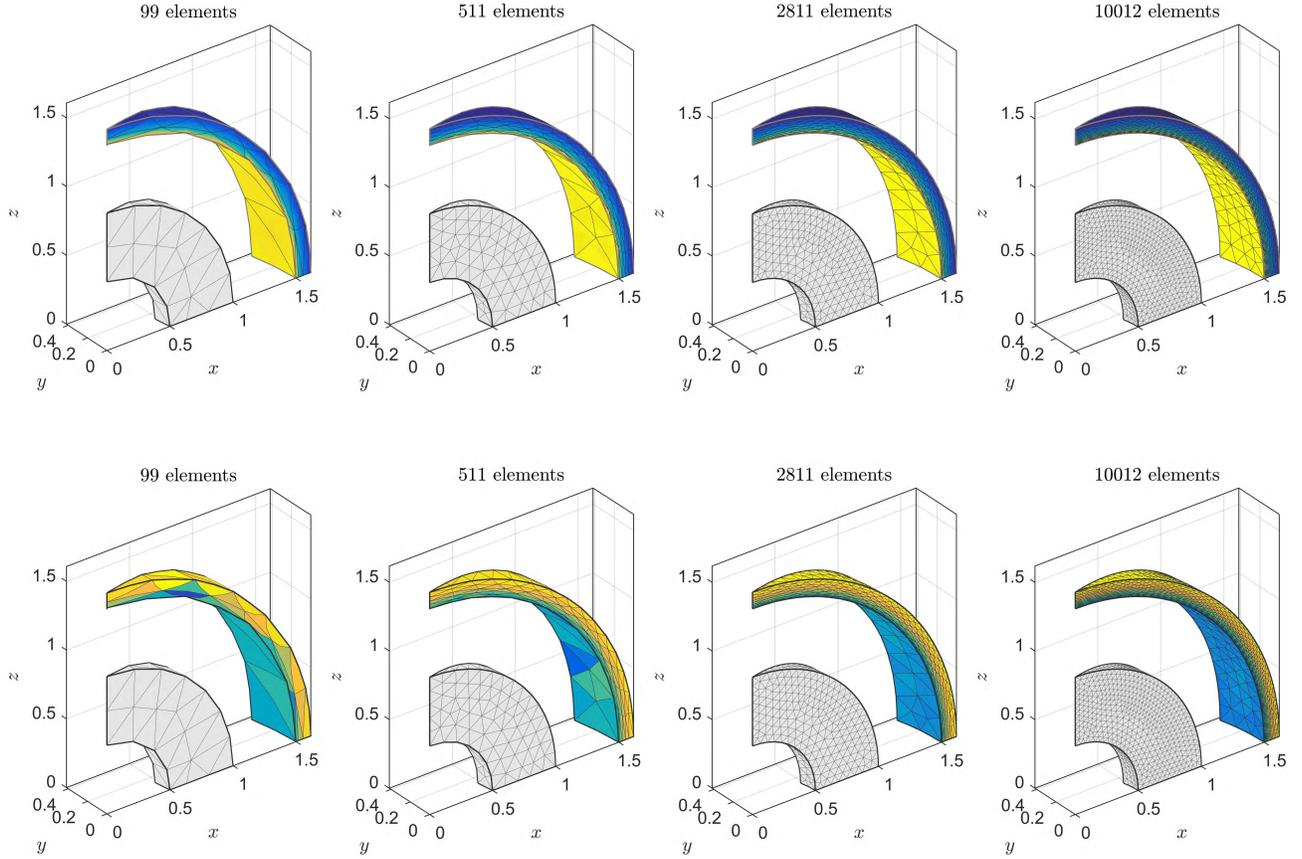} 
\end{center}
\caption{\footnotesize The reference and the deformed configurations of the sphere for  $\lambda=3$ using (\ref{FEh}). Colors indicate values of $\|\boldsymbol{K}_h\|$ in the first row and pressure $p_h$ in the second row, where lighter colors correspond to larger values.}
\label{EX1ConfStress}
\end{figure}
%------------------------------------------------------------------------------------------------------

% =========================================================================
\paragraph{Example 2: $\boldsymbol{3}$D Cook's Membrane.} 
In this example, the 3D Cook's membrane problem depicted in Figure \ref{EX2Mesh} is analyzed in order to study the performance of CSFEMs in bending analysis. We consider two cases of tractions imposed on the right side of the membrane (on $16\,\mathrm{mm}\times10\,\mathrm{mm}$ face): $\overline{\boldsymbol{T}}_1 = \left(0,f,0\right)$ and $\overline{\boldsymbol{T}}_2 = \left(0,2f,f\right)$. We use the energy function (\ref{Wicmp}) with $\mu = 1\,\mathrm{N}/\mathrm{mm}^{2}$ and $C(J) = \ln J$ to impose the incompressibility constraint.
Figure \ref{EX2Valid} shows the convergence of the vertical displacement of point $A$ indicated in Figure \ref{EX2Mesh} for different values of traction $\overline{\boldsymbol{T}}_1 = \left(0,f,0\right)$ using the mixed method (\ref{FEh}). Since the membrane deforms in two dimensions, the results of the 3D analysis using (\ref{FEh}) are compared to those obtained by a 2D analysis using $\mathsf{H}2\mathsf{c}2\mathsf{d}\bar{2}\mathsf{L}1$ in \citep{FaYa2018}. The comparison shows a good agreement between the two analyses. Considering $\overline{\boldsymbol{T}}_2 = \left(0,2f,f\right)$, the membrane deforms in three dimensions, for which a similar convergence graph for point $A$ is presented in Figure \ref{EX2Valid2}. 
The convergence of the independent field variables $(\boldsymbol{U}_h,\boldsymbol{K}_h, \boldsymbol{P}_h, p_h)$ obtained using the mixed method (\ref{FEh}) is illustrated in Figure \ref{EX2Conv} for different values of $\overline{\boldsymbol{T}}_2 = \left(0,2f,f\right)$. One observes that $\boldsymbol{U}_h$ and $\boldsymbol{K}_h$ have a faster convergence in comparison with $\boldsymbol{P}_h$ or $p_h$.
The deformed configurations of the four meshes in Figure \ref{EX2Mesh} using the mixed method (\ref{FEh}) are given in Figure \ref{EX2ConfStress1} and Figure \ref{EX2ConfStress2} for $\overline{\boldsymbol{T}}_1 = \left(0,0.3,0\right)\,\mathrm{N}/\mathrm{mm}^{2}$ and $\overline{\boldsymbol{T}}_2 = \left(0,0.2,0.1\right)\,\mathrm{N}/\mathrm{mm}^{2}$, respectively. In both figures, colors indicate the values of $\|\boldsymbol{p}_h\|$ in the first row and the values of $p_h$ in the second row with lighter colors corresponding to larger values.
It is well-known that the standard displacement-pressure mixed methods for incompressible materials approximate displacement accurately but  suffer form numerical artifacts in approximating pressure (they are unable to provide an approximation of stress either). 
By contrast, Figures \ref{EX2ConfStress1} and \ref{EX2ConfStress2} clearly show that the mixed method (\ref{FEh}) does not suffer from any numerical artifacts in approximating the stresses and the pressure in a large deformation of an incompressible solid even for relatively coarse meshes. 
%-----------------------------
\begin{figure}[H]
\vspace*{0.3in}
\begin{center}
\includegraphics[width = 0.8\textwidth]{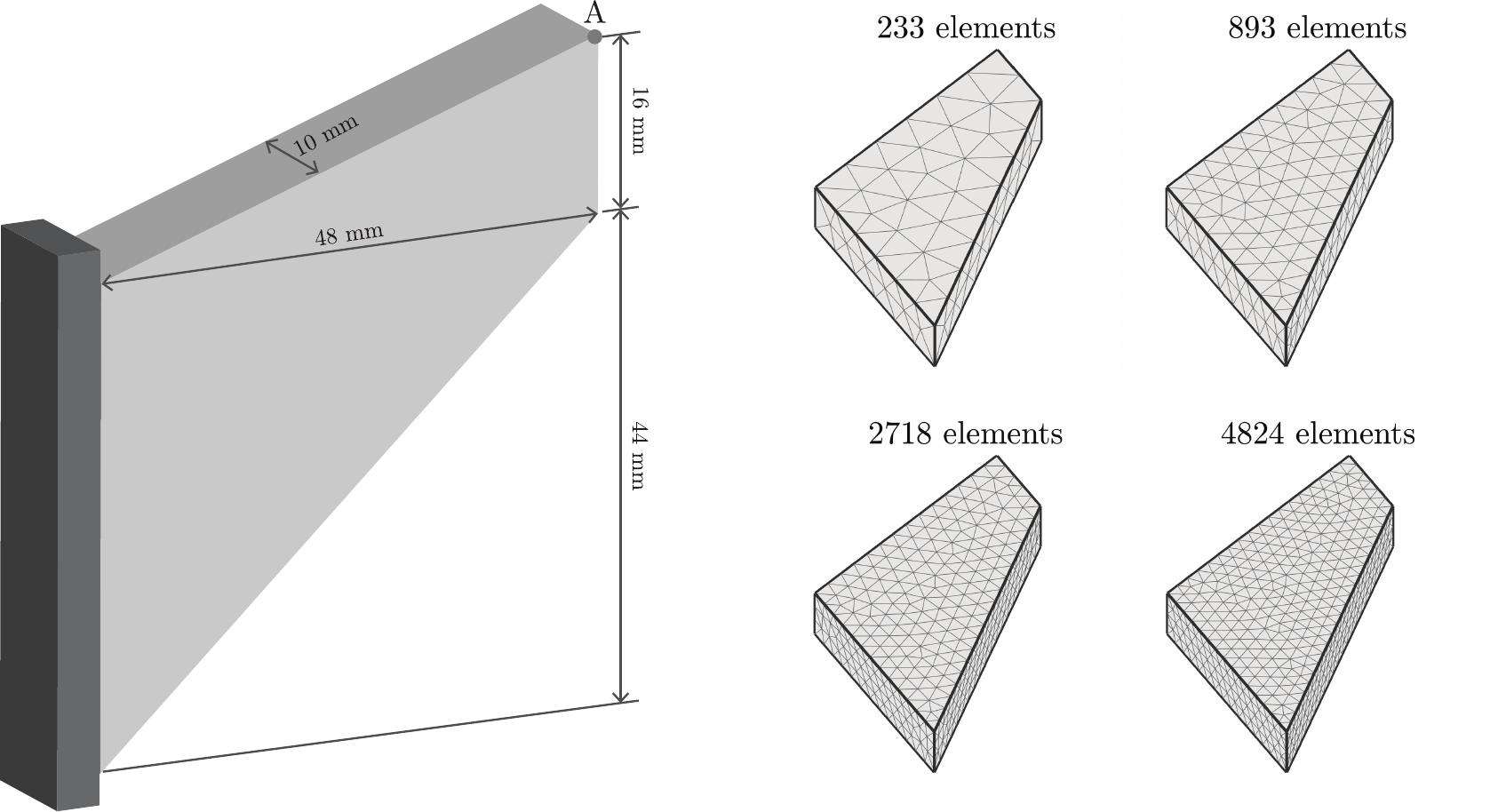}
\end{center}
\caption{\footnotesize 3D Cook's membrane: Geometry and four unstructured meshes. }
\label{EX2Mesh}
\end{figure}
%-----------------------------

%-----------------------------
\begin{figure}[H] %tbh
\begin{center}
\includegraphics[width = 0.6\textwidth]{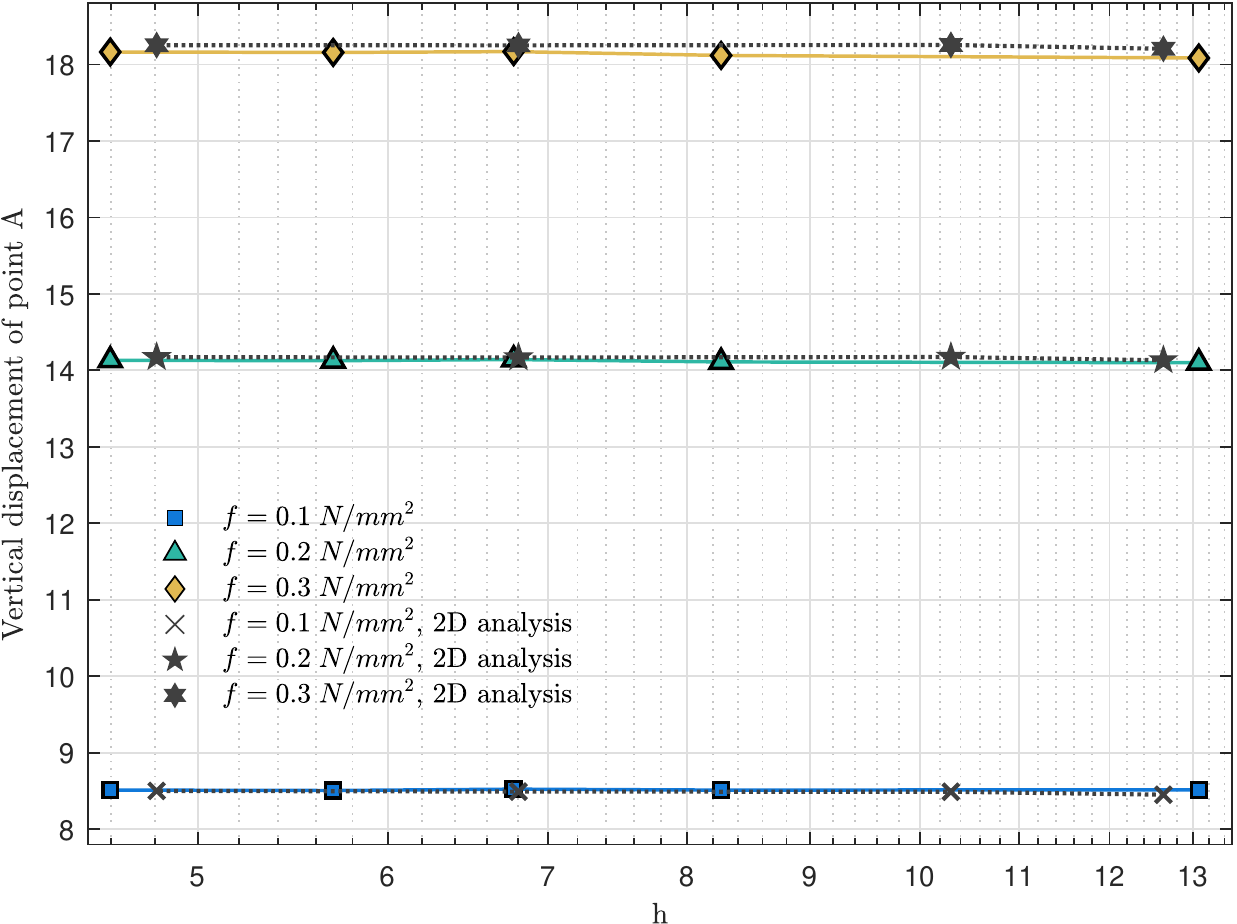}
\end{center}
\caption{\footnotesize 3D Cook's membrane: Vertical displacement of point $A$ in Figure \ref{EX2Mesh} for different values of traction $\overline{\boldsymbol{T}}_1= \left(0,f,0\right)$ versus the maximum edge length $h$ in the mesh using (\ref{FEh}). The dotted line indicates the results of $\mathsf{H}2\mathsf{c}2\mathsf{d}\bar{2}\mathsf{L}1$ given in \citep{FaYa2018}. } 
\label{EX2Valid}
\end{figure}
% %-----------------------------

% %-----------------------------
\begin{figure}[H] %tbh
\begin{center}
\vspace*{0.2in}
\includegraphics[width = 0.5\textwidth]{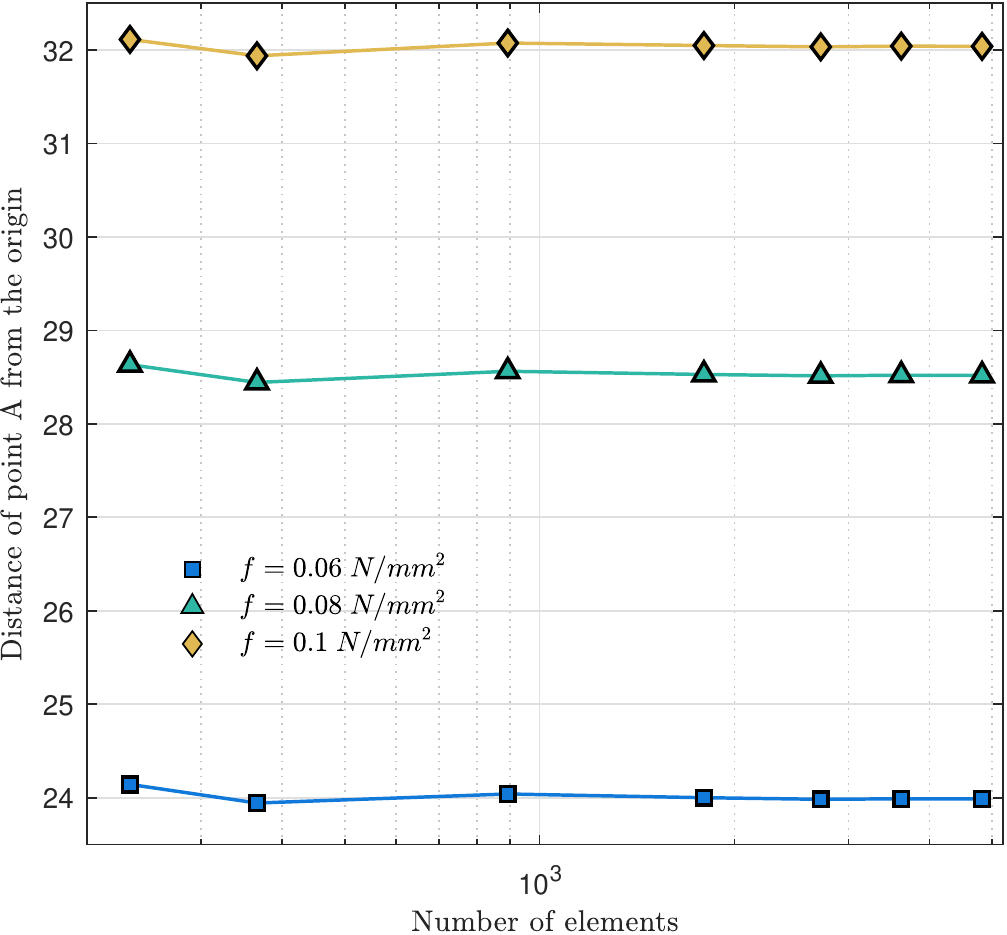}
\end{center}
\caption{\footnotesize 3D Cook's membrane: Distance of point $A$ from the origin in Figure \ref{EX2Mesh} for different values of traction $\overline{\boldsymbol{T}}_2 = \left(0,2f,f\right)$ versus the number of elements in the mesh using (\ref{FEh}). } 
\label{EX2Valid2}
\end{figure}
% %-----------------------------

% %-----------------------------
\begin{figure}[H]
\begin{center}
 \includegraphics[width = \textwidth]{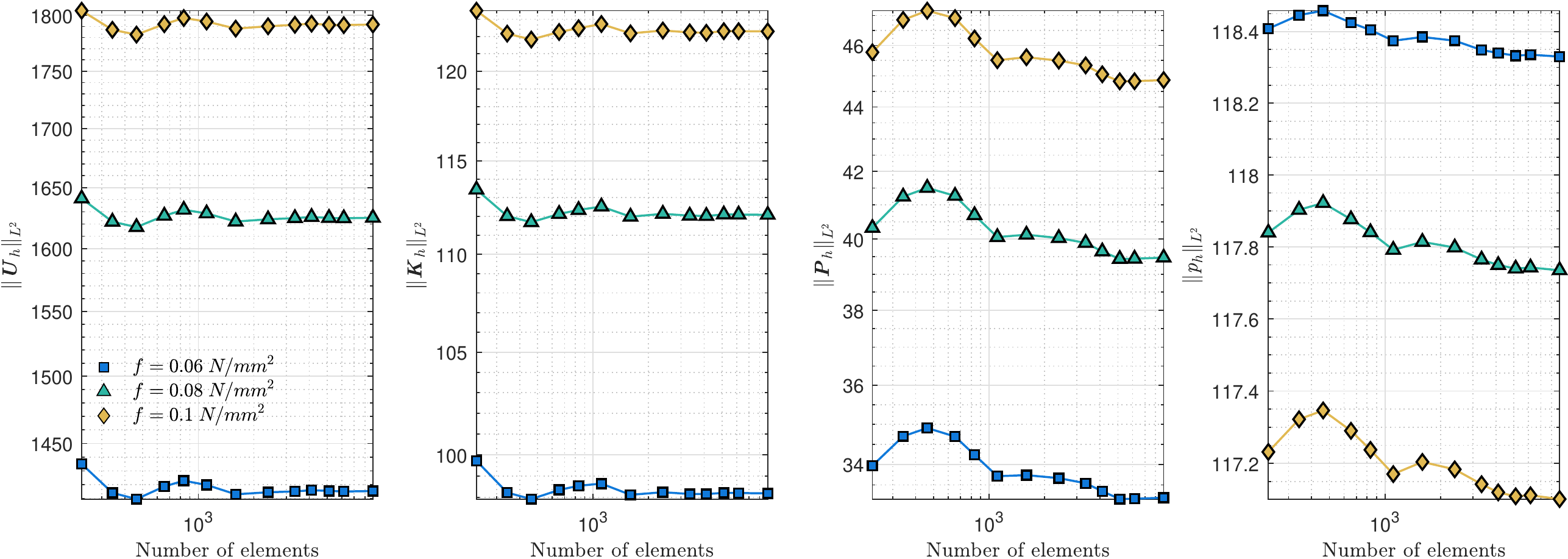}
\end{center}
\caption{\footnotesize 3D Cook's membrane: $L^{2}$-norms of displacement, displacement gradient, stress, and pressure versus the number of elements in the mesh for different values of traction $\overline{\boldsymbol{T}}_2 = \left(0,2f,f\right)$ using (\ref{FEh}).} 
\label{EX2Conv}
\end{figure}
% %-----------------------------

% %-----------------------------
\begin{figure}[H]
\begin{center}
\vspace*{0.2in}
\includegraphics[width = \textwidth]{Fig10.pdf} 
\end{center}
\caption{\footnotesize The deformed configurations of 3D Cook's membrane for traction $\overline{\boldsymbol{T}}_2 = (0,0.3,0)\,\mathrm{N}/\mathrm{mm}^{2}$ using (\ref{FEh}). Colors indicate values of $\|\boldsymbol{P}_h\|$ in the first row and pressure $p_h$ in the second row, where lighter colors correspond to larger values.}
\label{EX2ConfStress1}
\end{figure}
% %-----------------------------

% %-----------------------------
\begin{figure}[H]
\begin{center}
\includegraphics[width = \textwidth]{Fig11.pdf} 
\end{center}
\caption{\footnotesize The deformed configurations of 3D Cook's membrane for traction $\overline{\boldsymbol{T}}_2 = (0,0.2,0.1)\,\mathrm{N}/\mathrm{mm}^{2}$ using (\ref{FEh}). Colors indicate values of $\|\boldsymbol{P}_h\|$ in the first row and pressure $p_h$ in the second row, where lighter colors correspond to larger values.}
\label{EX2ConfStress2}
\end{figure}
% %-----------------------------

% =========================================================================
\paragraph{Example 3. Compression of a Near-Incompressible Block.}
Let us consider a block under compression as shown in Figure \ref{EX3Mesh}. The length and the width of the block are $2\,\mathrm{mm}$ and its hight is $1\,\mathrm{mm}$. The loading square surface on the upper face of the block has an edge of $1\,\mathrm{mm}$ and is subjected to a traction $\overline{\boldsymbol{T}}=(0,0,f)$. The vertical (horizontal) displacement at the bottom (top) of the block is zero. As shown in Figure \ref{EX3Mesh}, using symmetry the meshes are generated for only a quarter of the block.
% %-----------------------------
\begin{figure}[H]
\vspace*{0.2in}
\begin{center}
\includegraphics[width = 1\textwidth]{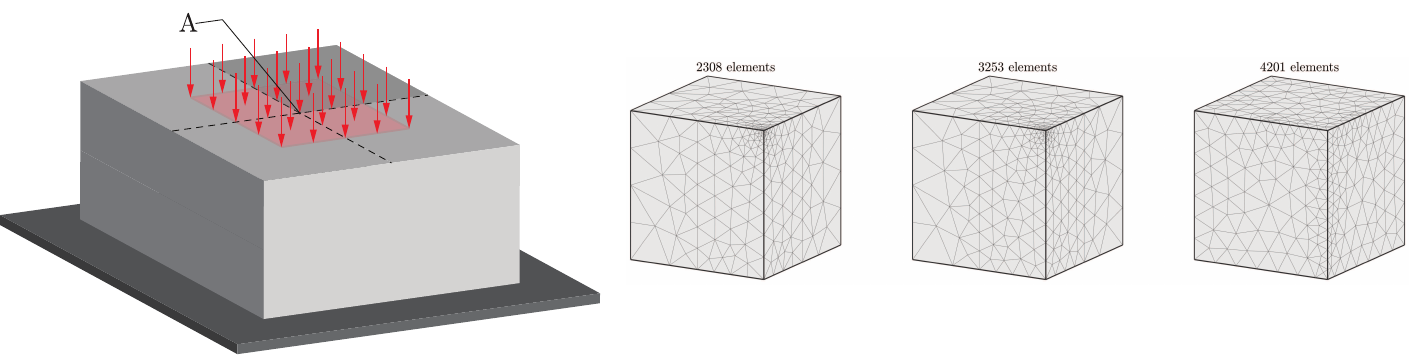}
\end{center}
\caption{\footnotesize Compression of a near-incompressible block: Geometry and three unstructured meshes.The length and the width of the block are $2\,\mathrm{mm}$ and its hight is $1\,\mathrm{mm}$. The loading square surface on the top has an edge of $1\,\mathrm{mm}$. }
\label{EX3Mesh}
\end{figure}
% %-----------------------------

In this example we test the performance of the mixed method (\ref{FEh_2}) in the near-incompressible regime.
Note that many of the existing finite element methods are unable to solve this problem or suffer from numerical artifacts. \citet{reese2000new} developed a reduced-integration stabilized brick element and used it to solve this problem.
To compare our numerical results to those of \citep{reese2000new}, we consider the energy function (\ref{Wcomp}) with $\lambda = 400889.806\,\mathrm{N}/\mathrm{mm}^{2}$ and $\mu = 80.194\,\mathrm{N}/\mathrm{mm}^{2}$. 
Figure \ref{EX3Valid} illustrates the convergence of the vertical displacement of point $A$ (see Figure \ref{EX3Mesh}) for different values of $\overline{\boldsymbol{T}}=(0,0,f)$. The results obtained using (\ref{FEh_2}) agree with those reported by \citet{reese2000new}. Figure \ref{EX3ConfStrain} depicts the deformed configuration of the block for $\overline{\boldsymbol{T}}=(0,0,320)~\mathrm{N}/\mathrm{mm}^{2}$. Colors show the values of $\|\boldsymbol{K}_h\|$, where lighter colors are assigned to larger values. 
% %-----------------------------
\begin{figure}[H] %tbh
\vspace*{0.3in}
\begin{center}
\includegraphics[width = 0.5\textwidth]{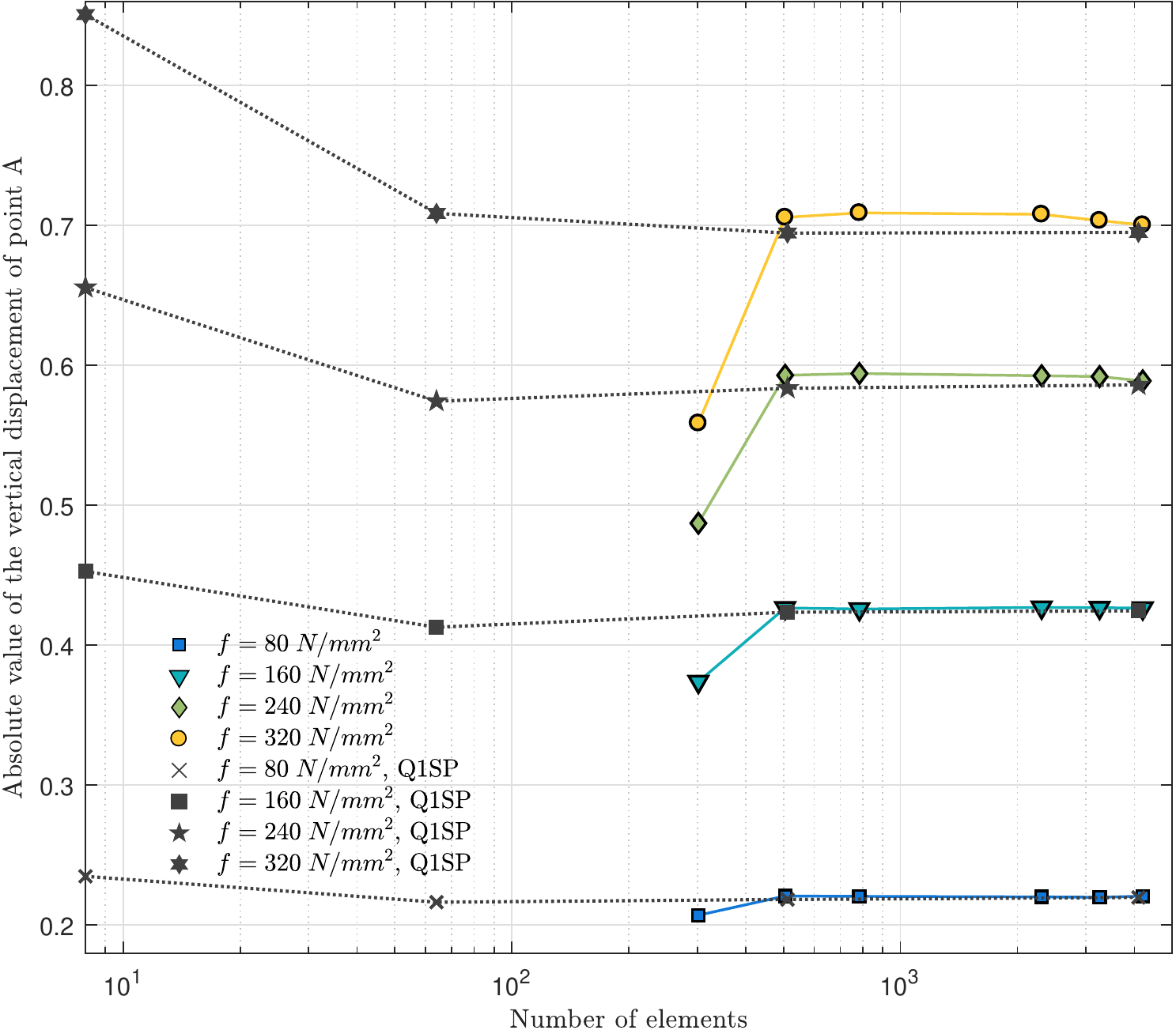}
\end{center}
\caption{\footnotesize  Compression of a near-incompressible block: Absolute value of the vertical displacement of point $A$ in Figure \ref{EX3Mesh} for different values of traction $\overline{\boldsymbol{T}}=(0,0,f)$ versus the number of elements using (\ref{FEh_2}). Q1SP indicates the results obtained by a reduced-integration stabilized brick element given in \citep{reese2000new}.}
\label{EX3Valid}{}
\end{figure}
% %-----------------------------

% %-----------------------------
\begin{figure}[H]
\begin{center}
\vspace*{0.2in}
\includegraphics[width = \textwidth]{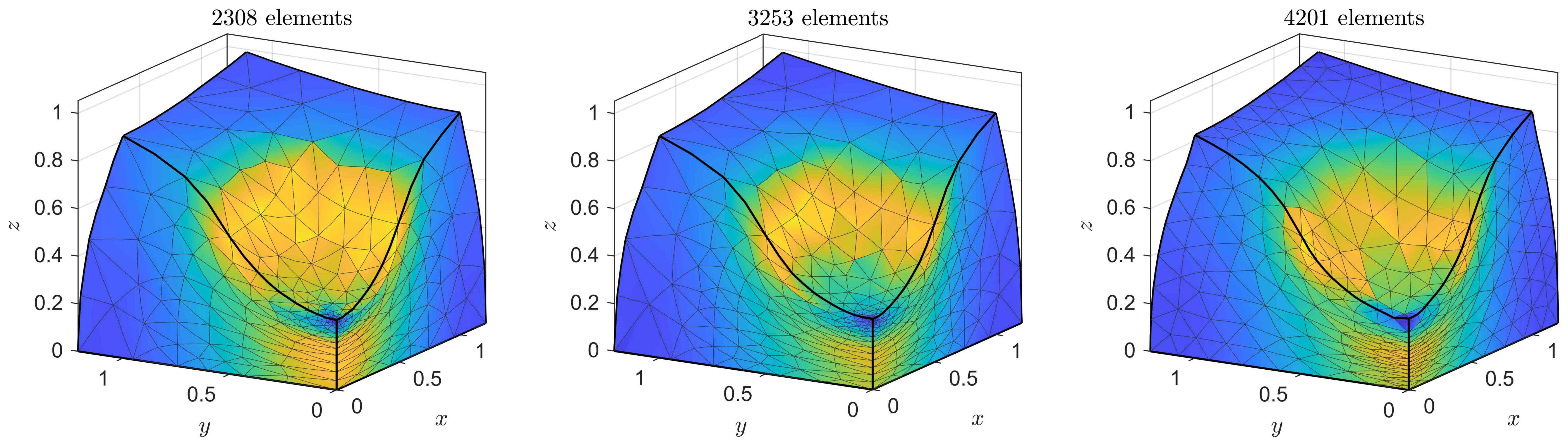} 
\end{center}
\caption{\footnotesize The deformed configurations of the near-incompressible block for traction $\overline{\boldsymbol{T}}=(0,0,320)\,\mathrm{N}/\mathrm{mm}^{2}$ using (\ref{FEh_2}). Colors indicate values of $\|\boldsymbol{K}_h\|$ with lighter colors correspond to larger values.}
\label{EX3ConfStrain}
\end{figure}
% %-----------------------------

% =========================================================================
\paragraph{Example 4. Stretching a Heterogeneous Block.}
As was mentioned in Remark \ref{feature}, CSFEMs (\ref{FEh}) and (\ref{FEh_2}), by construction, satisfy the Hadamard jump condition and the continuity of traction on all the internal faces in a given mesh. This provides an efficient framework to model heterogeneous solids provided that the constituent materials do not slide at their interfaces, i.e., the displacement field is continuous at the material interfaces. One can generate a 3D mesh such that some of the internal faces of the mesh closely approximate the given material interfaces and assign the material model of each inhomogeneity to its corresponding region in the mesh. Using CSFEMs guarantees that the necessary kinematic and kinetic conditions are automatically satisfied at the material interfaces.

We consider an incompressible cubic block of edge $1\,\mathrm{mm}$ with a spherical inhomogeneity of diameter $0.5\,\mathrm{mm}$ at its center as shown in \ref{EX4Mesh}. The bottom of the block at $Z=-0.5\,\mathrm{mm}$ and the top of the block at $Z=0.5\,\mathrm{mm}$ are subjected to displacement boundaries $(0,0,-0.5)\,\mathrm{mm}$ and $(0,0,0.5)\,\mathrm{mm}$, respectively (stretch = $2$), and the other four faces are traction free. Using symmetry, we model only $1/8$ of the block as shown in Figure \ref{EX4Mesh}. 
The energy function (\ref{Wicmp}) is considered for the block with $\mu = 1\,\mathrm{N}/\mathrm{mm}^{2}$ for the matrix, and $\mu = \bar{\mu}$ for the spherical inhomogeneity. $C(J) = J-1$ is used for imposing the incompressibility constraint. We study four different cases: (i) a homogeneous block with $\bar{\mu} = 1\,\mathrm{N}/\mathrm{mm}^{2}$, (ii) a very soft inhomogeneity with $\bar{\mu} = 1e-5\,\mathrm{N}/\mathrm{mm}^{2}$, (iii) a reinforced block with $\bar{\mu} = 4\,\mathrm{N}/\mathrm{mm}^{2}$, and (iv) a rigid inhomogeneity with $\bar{\mu} = 1e5\,\mathrm{N}/\mathrm{mm}^{2}$.
Figure \ref{EX4Conv} illustrates the convergence of the $L^2$-norm of the field variables $(\boldsymbol{U}_h,\boldsymbol{K}_h, \boldsymbol{P}_h, p_h)$ calculated in the matrix for all the four cases (the values of $p_h$ become disproportionately large in the inhomogeneity for case (iv)). One can see that a significant change in the material properties of the inhomogeneity only slightly changes the convergence of the method.

Figure \ref{EX2Conf} shows the deformed configurations of $1/8$ of the block for all the four cases for a mesh consisting of $5450$ elements. This corresponds to the last points on the convergence graphs given in Figure \ref{EX4Conv}. Colors indicate the values of $\|\boldsymbol{K}_h\|$, $\|\boldsymbol{p}_h\|$, and $p_h$ in the first, second, and third row, respectively, where the lighter colors are associated with larger values. As expected, the values of $\|\boldsymbol{K}_h\|$ ($\|\boldsymbol{P}_h\|$) in the inhomogeneity decrease (increase) as the inhomogeneity becomes stiffer.  In contrast to case (i), one can see a discontinuous change of color from the matrix to the inhomogeneity in cases (ii)-(iv). As expected, the values of $\boldsymbol{K}_h$, $\boldsymbol{P}_h$, and $p_h$ are continuous at the interface of the two regions in case (i) (homogeneous block) but they are discontinuous in cases (ii)-(iv) (heterogeneous blocks). Nevertheless, in all the four cases, the interface conditions are satisfied, i.e., $\boldsymbol{K}_h\boldsymbol{T}$ and $\boldsymbol{P}_h\boldsymbol{N}$ are continuous at the interface of the two regions, where $\boldsymbol{T}$ and $\boldsymbol{N}$ are respectively a tangent vector field and a normal vector field on the interface.
For case (ii), one observes that $\|\boldsymbol{p}_h\|$ is almost uniformly zero in the spherical inhomogeneity. Hence, the traction field on the interface of the two regions is zero as well, which must be the case as a very soft inhomogeneity behaves like a hole. We solved another example by considering a block with the same geometry and the same boundary conditions but with an actual hole. It was observed that the $L^2$-norm of all the four field variables are equal to those calculated in the matrix for the case (ii). 

 %-----------------------------
\begin{figure}[H]
\vspace*{0.3in}
\begin{center}
\includegraphics[width = 0.5\textwidth]{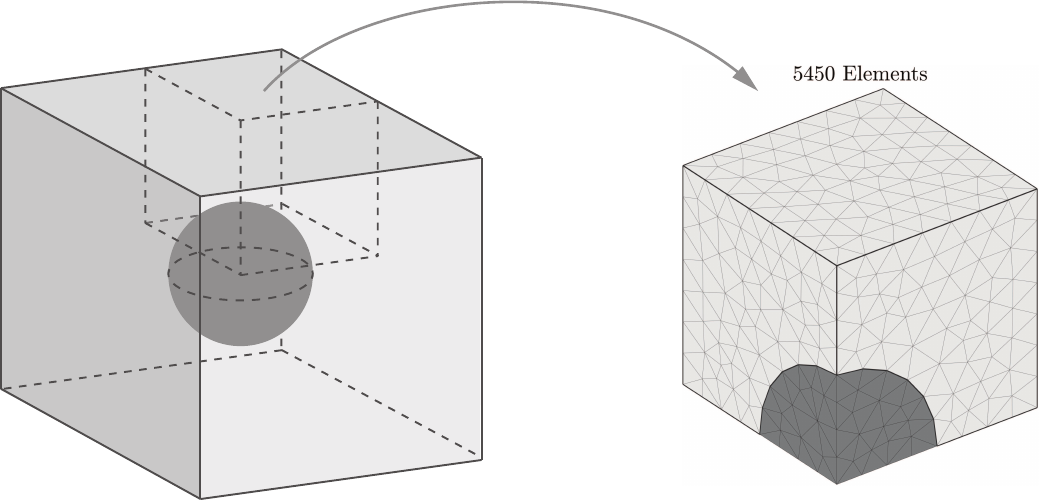}
\end{center}
\caption{\footnotesize Stretching a heterogeneous block: Geometry and an unstructured mesh. The block has an edge of $1\,\mathrm{mm}$ and the sphere at the center has a diameter of $0.5\,\mathrm{mm}$. The bottom and the top faces of the block are subjected to equal and opposite vertical displacements resulting in stretch of the block. The other four faces are traction free.}
\label{EX4Mesh} 
\end{figure}
% %-----------------------------

% %-----------------------------
\begin{figure}[H]
\vspace*{0.2in}
\begin{center}
\includegraphics[width = \textwidth]{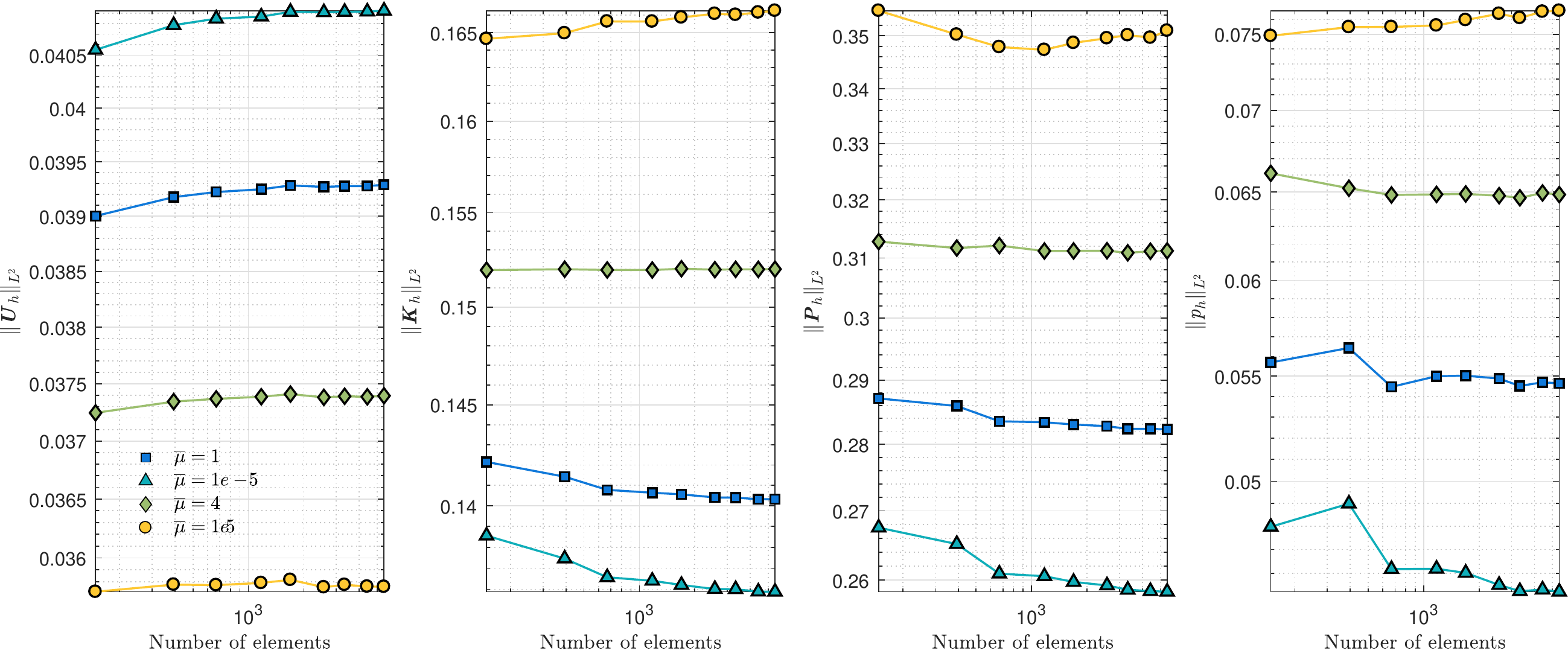} 
\end{center}
\caption{\footnotesize Stretching a heterogeneous block: $L^{2}$-norms of displacement, displacement gradient, and stress versus the number of elements in the mesh using (\ref{FEh}). The shear modulus of the incompressible matrix is $\mu = 1\, ~\mathrm{N}/\mathrm{mm}^{2}$ and $\overline{\mu}$ stands for the shear modulus of the incompressible spherical inhomogeneity.}
\label{EX4Conv}
\end{figure}
% %-----------------------------

% %-----------------------------
\begin{figure}[H]
\begin{center}
\includegraphics[width = \textwidth]{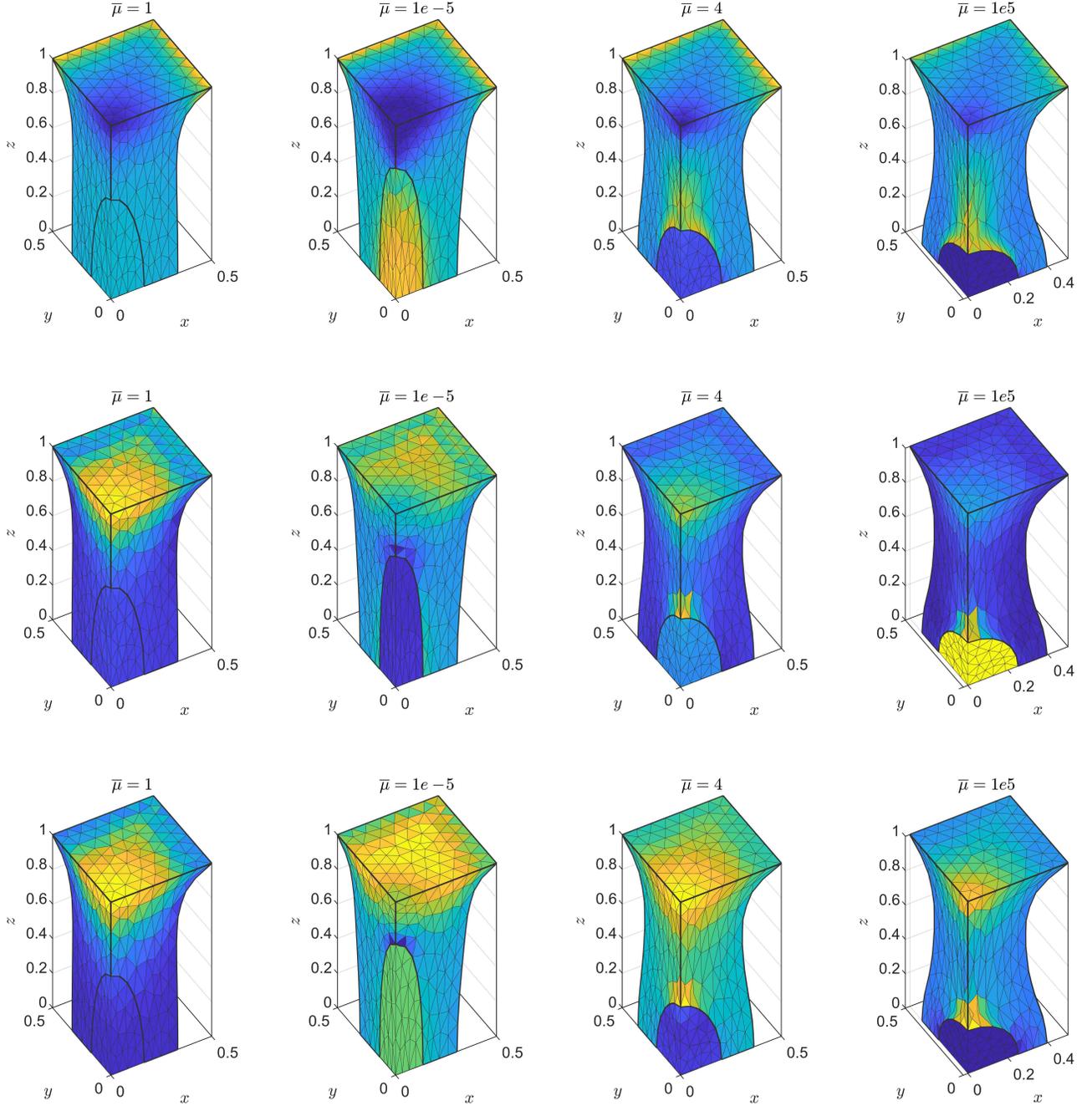} 
\end{center}
\caption{\footnotesize The deformed configurations of a block with a spherical inhomogeneity for $\lambda =2$ and considering different spherical inhomogeneities using (\ref{FEh}). The shear modulus of the incompressible matrix is $\mu = 1\, ~\mathrm{N}/\mathrm{mm}^{2}$ and $\overline{\mu}$ stands for the shear modulus of the incompressible spherical inhomogeneity in each column. Colors indicate values of  $\|\boldsymbol{K}_h\|$ in the first row, $\|\boldsymbol{P}_h\|$ in the second row, and pressure $p_h$ in the third row, where lighter colors correspond to larger values.}
\label{EX2Conf}
\end{figure}
% %-----------------------------

% =========================================================================
\paragraph{Example 5: Stretching a Block with Randomly Distributed Holes.}
Next, we assess the performance of CSFEM for very large strains in a complex geometry. Let us consider an incompressible cubic block of edge $1\,\mathrm{mm}$ with $6$ spherical holes as shown in Figure \ref{EX5}. The coordinates of the centers of the holes are $(0.25, 0.6, 0.6)$, $(0.7, 0.5, 0.3)$, $(0.6, 0.2, 0.7)$, $(0.2, 0.2, 0.2)$, $(0.3, 0.8, 0.2)$, $(0.8, 0.75, 0.7)$ and their diameters are respectively $0.4$,  $0.4$,  $0.3$,  $0.3$,  $0.3$,  $0.3$. 
The left face of the block is fixed, the right face is subjected to a uniform displacement boundary $(u,0,0)$, and the other four faces are traction free. 
We use the energy function (\ref{Wicmp}) with $\mu=1\,\mathrm{N}/\mathrm{mm}^{2}$, and $C(J)=J-1$ to impose the incompressibility constraint.
The reference and the deformed configurations of the block obtained using (\ref{FEh}) for $u=2\,\mathrm{mm}$ are shown in Figure \ref{EX5}. The mesh consists of $11756$ elements and colors indicate the values of $\|\boldsymbol{K}_h\|$ with lighter colors corresponding to larger values. Note that this result corresponds to the last points on the convergence graphs given in Figure \ref{EX5Conv}. One can see that all the holes are stretched severely along the $x$-axis. Hence, relative to the $x$-axis, the beginning and the end portions of the boundary of each hole have the lower values of $\|\boldsymbol{K}_h\|$ while the middle portion has the larger values of $\|\boldsymbol{K}_h\|$.
Figure \ref{EX5Conv} illustrates the convergence of (\ref{FEh})  for different values of the displacement boundary condition $(u,0,0)$ imposed on the right face of the block. For all values of $u$, one observes that CSFEM given in (\ref{FEh}) has good convergence considering all the four independent variables $(\boldsymbol{U}_h,\boldsymbol{K}_h, \boldsymbol{P}_h, p_h)$.

\begin{figure}[H] %tbh
\begin{center}
\vspace*{0.3in}
\includegraphics[width = 1.\textwidth]{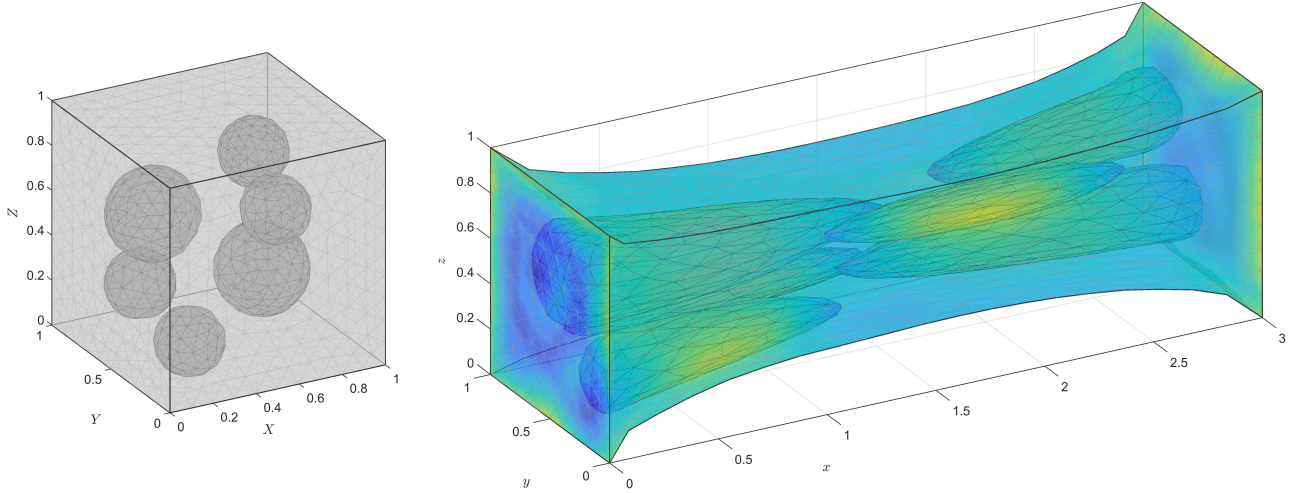}
\end{center}
\caption{\footnotesize The reference (left) and the deformed (right) configurations of a block with randomly distributed holes.
The left face of the block is fixed, the right face is subjected to a displacement $(2,0,0)\,\mathrm{mm}$ ($stretch=3$), and the other four faces are traction free. The mesh consists of $11756$ elements and the deformed configuration is obtained using (\ref{FEh}). Colors indicate values of $\|\boldsymbol{K}_h\|$, where lighter colors correspond to larger values.} 
\label{EX5}
\end{figure}
% %-----------------------------
% %-----------------------------

% %-----------------------------
\begin{figure}[H]
\begin{center}
\vspace*{0.3in}
 \includegraphics[width = \textwidth]{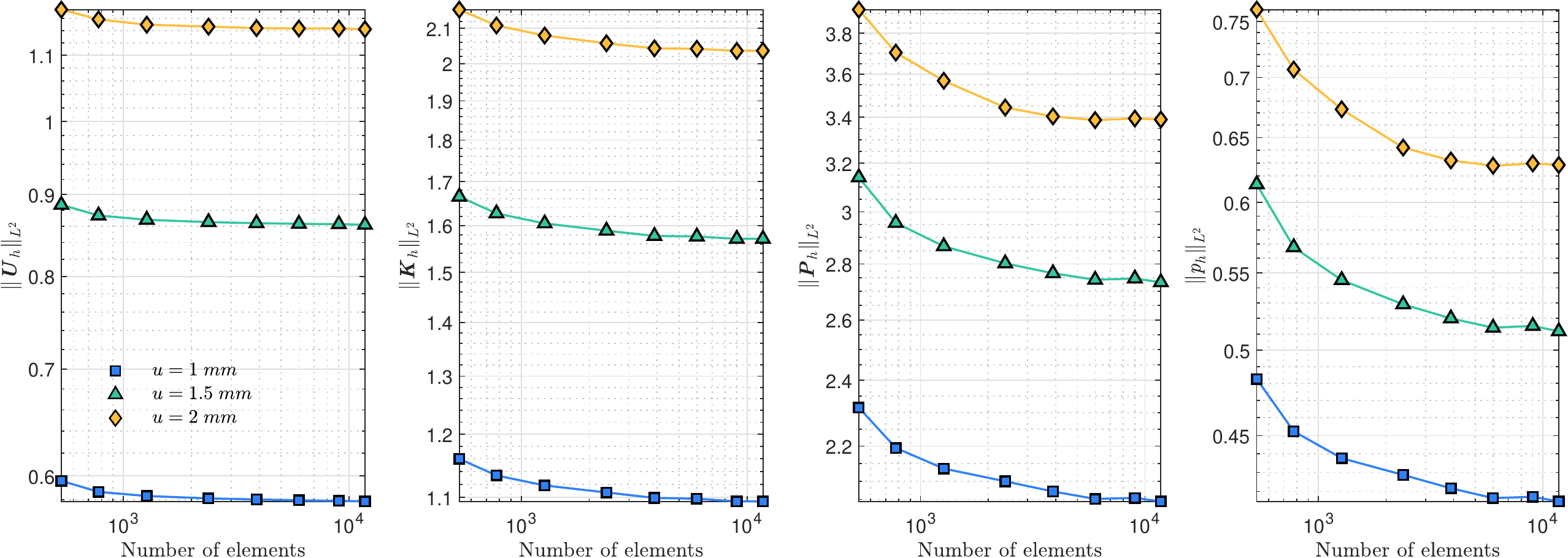}
\end{center}
\caption{\footnotesize Stretching a block with randomly distributed holes: $L^{2}$-norms of displacement, displacement gradient, stress, and pressure versus the number of elements in the mesh for different values of the displacement boundary $(u,0,0)$ using (\ref{FEh}).} 
\label{EX5Conv}
\end{figure}
% %-----------------------------

%-----------------------------
%-----------------------------
\section{Concluding Remarks}

A new mixed finite element method for 3D compressible and incompressible nonlinear elasticity was introduced. This work is an extension of \citep{AnFSYa2017} and \citep{FaYa2018} to three-dimensional nonlinear elasticity problems. 
We proposed a new four-field mixed formulation for incompressible nonlinear elasticity in terms of the displacement $\boldsymbol{U}$, the displacement gradient $\boldsymbol{K}$, the first Piola-Kirchhoff stress $\boldsymbol{P}$, and a pressure-like field $p$. By setting $p=0$ in this formulation, one can readily obtain a three-filed mixed formulation for compressible solids. In the present formulation it is assumed that $(\boldsymbol{U},\boldsymbol{K},\boldsymbol{P},p)\in H^{1}(T\mathcal{B})\times H^{\mathbf{c}}(\mathcal{B}) \times H^{\mathbf{d}}(\mathcal{B})\times L^{2}(\mathcal{B})$. The new formulation has some additional terms compared with those used for 2D finite elements in \citep{FaYa2018} that vanish for the exact solutions. Provided with a proper discretization, the extra terms improve the stability of the resulting mixed finite element methods without compromising consistency. 
To obtain the mixed finite element methods, first four conforming finite element spaces were defined and then were used for approximating the four field variables. The discrete fields of the CSDEMs are: $\boldsymbol{U}_h \in V_{h,2}^{1} \subset H^{1}(T\mathcal{B}_h)$, $\boldsymbol{K}_h \in \overline{V}_{h,3}^{\mathbf{c}}\subset H^{\mathbf{c}}(\mathcal{B}_h)$, $\boldsymbol{P}_h \in V_{h,1}^{\mathbf{d}-} \subset H^{\mathbf{d}}(\mathcal{B}_h)$, and $p_h \in V_{h,0}^{\ell}\subset L^2(\mathcal{B}_h)$. 
The discrete spaces $V_{h,2}^{1}$, $V_{h,1}^{\mathbf{d}-}$, and $V_{h,0}^{\ell}$ are constructed using the second-order Lagrange elements, the first-order N\'{e}d\'{e}lec $1^{\mathrm{st}}$-kind \emph{face} elements, and the piecewise constant elements, respectively.
The discrete space $\overline{V}_{h,3}^{\mathbf{c}}$ is constructed using the first-order N\'{e}d\'{e}lec $2^{\mathrm{nd}}$-kind \emph{edge} elements and is enriched by  volume-based third-order shape functions of N\'{e}d\'{e}lec $1^{\mathrm{st}}$-kind \emph{edge} elements. 
Due to interelement continuities of these conforming spaces, our proposed mixed methods by construction provide a continuous approximation of the displacement field and satisfy both the Hadamard jump condition and the continuity of traction at the discrete level.
We solved several 3D numerical examples using CSFEMs. Our observations indicate that CSFEMs have a robust performance for bending, tension, and compression problems, and in the near-incompressible and the incompressible regimes. They are also capable of modeling problems with very large strains and accurately approximating stresses. Moreover, they seem to be free from numerical artifacts such as checkerboarding of pressure, hourglass instability, and locking.

\paragraph{Acknowledgments.} This research was supported by AFOSR -- Grant No. FA9550-12-1-0290.

\bibliographystyle{unsrtnat}
\bibliography{Ref3D}

\end{document}